\documentclass[a4paper,11pt]{article}
\usepackage{jheppub} 
\usepackage{appendix}
\usepackage{graphicx}
\usepackage{physics}
\usepackage{tensor}
\usepackage{todonotes}
\usepackage{amsthm}
\usepackage{amsmath}

\usepackage{lmodern}
\usepackage{eso-pic}
\usepackage{transparent}
\usepackage{graphicx}

\newtheorem{lemma}{Lemma}

\newtheorem{proposition}{Proposition}

\theoremstyle{definition}

\theoremstyle{plain}

\title{\boldmath As Cold as a Black Hole: Extended Photon Spheres}

\author{Marcos Riojas}
\affiliation{Center for the Fundamental Laws of Nature, Harvard University, Cambridge, MA, USA}
\emailAdd{marcos\_riojas@fas.harvard.edu}

\abstract{It is widely believed that self-gravitating radiation cannot reach thermal equilibrium with a black hole in asymptotically flat spacetime. We describe a marginally stable continuum exception to the standard instability argument. Our main observation is that the photon sphere controls the Israel junction conditions (IJCs), the Tolman-Oppenheimer-Volkoff (TOV) equation, and finite-radius black hole thermodynamics: the IJCs and TOV equation are equivalent at zero radial pressure, the inverse specific heat at the photon sphere is proportional to $-\Lambda$, and adding shells at fixed total mass \textit{lowers} the asymptotic Hawking temperature if and only if the local specific heat is positive. Using these results, we show how to compute thermodynamic entropies without the Euclidean path integral.

The exception described here results from companion work with M.J. Strassler, where we found that a ``hillingar black hole" (HBH) mimics an  ordinary Schwarzschild black hole of mass $M$, sharing its Hawking temperature, photon ring, and, in equilibrium, its coarse-grained entropy $S = 4 \pi M^2$. Here, we show these features are not tuned; they follow uniquely from joint mechanical and thermodynamic constraints. Conditions for thermodynamic mimicry and for the formation of extended photon spheres are found; for self-similar solutions they coincide. A one-parameter family of self-similar systems -- all of which, excepting the HBH, require massless walls at the edges of their extended photon spheres -- satisfies both conditions. This family includes ``stiffest stars" and ``frozen stars". }

\begin{document}
\maketitle
\flushbottom

\section{Introduction}

Black holes are thermodynamic systems \cite{Bekenstein:1972tm,Bekenstein:1973ur,Bekenstein:1974ax,Hawking:1974rv,Hawking:1975vcx} because asymptotic time stops violently at the horizon, but for macroscopic black holes this halting is gentle in Planck units. Hence astrophysical black holes are among the coldest objects in the universe, much colder than the cosmic microwave background, with Bekenstein-Hawking entropy $S = A/(4 \hbar G)$ dwarfing that of ordinary matter of the same mass. That raises the question we consider in this note: can a black hole be surrounded by matter in stable thermal and mechanical equilibrium, and if so, what configuration can match its temperature and entropy? 

A sharp obstruction has been known since the 1980s. First, black holes in asymptotically flat spacetime cannot reach stable thermal equilibrium with infinity because their asymptotic specific heat is negative \cite{Hawking:1976de}. In fact, asymptotically flat spacetime at finite temperature admits a Euclidean Schwarzschild saddle with a negative mode \cite{PhysRevD.25.330}. Addressing this, York \cite{York:1986it} showed black holes in a finite cavity at fixed temperature have two branches of solutions. If the photon sphere is outside the cavity, the specific heat is positive and the system is stable. These branches merge at the minimum temperature because the cavity wall coincides with the photon sphere; see Fig.~\ref{fig:swallowtailYork}. Simply put, the first constraint is thermal: a stable cavity must be supported within the photon sphere of the black hole.

Mechanical stability is a separate matter. The Israel junction conditions \cite{Israel:1966rt} determine the location of a shell outside a black hole \cite{Frauendiener_1990}. Their stability under radial perturbations was worked out by Brady, Louko, and Poisson (BLP) \cite{Brady:1991np}. They found that unless we allow for superluminal sound waves, the equation of state must satisfy $\sigma \ge 2p$, where $\sigma$ is the surface energy density and $p$ is the surface pressure. In brief, the second constraint is that massive shells must sit outside the photon sphere; see Fig.~\ref{fig:BLP}. In conclusion then \cite{Brady:1991np}, joint thermodynamic and mechanical stability appears impossible;  they credit Israel for asking whether a spherical box extending above and below the photon sphere might work.

The following thought experiment suggests a resolution. Consider an observer hovering just outside the photon sphere of a black hole; there the black hole appears as an infinite plane (Fig.~\ref{fig:thoughtexperiment}). Moving outward slightly, the horizon looms as an enormous sphere with positive Gaussian curvature. However, after an infinitesimally thin and mechanically stable layer of radiation with $\sigma \simeq 2p$ is added just outside the photon sphere, its backreaction displaces the photon sphere slightly outward, with the horizon again appearing as an infinite plane. This is in essence a \textit{hillingar}\footnote{ Hillingar is an Old Norse term describing the upheaving of images above the horizon by an arctic mirage. When light rays curve at the same rate as Earth's surface, the ocean's surface appears flat and the visible horizon is extended outward. The ``ocean" of the HBH is an exact  gravitational analogue.} mirage, seen in arctic conditions where light bends toward and travels along the frigid ocean's surface, extending the visible horizon outward. 

There is no obvious reason why this should work. Nonetheless, we will show it does; iterating this produces what we refer to here as a \textit{hillingar} black hole (HBH),
whose special properties were identified and studied with M.J. Strassler
\cite{Riojas:2026ytt,Riojas:2026mfu}. The metric appears earlier: it is that of a Florides star \cite{1974RSPSA.337..529F}, obtained with a horizon in \cite{DiFilippo:2024poc}; see also \cite{Maeda:2024tsg}. Its stability is marginal -- each layer sits at the boundary of the stability region of \cite{Brady:1991np}, see Fig.~\ref{fig:IJCTOV} -- and its interpretation as being in thermodynamic equilibrium is discussed further below. The radiation is \textit{anisotropic}, transverse to the horizon with zero radial pressure, and its \textit{extended} photon sphere is an ``ocean" precisely as cold as the black hole. See Fig.~\ref{fig:HBH}.

This is an exception to the obstruction for the following reason. The stability regions of York \cite{York:1986it} and BLP \cite{Brady:1991np} meet at one point, the photon sphere, where a discrete shell can sit only when massless; a shell of finite mass must choose a side, and either side is \textit{unstable} for a different reason. Each layer has an \textit{infinitesimal} mass that moves the photon sphere outward; the next layer sits at the photon sphere as well, evading the constraints in \cite{Brady:1991np}. 

The HBH is not only an exception to the constraints. It also has the same temperature, coarse-grained\footnote{By coarse-grained entropy we mean the thermodynamic entropy, not the fine-grained entanglement entropy of a particular state.} entropy, and optical signature as a Schwarzschild black hole, as originally shown in \cite{Riojas:2026ytt,Riojas:2026mfu}; here we obtain this result using alternative methods.  Its status as a black hole mimic \cite{Cardoso:2019rvt,Bambi:2025wjx}, including its Schwarzschild temperature, shadow, and coarse-grained entropy, is not tuned; these are consequences of the constraints. The HBH is the continuum realization of the stability-selected point -- an edge-case in \cite{Brady:1991np} -- in Fig.~\ref{fig:IJCTOV}.

\begin{figure}
    \centering
    \includegraphics[width=\linewidth]{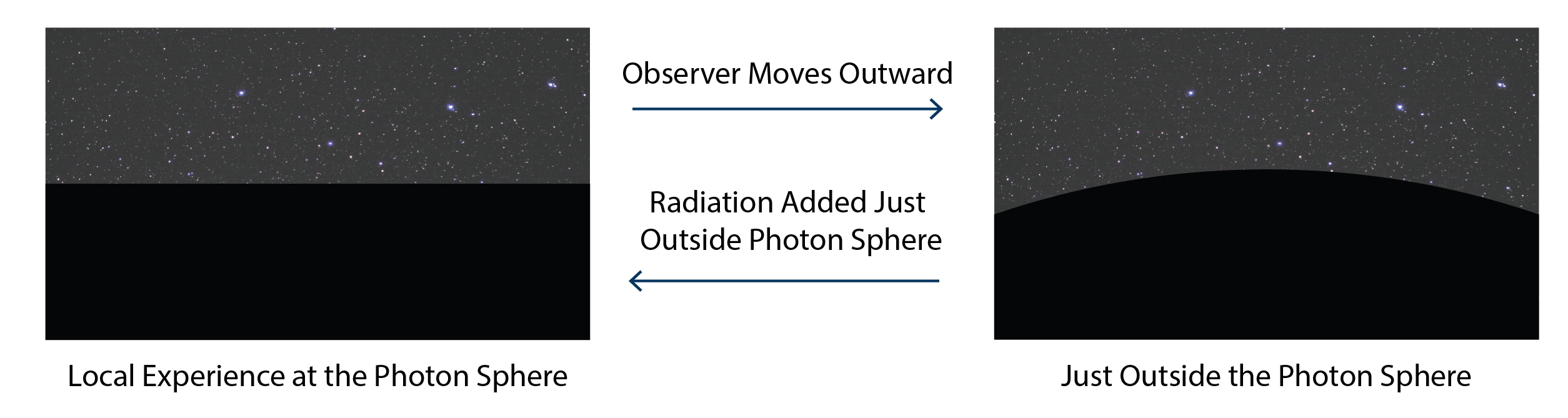}
    \caption{A procedure for constructing candidate equilibrium configurations. Massless particles orbit the black hole at the photon sphere \cite{Bardeen:1972fi}, so the shadow of the horizon appears as an infinite plane. The obstruction described in the introduction \cite{York:1986it,Brady:1991np} is evaded by placing successive infinitesimal layers of self-gravitating radiation with $\sigma \simeq 2p$ just outside the photon sphere. Moving outward causes the horizon to appear as an enormous sphere, as in the right panel; adding an infinitesimal layer moves the photon sphere outward, restoring the left panel. Iterating this procedure produces an extended region throughout which the shadow of the horizon appears as an infinite plane. } 
    \label{fig:thoughtexperiment}
\end{figure}

An outline of this note follows. The obstruction of \cite{York:1986it,Brady:1991np} is reviewed in Sec.~\ref{sec:obstruction}, as well as Sorkin, Wald, and Zhang's (SWZ) \cite{Sorkin:1981wd} analysis of the Tolman-Oppenheimer-Volkoff (TOV) \cite{Tolman:1939jz,Oppenheimer:1939ne} equation and the thermodynamics of self-gravitating radiation. The time-dependent Israel junction conditions (IJCs) for spherically symmetric spacetimes are derived in Sec.~\ref{sec:IJC}; the photon sphere is closely linked to the dynamics of Israel layers, with consequences for (e.g.) Randall-Sundrum (RS) branes \cite{Randall:1999ee,Randall:1999vf}. BLP's \cite{Brady:1991np} analysis is extended to general spherically symmetric spacetimes in Sec.~\ref{sec:Stability}, before specializing to anti-de Sitter spacetime. In Sec.~\ref{sec:SpecificHeatPS} we show the general result that the inverse specific heat at the photon sphere in $d$ dimensions is proportional to $-\Lambda_{\mathrm{CC}}$, with consequences for what follows.

In Sec.~\ref{sec:ExtendedPhotonSpheres} we extend the analysis of SWZ to anisotropic matter and show that in the limit of vanishing radial pressure, solutions to the TOV equation \cite{Oppenheimer:1939ne,Tolman:1939jz,Bowers:1974tgi} organize into discrete Israel layers.\footnote{Results for the entropy of self-gravitating radiation were obtained using different methods by Kim and Lee \cite{Kim:2019ygw}, see \cite{Riojas:2026mfu} for additional discussion. They deduced the scaling behavior for the entropy density of anisotropic matter with constant coefficients $w_i\equiv p_i/\rho$ using an entropy maximization method.} For traceless matter with $\sigma = 2p$ these layers are infinitesimal and continuous, forming an extended photon sphere at the point where the regions of mechanical and thermodynamical stability overlap in \cite{Brady:1991np}. In Sec.~\ref{sec:ascoldasablackhole} we show that the HBH is the coldest such configuration. For any finite number of mechanically stable shells surrounding a black hole, the asymptotic temperature strictly exceeds the Hawking temperature of a black hole of the same total mass; equality is reached in the continuum limit of infinitesimal traceless layers, which is the HBH. This explains the title of this note. We show in Sec.~\ref{sec:heatcapandtemp} that, at fixed total mass, adding a shell around a black hole decreases the asymptotic temperature of the system if and only if the specific heat at the shell's location is \textit{positive}.

Black hole thermodynamics at finite distance are obtained directly from the IJCs in Sec.~\ref{sec:EntropyandShells}. In this method, which does not need the Euclidean path integral, the temperature of the system is tracked while building it up layer-by-layer. In thermal equilibrium, we show (independently of the original argument in \cite{Riojas:2026mfu}) that $S = 4 \pi M^2$ for the HBH. 

In Sec.~\ref{sec:beyondnested} we generalize these arguments to show that extended photon spheres form when a certain condition (Eq.~\eqref{eq:extendedphotonspherecondition}) holds. This relation takes the same form as the constraint in Eq.~\eqref{eq:temperaturemimicry} that permits thermodynamic mimicry. Consequently, a one-parameter family of self-similar gases with this property has the temperature and coarse-grained entropy of a black hole. The frozen star \cite{Brustein:2018web,Brustein:2021lnr,Brustein:2023hic}, the stiffest star \cite{Zeldovich:1961sbr,Banks:2002fj}, and the HBH are members of this family; if the dominant energy condition (DEC) is satisfied, then thermodynamic mimicry occurs only for the HBH. The HBH is also an optical mimic \cite{Riojas:2026ytt}.

We also find that if thermal equilibrium is maintained, then in anti-de Sitter space large black holes can become colder and increase their entropy by forming stable shells. This observation was originally made for the HBH in \cite{Riojas:2026mfu}; here we give a slightly more general perspective. This work is concluded with a brief discussion.\footnote{While contemporary artificial intelligence tools were used for literature review, they are not responsible for the ideas and results in this paper.}

\subsection{Hillingar Black Holes and Related Work}\label{sec:HBHintro}

\begin{figure}
    \centering
    \includegraphics[width=\linewidth]{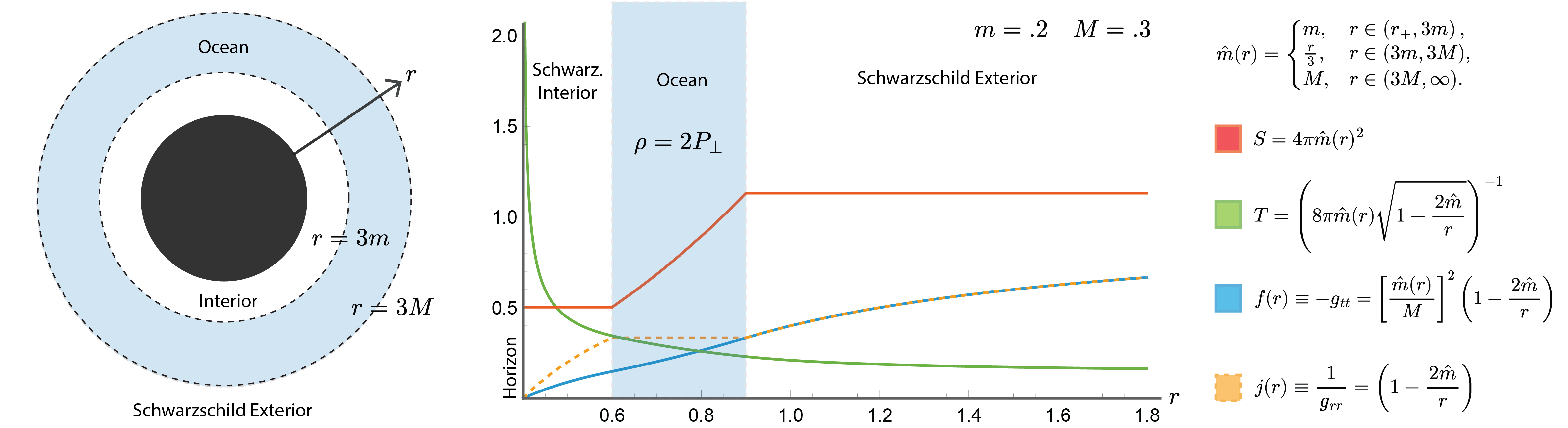}
    \caption{An extended photon sphere around a Schwarzschild black hole. When \textit{aligned} with a black hole's photon sphere, this gives a ``hillingar" black hole surrounded by an ``ocean" of orbiting massless particles. The continuous mass function $\widehat{m}(r)$ is piecewise defined, taking the value $m$ within the Schwarzschild interior, $M$ within the Schwarzschild exterior, and $\widehat{m}=r/3$ within the ocean. Here $\sigma=2p$ -- equivalently, $\rho=2 P_\perp$ -- in accordance with the thought experiment in Fig.~\ref{fig:thoughtexperiment}.
    }
    \label{fig:HBH}
\end{figure}

The HBH was investigated in \cite{Riojas:2026ytt,Riojas:2026mfu}, where its classical and thermodynamic properties were studied. In $3+1$ dimensional asymptotically flat spacetime, the HBH has the metric:
\begin{eqnarray}\label{fjmetric0}
    d s^2&=&-f(r) d t^2+j(r)^{-1} d r^2+r^2 d \theta^2+r^2 \sin ^2(\theta) d \phi^2 \\[4pt]
    j(r)&=&  1
    -\frac{2\widehat{m}(r)}{r}  \quad , \quad \quad 
    f(r) =\left[\frac{{\widehat{m}}(r)}{M}\right]^2  j(r) \ .
\end{eqnarray}
The mass function  ${\widehat{m}}(r)$, the mass interior to a sphere of areal radius $r$, takes the form:
\begin{equation}\label{massfunction0}
\widehat{m}(r)=
\begin{cases}
m, & r\in(r_+,\,3m),\\[2pt]
\dfrac{r}{3}, & r\in(3m,\,3M),\\[2pt]
M, & r\in(3M,\,\infty).
\end{cases}
\end{equation}
As shown in Fig.~\ref{fig:HBH}, $\widehat{m}(r)=m$ out to the black hole's photon sphere at $r=3m$; it grows linearly in the ocean $r\in(3m,3M)$ until reaching the critical photon orbit at $r=3M$. This metric was obtained with \cite{DiFilippo:2024poc,Maeda:2024tsg} and without \cite{1974RSPSA.337..529F,Joshi:2011zm} horizons, as noted further below.

Following \cite{Riojas:2026ytt,Riojas:2026mfu}, we shall refer to a continuous region where $rf'/2f=1$ as an \textit{extended photon sphere}. We demonstrate that these structures are permitted by Einstein's equations when the following condition for $T^\mu_\nu=(-\rho,P_r,P_\perp,P_\perp)$ holds; see App.~\ref{sec:UCO} for a derivation:
\begin{equation}\label{eq:intropscon}
    P_{\perp}(r)=\frac{1-2 \widehat{m}^{\prime}(r)}{8 \pi r^2}, \quad P_r=\frac{r-3 \widehat{m}(r)}{4 \pi r^3}, \quad \rho=\frac{\widehat{m}^{\prime}(r)}{4 \pi r^2}, \Longrightarrow \frac{P_r}{\rho}=\frac{r-3 \widehat{m}(r)}{r \widehat{m}^{\prime}(r)} \,.
\end{equation}
We also show that matter in thermal equilibrium with a black hole has the same temperature, and coarse-grained entropy, as a black hole of mass $M$ when Eq.~\eqref{eq:temperaturemimicry} holds:
\begin{equation}
    \frac{P_r}{\rho}=\frac{3 j(r)-1}{1-j(r)}=\frac{r-3 \widehat{m}(r)}{\widehat{m}(r)} \,.
\end{equation}
Remarkably, these equations are the same for self-similar matter, which requires $\widehat{m}(r) \propto r$. For example, the HBH ocean has energy density $\rho$ and transverse pressure $P_\perp$, where:
 \begin{eqnarray}\label{densitypressureintro}
     \rho = 2P_\perp = \frac{\widehat{m}'(r)}{4\pi r^2}=\frac{1}{12\pi G r^2} \  \implies P_r = 0, \quad r = 3 \widehat{m}(r) \, .
 \end{eqnarray}
For a single layer in the ocean, the energy density $\sigma$ and surface pressure $p$ satisfy $\sigma =2p$, which saturates the $\sigma\ge2p$ bound in \cite{Brady:1991np}. Similar HBH solutions exist in all space-time dimensions, with any cosmological constant $\Lambda$, for $d\geq 4$ \cite{Riojas:2026ytt}. 

The HBH ocean is \textit{aligned} in the following sense: at each radius within the ocean, the layers join smoothly at $r = 3 \widehat{m}$, and at its edges the ocean joins to the photon sphere of a Schwarzschild geometry without an IJC. Alignment occurs for the HBH because $\sigma = 2 p$ throughout the ocean; equivalently, $rf'/2f=1$ at each edge. 

This simple metric has been obtained as a limiting case in the literature, with \cite{DiFilippo:2024poc,Maeda:2024tsg} and without \cite{1974RSPSA.337..529F,Joshi:2011zm} horizons, because it is the unique luminal Einstein cluster \cite{Einstein:1939ms,1970GReGr...1...19K, 1971GReGr...2..321B,1974RSPSA.337..529F,Herrera:1997plx,Boehmer:2007az}. Einstein clusters near horizons were described surprisingly recently in \cite{Cardoso:2021wlq}; see also \cite{Jusufi:2022jxu}. 

In \cite{Andreasson:2021lsh} it is shown by Andréasson that a black hole can be surrounded by arbitrarily many nested, well-separated shells of massless particles, each generating a photon sphere; numerically, each shell can be placed close to the photon sphere just inside it. The metric first appears around a black hole in \cite{DiFilippo:2024poc}, where Di Filippo and Rezzolla considered an advanced civilization assembling a sequence of nested shells of light at successive light rings; the continuum limit, obtained from continuity and Birkhoff's theorem, is Eq.~\eqref{fjmetric0}. Computing the stress tensor of the discrete shells, which continuity and Birkhoff's theorem leave undetermined, corrects this picture. Each shell lies at the light ring of the original system, but within the full system's light ring; each has positive trace  and is unstable \cite{Brady:1991np}. The discrete shells of \cite{DiFilippo:2024poc} are therefore not shells of
light at the successive light rings of the system; this holds only in the continuum limit. The HBH is also a special case of Model I in \cite{Maeda:2024tsg} by Maeda, Cardoso, and Wang, where Einstein clusters model dark matter spikes that may form near galactic centers \cite{Gondolo:1999ef,Ferrer:2017xwm}. They noted marginal stability of Model I in the limit where the ISCO radius nears the photon sphere.

Related horizonless solutions at zero radial pressure have been long studied. Indeed, Eq.~\eqref{fjmetric0} is a Florides star
\cite{1974RSPSA.337..529F} whose core is replaced with a Schwarzschild interior. In a study of collapsing Einstein clusters by Joshi, Malafarina, and Narayan \cite{Joshi:2011zm}, the horizonless $m \to 0$ limit \cite{1974RSPSA.337..529F} of Eq.~\eqref{fjmetric0} is a special case of their naked singularity model; it separates solutions with stable and unstable circular orbits.\footnote{A related horizonless configuration was studied by Cunha \cite{Cunha:2025oeu}, who considered the backreaction of perturbations on the light rings of Florides-like solutions.} They considered whether these naked singularities are optically distinguishable from black holes, building on \cite{Claudel:2000yi,Virbhadra:2002ju}; restricting to the stable branch, they concluded that the singularity at the center of the cluster is optically distinguishable from a black hole of the same mass. The $m=0$ member of Eq.~\eqref{fjmetric0}, which lies at the boundary of their stable branch, is marginally stable with a naked singularity at the center and mimics the optical signature of a black hole; see \cite{Riojas:2026ytt}.

There have also been recent studies of extended regions where photons can execute circular orbits \cite{Hod:2025bsf,Li:2025wmd}. These can be written as a stiffest star \cite{Zeldovich:1961sbr,Banks:2002fj} with an integration constant (vacuum energy); they are self-similar and saturate the DEC when that constant vanishes. Those analyses assume isotropy, but spherical symmetry permits independent radial and transverse pressures. Here, we remove this restriction and obtain the necessary and sufficient stress-energy tensor conditions for an extended photon sphere, which is needed to understand the connection between optical and thermodynamic mimicry. To the best of our knowledge this characterization has not appeared before in the literature.

\section{Review: The Stability Obstruction in Asymptotically Flat Spacetime}\label{sec:obstruction}

A thermal bath surrounding a Schwarzschild black hole, in asymptotically flat spacetime, faces mechanical and thermodynamic constraints. These appear to be mutually exclusive. This section reviews three classic results that together sharpen this obstruction. 

Each is generalized in this paper. In Sec.~\ref{sec:York} we review York's saddle-point evaluation of the gravitational path integral in a finite spherical cavity. His analysis shows thermodynamically stable shells are inside the photon sphere. In Sec.~\ref{sec:BLP} we review the stability analysis of Brady, Louko, and Poisson (BLP). They found mechanically stable shells sit outside the photon sphere. Then, in Sec.~\ref{sec:SWZ}, we review Sorkin, Wald, and Zhang (SWZ), who considered entropy maximization for isotropic radiation. Self-gravitating isotropic radiation in equilibrium cannot reach the photon sphere: $\mu_{\max }\equiv M / R \approx .246<1 / 3$. 

In Sec.~\ref{sec:ExtendedPhotonSpheres}, we will see that the backreaction of \textit{anisotropic} radiation creates an extended photon sphere. This region satisfies both stability criteria, while the HBH  \cite{Riojas:2026ytt,Riojas:2026mfu} mimics the optics and thermodynamics of an ordinary Schwarzschild black hole of mass $M$. 

\subsection{York Cavity Thermodynamics}\label{sec:York}

York's classic analysis of Schwarzschild thermodynamics in a finite cavity \cite{York:1986it} uses the Euclidean saddle-point method of Gibbons and Hawking \cite{Gibbons:1976ue}, but at finite distance; the fixed boundary data are the local (Tolman \cite{Tolman:1930zza}) temperature $T$ and area of the cavity wall.

If a Schwarzschild black hole of mass $M$ sits at the center of the cavity, then the wall temperature is the Hawking temperature redshifted to the wall: 
\begin{equation}
     T \equiv T(r)=(8 \pi M)^{-1}\left(1-\frac{2 M}{r}\right)^{-1 / 2}.
    \label{eq:TolmanTempYork}
\end{equation}
For fixed $(r,T)$, Eq.~\eqref{eq:TolmanTempYork} is generically double-valued in $M$. Real positive solutions exist only above a minimum wall temperature:
\begin{equation}
     T \geq \frac{\sqrt{27}}{8 \pi r}.
\label{eq:TolmanTempYorkCondition}
\end{equation}
\begin{figure}
    \centering
    \includegraphics[width=.9\linewidth]{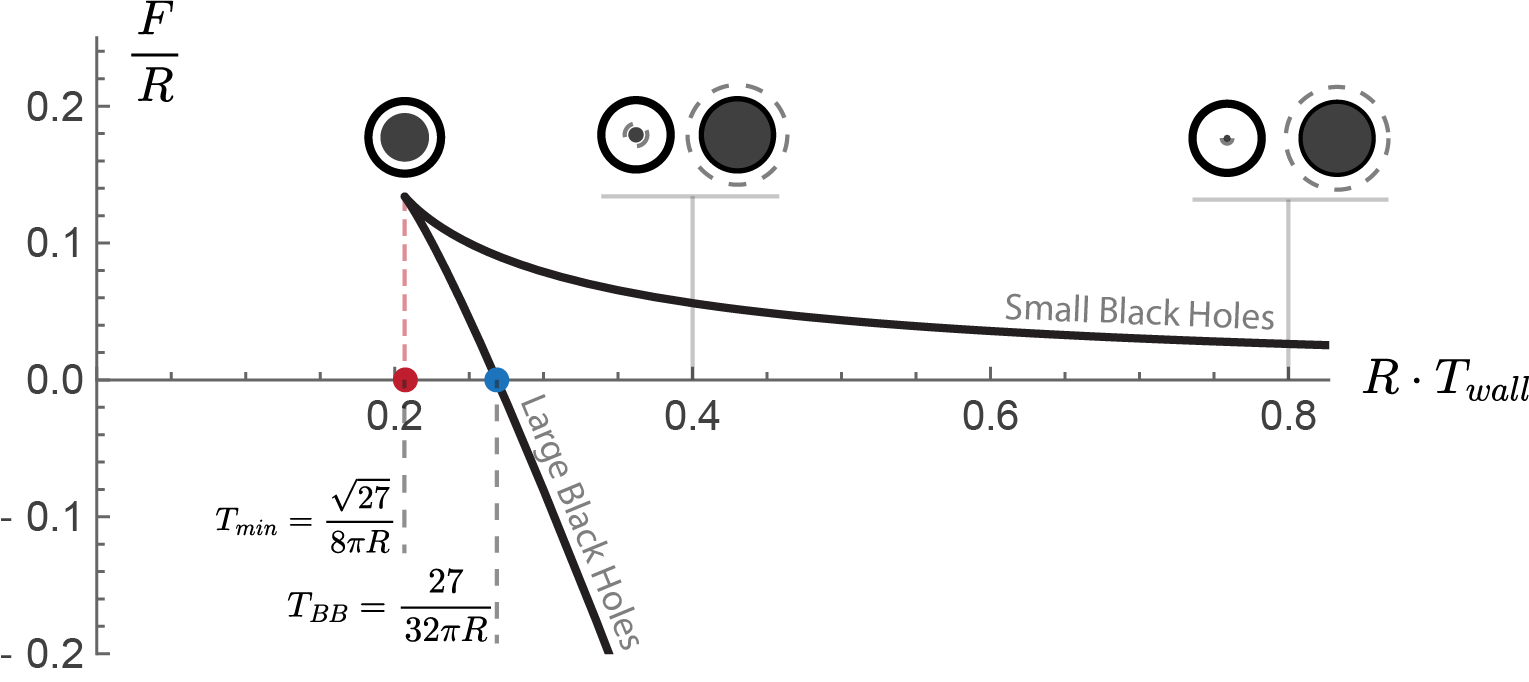}
    \caption{We illustrate York \cite{York:1986it} using the modern swallow-tail diagram. The cavity wall is a solid black line; the photon sphere is a dashed line. There are two branches of black holes at fixed $T_{wall}$. Large black holes are thermodynamically stable and dominate the canonical ensemble above $T_{BB}$ at $R = 9M/4$. At the minimum temperature $T_{min} = {\sqrt{27}}/{8 \pi R}$, the branches merge at the photon sphere. At the cusp $C_R^{-1}=0$; an infinitesimal traceless layer there leaves $T_\infty$ unchanged at fixed total mass (Eq.~\eqref{eq:specificheatchangetemp}). The HBH is built from them and satisfies $T(r)=\sqrt{27}/8 \pi r$ in the ocean.}
    \label{fig:swallowtailYork}
\end{figure}\noindent These two branches merge at the minimum temperature, which coincides with the local Tolman temperature at the photon sphere. The large black hole's photon sphere lies outside the cavity wall, while the small black hole's photon sphere lies inside. This is similar to the familiar situation for large and small black holes in AdS, see Fig.~\ref{fig:swallowtailYork}.

The Euclidean action includes the Einstein-Hilbert and Gibbons-Hawking-York terms: 
\begin{equation}
    I[g]=-\frac{1}{16 \pi G_N}\left(\int_M d^4 x \sqrt{|g|} R+2 \int_{\partial M} d^3 x \sqrt{|h|} K\right).
\end{equation}
With the subtraction prescription, the on-shell action takes the compact form:
\begin{equation}
I=12 \pi M^2-8 \pi M r+\beta r .
\end{equation}
The Brown-York energy \cite{York:1986it,Brown:1992br} and coarse-grained entropy follow from the standard prescription:
\begin{equation}
    E=\left(\frac{\partial I}{\partial \beta}\right)_A=r-r\left(1-\frac{2 M}{r}\right)^{1 / 2}\quad \quad S=\beta\left(\frac{\partial I}{\partial \beta}\right)_A-I=4 \pi M^2 .
\end{equation}
Here $E \ne M$ at finite distance, because the difference in the $S^2$ for Schwarzschild and Minkowski gives $M=E-\frac{1}{2} \frac{E^2}{r}$. The specific heat changes sign at the photon sphere:
\begin{equation}
C_R \equiv T\left(\frac{\partial S}{\partial T}\right)_A =\left(\frac{\partial E}{\partial T}\right)_A 
=8 \pi M^2\left[1-\frac{2 M}{r}\right]\left[\frac{3 M}{r}-1\right]^{-1}. 
\label{eq:fluctuations}
\end{equation}
In Sec.~\ref{sec:SpecificHeatPS} we show the specific heat at the photon sphere is inversely proportional to the cosmological constant for general spacetime dimensions; the particular definition of $E$ does not change this behavior.\footnote{Energy fluctuations would be naively expected to diverge at the photon sphere \cite{York:1986it}, but we show at third order in $F(E)$ that this effect is suppressed because the small and large black hole branches merge; see Appendix \ref{sec:fluctuations} for a calculation. Surprisingly, the path integral taken on the positive specific heat branch predicts large energy fluctuations on the order of \textit{tons} for a solar mass black hole, but with $\Delta E/E \sim A^{-1/3}$.}  

The large black hole branch has positive specific heat and dominates the canonical ensemble above the Buchdahl temperature $T_{\mathrm{BB}}=\frac{27}{32 \pi R}$, where $R = 9M/4$. The free energy $F=E-TS$ changes sign at the same point: 
\begin{equation}
    2 \leq r M^{-1}<2.25 \Leftrightarrow F<0 .
\end{equation}
York's results \cite{York:1986it} are similar to those of Hawking and Page in AdS \cite{Hawking:1982dh}, which were later interpreted in AdS/CFT by Witten \cite{Witten:1998zw,Witten:2024upt}.

\subsection{Mechanical Stability of Shells Around Black Holes}\label{sec:BLP}

The Israel junction conditions for a static thin shell surrounding a Schwarzschild black hole were first worked out by Frauendiener, Hoenselaers, and Konrad \cite{Frauendiener_1990}, who determined the surface energy density $\sigma$ and pressure $p$ required to support a shell at areal radius $R$ between Schwarzschild geometries of mass $m$ (interior) and $M$ (exterior).\footnote{They were largely interested in constraining energy extraction from black holes. The related problem of lowering a DEC saturating rope toward the horizon had been previously studied \cite{1972NPhS..240...77G} in this context. For a modern perspective on black hole mining, see \cite{Brown:2012un}. Mining is one element of AMPS \cite{Almheiri:2012rt}.}  See Fig.~\ref{fig:ShellAroundBH}. Defining $\alpha_{ \pm} \equiv \sqrt{f_{ \pm}(R)}$ with $f_{ \pm}=1-2 M_{ \pm} / R$, they found:
\begin{equation}
\sigma=\frac{1}{4 \pi R}\left(\alpha_m-\alpha_M\right), \quad p=\frac{1}{8 \pi R^2}\left(\frac{R-M}{\alpha_M}-\frac{R-m}{\alpha_m}\right) .
\label{eq:Frauendiener}
\end{equation}
They determined a minimum shell radius $R_{min} = 25 M/12$ by saturating DEC ($p= \sigma$) for $m=0$. They noted this was more compact than the Buchdahl radius $9 M/4$ for a fluid ball.\footnote{The other branch of DEC saturation, $p = - \sigma$, puts the shell at the photon sphere (Sec.~\ref{sec:IJC}).}

\begin{figure}
    \centering
    \includegraphics[width=0.35\linewidth]{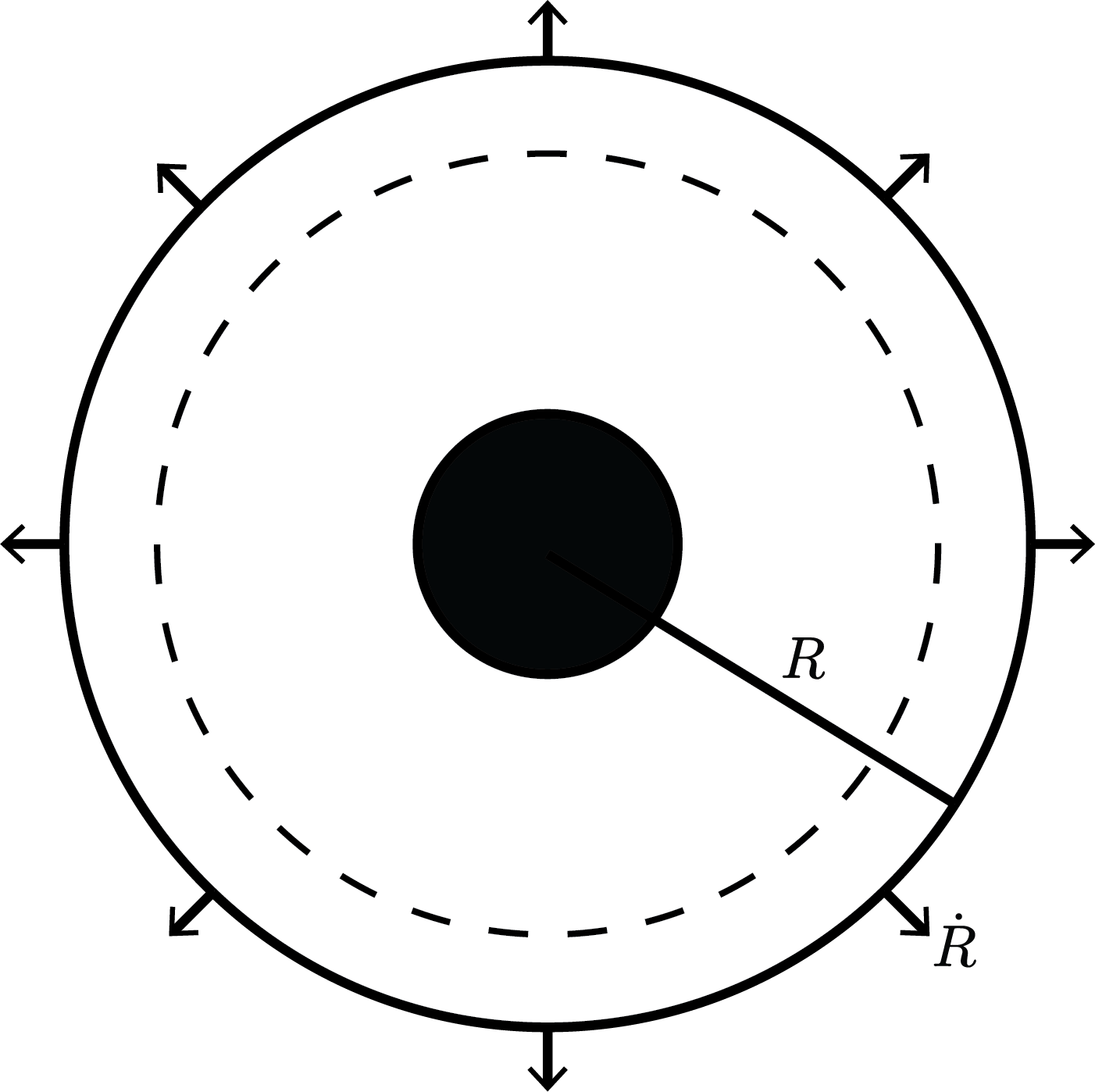}
    \caption{A thin shell around a black hole at radius $R$, with $\dot{R}\ne 0$. The stress tensor on the shell is $T^a_b = (-\sigma, p, \dots p). $ Shells placed inside the photon sphere in $3+1$ dimensional asymptotically flat space, depicted as a dotted line at $R=3M$, are mechanically unstable \cite{Brady:1991np}; see Fig.~\ref{fig:BLP}. }
\label{fig:ShellAroundBH}
\end{figure}

The connection between the photon sphere and the mechanical stability of thin shells around black holes was soon found by Brady, Louko, and Poisson (BLP) \cite{Brady:1991np}. The IJC for a radially moving shell satisfies a one-dimensional effective equation of motion:
\begin{equation}
    \beta_{-}-\beta_{+}=4 \pi r \sigma, \quad \quad  \beta_{ \pm}=\left(\dot{r}^2+1-2 m_{ \pm} / r\right)^{1 / 2},
\end{equation}
where $\beta_{ \pm}=\left(\dot{r}^2+1-2 m_{ \pm} / r\right)^{1 / 2}$ and $\dot{r}=d r / d \tau$. Together with the conservation equation, $\frac{d}{d \tau}\left(\sigma r^2\right)+p \frac{d}{d \tau}\left(r^2\right)=0$, the static equilibrium conditions are:
\begin{equation}
    \alpha_{-}-\alpha_{+}=4 \pi r_0 \sigma_0, \quad \quad \frac{m_{+}}{\alpha_+} -\frac{m_{-}}{ \alpha_{-}}=4 \pi r_0^2\left(\sigma_0+2 p_0\right),
    \label{eq:staticjunctionbh}
\end{equation}
where $\alpha_{ \pm}=\left(1-2 m_{ \pm} / r_0\right)^{1 / 2}$. 

To determine stability against radial perturbations, they linearized the equation of state around equilibrium as $p=\lambda(\sigma-\bar{\sigma})$, where $\lambda \equiv c_s^2$ is the squared speed of sound. Note $\lambda\equiv (dp/d\sigma)_0$ is independent of
$p_0/\sigma_0$. Integrating the conservation equation and using the equilibrium conditions to eliminate $\bar\sigma$ gives $\sigma$ as a function of $z \equiv r/r_0$: 
\begin{equation}
    (1+\lambda) \sigma / \sigma_0=F / 4 z^2, \qquad \lambda \equiv(d p / d \sigma)_0 = c_s^2.
\end{equation}
\begin{equation}
    F(z)=\left(4 \lambda+1-\frac{1}{ \alpha_{+} \alpha_{-}}\right) z^2+\left(3+\frac{1}{ \alpha_{+} \alpha_{-}}\right) z^{-2 \lambda} .
\end{equation}
Squaring the equation of motion twice yields an effective potential problem $r_0^2 \dot z^2 + V(z)=0$, with $V(1)=V'(1)=0$ by construction. Stability requires $V''(1)>0$, which they showed is equivalent to the positivity of: 
\begin{equation}
    G\left(\lambda, \alpha_{+}, \alpha_{-}\right)\equiv3(4 \lambda+1) \alpha_{+}^3 \alpha_{-}^3+4 \lambda \alpha_{+}^2 \alpha_{-}^2-\left(\alpha_{+}^2+\alpha_{-}^2+\alpha_{+} \alpha_{-}\right) .
    \label{eq:stabBLP}
\end{equation}
They illustrated this result as a stability triangle in the $(\alpha_-, \alpha_+)$ plane; the largest stable region is obtained with a luminal speed of sound, $\lambda=1$. See Fig.~\ref{fig:BLP}. In the light-shell limit, $\alpha_+ = \alpha_- = \bar{\alpha}$, the stability boundary is at: 
\begin{equation}
    \bar{\alpha}(\lambda)=\left(\frac{\sqrt{4 \lambda^2+36 \lambda+9}-2 \lambda}{3(4 \lambda+1)}\right)^{1 / 2} .
\end{equation}
For $\lambda=1$ this gives $\bar{\alpha}=1/\sqrt{3}$, i.e. $r_0 = 3m_-$, the photon sphere. Mechanically stable shells satisfy $\sigma \ge 2p$, with the curve where $\sigma = 2p$ intersecting $\lambda=1$ at the point where the shell becomes massless. They concluded that for massive shells, joint stability was obstructed, because York's result \cite{York:1986it} requires $r\le 3M$. 

\begin{figure}
    \centering
    \includegraphics[width=0.75\linewidth]{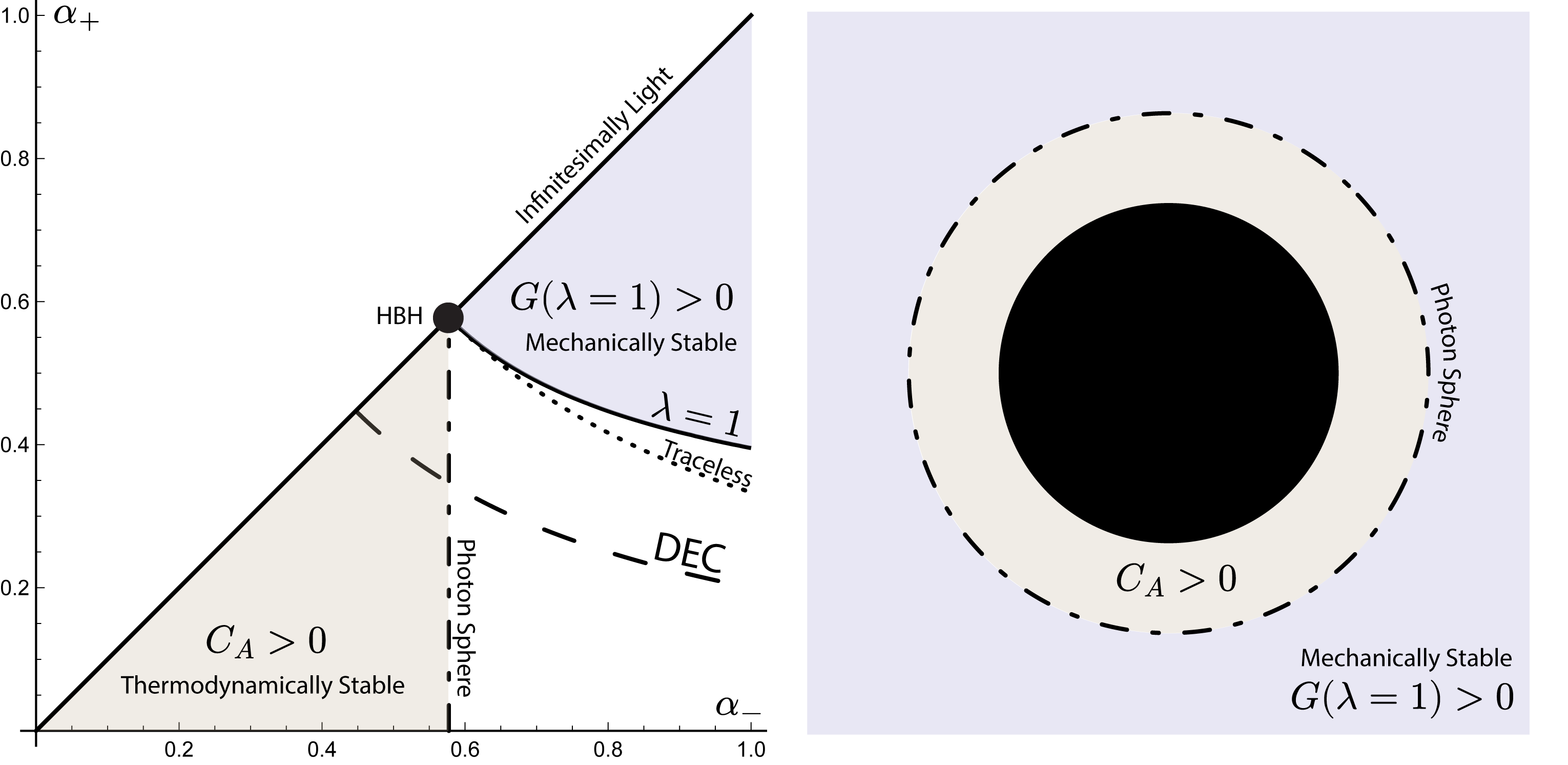}
    \caption{As shown by BLP \cite{Brady:1991np}, in asymptotically flat spacetime around a Schwarzschild black hole, the region of mechanical stability does not overlap with the thermodynamically stable region obtained by York \cite{York:1986it}. These regions lie on opposite sides of the photon sphere, where they meet for massless $\sigma=2p$ shells. This analysis is extended to AdS in Fig.~\ref{fig:BLPExtendedAdS}.}
    \label{fig:BLP}
\end{figure}

The flat-space stability triangle reviewed here corresponds to the limit $x \equiv r_0/L \rightarrow 0$ of the AdS analysis in Sec.~\ref{sec:Stability}, where we find that a finite window of joint mechanical and thermodynamic stability becomes available. For Schwarzschild, $f'(r) = 2m/r^2$ allows the equilibrium conditions to be expressed purely in terms of $\alpha_+ \alpha_-$; the stability analysis is instead rephrased in terms of $p_0/\sigma_0$ in Sec.~\ref{sec:Stability}, linking it more closely to the IJC in Sec.~\ref{sec:IJC}.

\subsection{Thermodynamics of Self-Gravitating Radiation}\label{sec:SWZ}

The thermodynamics of self-gravitating radiation within a spherical cavity was studied by Sorkin, Wald, and Zhang (SWZ) \cite{Sorkin:1981wd}. They showed that the problem of finding static equilibrium configurations can be recast as a variational principle for the total entropy. 

They considered isotropic radiation of fixed total mass $M = m(R)$, with an equation of state $P = \rho/3$, and entropy density $s = a \rho^{3/4}$. This was placed in a box of radius $R$. They showed that extremizing the total entropy functional
\begin{equation}I=\int_0^R\left(\frac{1}{r^2} \frac{d m}{d r}\right)^{3 / 4}\left(1-\frac{2 m(r)}{r}\right)^{-1 / 2} r^2 d r \quad \quad \delta m(r)\big|_{0, R}=0,
\label{eq:entropyfunctional}
\end{equation}
yields the Tolman-Oppenheimer-Volkoff (TOV) equation. In brief, they concluded that entropy extrema coincide with static equilibrium configurations. 

They exploited scale invariance, introducing the dimensionless variables: 
\begin{equation}
    \mu \equiv \frac{m(r)}{r}, \quad q \equiv \frac{d m}{d r}=4 \pi r^2 \rho, \quad z \equiv \ln r.
\end{equation}
This reduced the TOV equation to an autonomous system: 
\begin{equation}
    \frac{d \mu}{d z}=q-\mu, \quad \frac{d q}{d z}=\frac{2 q\left(1-4 \mu-\frac{2}{3} q\right)}{1-2 \mu} .
\end{equation}

Solution curves in the $(\mu, q)$ plane (see Fig.~\ref{fig:SWZfig}) can be understood using nullclines. The nullcline $q_V$, on which $d \mu/dz=0$, is the line $q=\mu$; curves cross it vertically. The nullcline $q_H$, on which $dq/dz=0$, is the line $4 \mu + \frac{2}{3} q = 1$; curves cross it horizontally. These nullclines intersect at the fixed point $(\mu_*,q_*)=(\frac{3}{14}, \frac{3}{14})$.  

The physically relevant solution $C$ is regular at the origin with slope $q/\mu \rightarrow 3$ and spirals into the fixed point. The eigenvalues of the linearized system near the fixed point are $-\frac{3}{4} \pm i \sqrt{47} / 4$, so the approach is oscillatory, with $\mu$ and $q$ executing damped oscillations around $\frac{3}{14}$ as $r \rightarrow \infty$.

\begin{figure}
    \centering
    \includegraphics[width=1\linewidth]{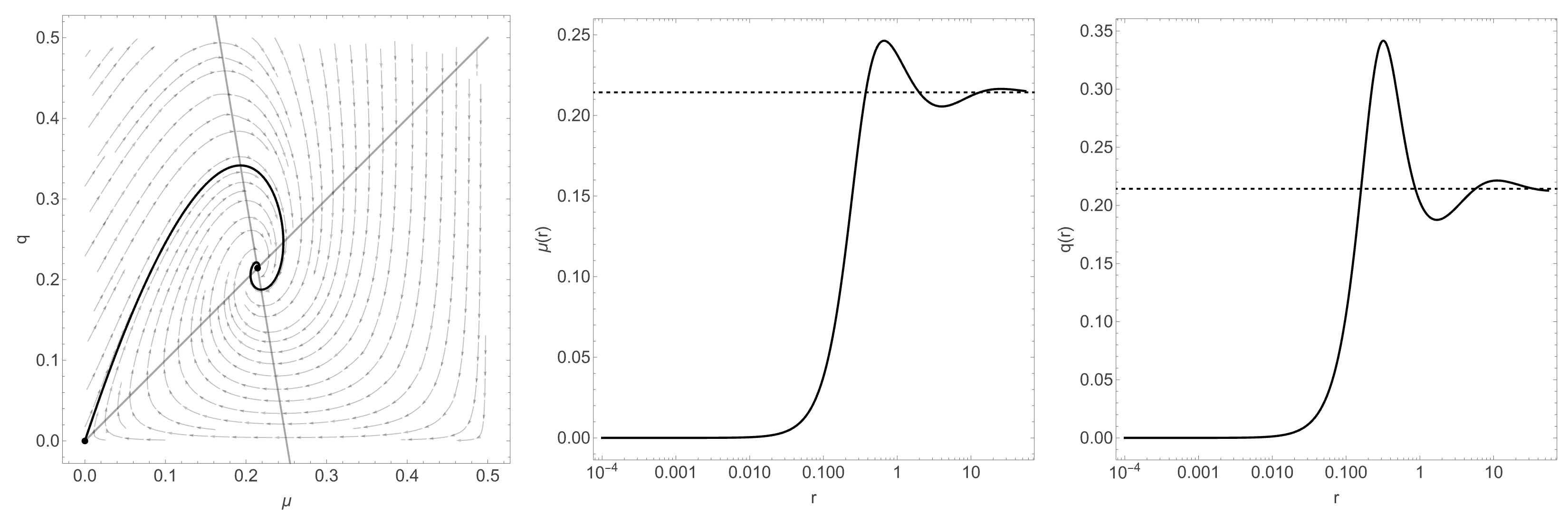}
    \caption{The solution to the Tolman-Oppenheimer-Volkoff (TOV) equation \cite{Sorkin:1981wd} for isotropic gas in thermal equilibrium. Here $q \equiv  4 \pi r^2 \rho(r)$ and $\mu \equiv m(r)/r$. The vertical and horizontal nullclines are shown, crossing at the fixed point. The curve $C$ spirals into the fixed point at $(\mu_*,q_*)=(\frac{3}{14}, \frac{3}{14})$; this approach is oscillatory, with $\mu$ and $q$ executing damped oscillations around $\frac{3}{14}$ as $r \rightarrow \infty$.}
    \label{fig:SWZfig}
\end{figure}

Each point on the curve is an equilibrium state at fixed box radius $R$, due to the scale invariance of this problem: $r \rightarrow \lambda r$, $m \rightarrow \lambda m$, $\rho \rightarrow \lambda^{-2} \rho$. Choosing where to terminate the curve at $r=R$ gives a one-parameter family of equilibria labeled by $M/R$. These are stable up to the first turning point in $\mu$, where $\mu_{max} \approx .246$, by the Poincaré turning point criterion \cite{Poincare:1886xr,Sorkin:1981jc,Green:2013ica}. The isotropic gas cannot reach the photon sphere because $\mu_{max} < \frac{1}{3}$. They noted that the entropy of these isotropic solutions scales as $S \sim M^{3/2}$. 

They knew that at fixed mass, Eq.~\eqref{eq:entropyfunctional} diverges at the local Schwarzschild radius due to the redshift factor $\epsilon = (1-2m/r)^{-1/2}$. The thermal radiation has finite wavelength $\lambda \sim \rho^{-1/4}$ with proper distance from the horizon $D(r) > \lambda(r)$. Near the horizon, $D(r)\sim r^{1 / 2} \epsilon^{1 / 2}$, giving the inequality $\epsilon^{-1 / 2} \lesssim r^{1 / 2} \rho^{1 / 4}$. Inserting this into Eq.~\eqref{eq:entropyfunctional} gives $S \lesssim M R$, and so $S \sim M^2$ for $R \sim 2M$. We note this seems to be closely related to frozen stars \cite{Brustein:2021lnr,Brustein:2023hic}, which have the same energy density scaling.\footnote{The maximally entropic energy density profile obtained by SWZ is related to the proposal of Brustein, Medved, and Simhon \cite{Brustein:2021lnr,Brustein:2023hic}, who called these configurations ``frozen stars", see also \cite{Sorkin:1981wd}. The $G_{rr}$ component of Einstein's equations Eq.~\eqref{eq:EErrwithj(r)}, for $\widehat{m}=r/2$, gives $P_r = -\rho$. The TOV equation in Eq.~\eqref{eq:anisotropic_TOV} gives $P_\perp = 0$. Then Einstein's equations require $\rho(r)=(4 \pi r^2)^{-1} (d \widehat{m}/dr)$; since $\widehat{m}=r/2$, we have $\rho = 1/8 \pi r^2$. } 

Now let us summarize. Consider a thermal bath with an inner surface at $R_0$ around a black hole. York's calculation shows positive specific heat requires $R_0 \le 3M$;  however, the mechanical stability analysis by BLP for shells requires $R_0 \ge  3M$. We conclude that a single shell will not suffice. In addition, isotropic radiation cannot reach the photon sphere because $\mu_{max} \le 1/3$. The missing ingredient is anisotropy, as we will see in Sec.~\ref{sec:ExtendedPhotonSpheres}.

\section{Time-Dependent Israel Junction Conditions}\label{sec:IJC}

The well-known Israel junction conditions in $D$ bulk dimensions are \cite{Israel:1966rt},
\begin{equation}
[K_{ab}]
=-\kappa_D^{\,2}\!\left(T_{ab}-\frac{1}{d-1}\,\gamma_{ab}\,T\right),
\quad
\kappa_D^{\,2}=8\pi G_D,
\label{eq:IJC}
\end{equation}
where $\gamma_{ab}$ is the induced metric on the shell, $d=D-1$ is the shell worldvolume dimension, and $T\equiv \gamma^{ab}T_{ab}$. The Israel layers it describes are best viewed as thin transition regions.

Here we extend an approach developed by Per Kraus \cite{Kraus:1999it} to more general spherically symmetric spacetimes. Let the shell have worldvolume dimension $d=n+1$; take:
\begin{equation}
T^{a}{}_{b}=\mathrm{diag}(-\sigma,p,p,\dots,p),
\label{eq:Tab_diag}
\end{equation}
with $n$ copies of $p$. The trace is $T=-\sigma+n p$. The two independent jump conditions are:
\begin{equation}
\frac{[K^{\tau}{}_{\tau}]}{8 \pi G_D}
=\left(\frac{n-1}{n}\sigma+p\right) \quad \quad \frac{[K^{\theta}{}_{\theta}]}{8 \pi G_D}
=-\left(\frac{\sigma}{n}\right).
\label{eq:jump_Ktautau}
\end{equation}
Here $K^{\theta}{}_{\theta}=K^{\phi}{}_{\phi}$, and so on, for the angular directions. Note $[K]$ vanishes when the stress-energy tensor on the Israel layer is traceless:  
\begin{equation}
    [K]=\left[K_a^a\right]=\frac{8 \pi G_D}{n}(n p-\sigma) = \frac{8 \pi G_D}{n}T^a_a.
\end{equation}
The ratio Eq.~\eqref{eq:jump_Ktautau} yields a compact expression for the Israel junction condition:
\begin{equation}
\frac{p}{\sigma}
=-\frac{n-1}{n}-\frac{1}{n}\frac{[K^{\tau}{}_{\tau}]}{[K^{\theta}{}_{\theta}]}.
\label{eq:p_over_sigma}
\end{equation}
Now we determine the motion of these thin Israel layers in a time-independent bulk spacetime. Consider the following spherically symmetric, time-independent bulk ansatz: 
\begin{equation}
\dd s^2=-f(r)\,\dd t^2+\frac{1}{j(r)}\,\dd r^2+r^2\dd\Sigma_n^2.
\label{eq:metric_general}
\end{equation}
Parameterize the shell, with proper time $\tau$, as $X^\mu= \left( t(\tau), R(\tau), \theta(\tau), \phi(\tau), \dots\right)$. Holding the angles fixed, the tangent vector $e_\tau^\mu\equiv \frac{\partial X^\mu}{\partial \tau}\equiv u^\mu$ is:
\begin{equation}
u^{\mu}=(\dot t,\dot R,0,\dots,0),
\qquad
u^{\mu}u_{\mu}=-1.
\label{eq:tangentvector}
\end{equation}
From Eq.~\eqref{eq:metric_general}-\eqref{eq:tangentvector},
\begin{equation}
\dot t^{\,2}=\frac{j(R)+\dot R^{\,2}}{f(R)\,j(R)}.
\label{eq:tdot}
\end{equation}
Let $n^{\mu}$ be the unit normal, with $n^{\mu}n_{\mu}=+1$ and $n_{\mu}u^{\mu}=0$. This gives:
\begin{equation}
n^{r}=\epsilon\,\sqrt{j(R)+\dot R^{\,2}},
\qquad \epsilon=\pm 1.
\label{eq:nr}
\end{equation}
The extrinsic curvature is defined as $K_{ab}=e_a^{\mu}e_b^{\nu}\nabla_{\mu}n_{\nu}$.
The angular components are: 
\begin{equation}
\label{eq:Kthetatheta}
K^{\theta}{}_{\theta}=\epsilon\,\frac{\sqrt{j(R)+\dot R^{\,2}}}{R}.
\end{equation}
For the $\tau\tau$ component it is convenient to write:
\begin{equation}
\label{eq:Ktautau_accel}
K_{\tau\tau}=-n_{\mu}a^{\mu},
\qquad
a^{\mu}=u^{\nu}\nabla_{\nu}u^{\mu}=\frac{d u^\mu}{d \tau}+\Gamma_{\alpha \beta}^\mu u^\alpha u^\beta.
\end{equation}
Direct calculation gives:
\begin{equation}
\label{eq:Ktautau}
K^{\tau}{}_{\tau}
=\frac{
\epsilon\left[ 
\ddot R+\frac{f'(R)}{2f(R)}\bigl(j(R)+\dot R^{\,2}\bigr)
-\frac{j'(R)}{2j(R)}\dot R^{\,2}
\right]}{{\sqrt{j(R)+\dot R^{\,2}}}}.
\end{equation}
See Fig.~\ref{fig:ShellAroundBH} for an illustration of this configuration, for the special case where $R$ is the radial direction to a spherical shell placed around a black hole.

\subsection{Static Shells around Black Holes}
To examine the stability of these shells we consider the static case, where $\dot R=\ddot R=0$. Let $f=j$ on each side of the shell, which holds when the \textit{bulk} stress-energy tensor vanishes. Then Eq.~\eqref{eq:Kthetatheta} and Eq.~\eqref{eq:Ktautau} reduce to:
\begin{equation}
K^{\theta}{}_{\theta}=\epsilon\,\frac{\sqrt{f(R)}}{R},
\qquad
K^{\tau}{}_{\tau}=\epsilon\,\frac{f'(R)}{2\sqrt{f(R)}}.
\label{eq:K_static}
\end{equation}
Here we consider shells with $\epsilon_{+}=+1$ and $ \epsilon_{-}=+1$:
\begin{equation}
[K^{\tau}{}_{\tau}]
=\frac{f_+'(R)}{2\sqrt{f_+(R)}}-\frac{f_-'(R)}{2\sqrt{f_-(R)}},
\label{eq:jump_Ktau_geom}
\end{equation}
\begin{equation}
[K^{\theta}{}_{\theta}]
=\frac{\sqrt{f_+(R)}-\sqrt{f_-(R)}}{R}.
\label{eq:jump_Ktheta_geom}
\end{equation}
Suppose each side is Schwarzschild-Tangherlini:
\begin{equation}
\label{eq:f_ST}
f(r)=1-\frac{\mu}{r^{\,n-1}},
\qquad
f'(r)=\frac{(n-1)\mu}{r^{\,n}}.
\end{equation}
It is convenient to define:
\begin{equation}
\label{eq:alpha_def}
\alpha_{\pm}\equiv \sqrt{f_{\pm}(R)}.
\end{equation}
Then Eqs.~\eqref{eq:jump_Ktau_geom}-\eqref{eq:alpha_def} yield:
\begin{equation}
\label{eq:ratio_Ktau_Ktheta}
\frac{[K^{\tau}{}_{\tau}]}{[K^{\theta}{}_{\theta}]}
=-\frac{n-1}{2}\left(1+\frac{1}{\alpha_{+}\alpha_{-}}\right),
\end{equation}
and therefore, via Eq.~\eqref{eq:p_over_sigma}:
\begin{equation}
\frac{p}{\sigma}
=\frac{n-1}{2n}\left(\frac{1}{\alpha_{+}\alpha_{-}}-1\right).
\label{eq:p_over_sigma_alpha}
\end{equation}
For the traceless case $T=0$, appropriate for massless radiation, one has $\sigma = n p$, so:
\begin{equation}
\label{eq:alpha_product}
\alpha_{+}\alpha_{-}=\frac{n-1}{n+1}.
\end{equation}
For example, for $n=2$ we have $\alpha_{+} \alpha_{-}=1 / 3$. 

In particular, Eq.~\eqref{eq:alpha_product} can be used to obtain a recursion relation for layered shells. Writing $\alpha_\pm^2=1-\mu_\pm/R^{\,n-1}$ and solving Eq.~\eqref{eq:alpha_product} for $R^{n-1}$ gives:
\begin{equation}
\label{eq:R_solution}
R^{n-1}
=\frac{(n+1)\left((n+1)\bar\mu
\pm \sqrt{(n-1)^2\bar\mu^{\,2}+n(\delta\mu)^2}\right)}{4n},
\end{equation}
where we defined the ``average" $\bar\mu\equiv \frac{\mu_{+}+\mu_{-}}{2}$ and the ``difference" $\delta\mu\equiv \mu_{+}-\mu_{-}$.

If $\mu_{+}=\mu_{-}\equiv\mu$ then Eq.~\eqref{eq:R_solution} reduces to
\begin{equation}
\label{eq:photon_sphere}
R^{\,n-1}=\frac{n+1}{2}\,\mu\equiv R_{ps}^{n-1},
\end{equation}
which is the standard photon-sphere radius $R_{ps}$ in $D=n+2$ dimensions. For $n=2$ we have $R = 3 M$. Then in the infinitesimally light shell limit, static nested shells are located at the photon sphere.

For example, for $n=2$ (4D) with $\mu_-=2m$ and $\mu_+=2M$, Eq.~\eqref{eq:R_solution} becomes:
\begin{equation}
R=\frac{3}{8}\left(3m+3M\pm\sqrt{9m^2-14mM+9M^2}\right)= \frac{9}{4} \bar{m} + \frac{3}{4} \bar{m} \sqrt{1 + 2 \left(\frac{\delta m}{\bar{m}}\right)^2},
\label{eq:nesting}
\end{equation}
where $3 m < R < 3M$, $\delta m \equiv M-m$, and $\bar{m} \equiv \frac{M+m}{2}$. One layer can be self-supporting; this recovers the Buchdahl-Andréasson bound \cite{Buchdahl:1959zz,Andreasson:2007ck}, since $m=0$ gives $R=\frac{9}{4}M$. In Sec.~\ref{sec:emergentisraellayers}, we will see that anisotropic gases satisfy these nesting relations at zero radial pressure.

\subsection{Static $Z_2$ Pure Tension Shells Sit at the Photon Sphere}

Here we show that static pure tension $Z_2$ symmetric shells (e.g. Randall-Sundrum branes)\footnote{In the Randall-Sundrum \cite{Randall:1999ee,Randall:1999vf} and Karch-Randall models \cite{Karch:2000ct}, one often takes a $Z_2$ reflection symmetry across the brane (an orbifold identification).} are located at the photon sphere of the geometry. The IJCs for $Z_2$ symmetry have $\epsilon_{+}=+1$ and $\epsilon_{-}=-1$, which gives:
\begin{equation}
    \left[ K^\theta{ }_\theta\right]=\frac{2\sqrt{j(R)+\dot{R}^2}}{R}
\end{equation}
\begin{equation}
\left[K_\tau^\tau\right]=\frac{2\left[\ddot{R}+\frac{f^{\prime}(R)}{2 f(R)}\left(j(R)+\dot{R}^2\right)-\frac{j^{\prime}(R)}{2 j(R)} \dot{R}^2\right]}{\sqrt{j(R)+\dot{R}^2}}
\end{equation}
Then Eq.~\eqref{eq:nr} and Eq.~\eqref{eq:Ktautau_accel} give:
\begin{equation}
    \frac{\left[K^\tau{ }_\tau\right]}{\left[K^\theta{ }_\theta\right]}=\frac{K^\tau{ }_\tau}{K^\theta{ }_\theta}=\frac{\epsilon a^r / n^r}{\epsilon n^r / R}=\frac{a^r R}{j(R)+\dot{R}^2},
\end{equation}
where $(n^r)^2$ eliminated $\epsilon=\pm 1$. The jump condition in Eq.~\eqref{eq:p_over_sigma} becomes:
\begin{equation}
    \frac{p}{\sigma}=-\frac{n-1}{n}-\frac{1}{n} \frac{a^r R}{j(R)+\dot{R}^2}.
    \label{eq:psigmaratioaccel}
\end{equation}
For a pure tension brane, $a^r=\frac{j(R) + \dot{R}^2}{R}$, and Eq.~\eqref{eq:Ktautau_accel} gives:
\begin{equation}
    \ddot{R}+\frac{f^{\prime}(R)}{2 f(R)}\left(j(R)+\dot{R}^2\right)-\frac{j^{\prime}(R)}{2 j(R)} \dot{R}^2=\frac{j(R)+\dot{R}^2}{R} .
\end{equation}
Now supposing the brane is static, we have:
\begin{equation}
    \frac{R f^{\prime}(R)}{2 f(R)}=1 \implies R = R_{ps}.
    \label{eq:photonspherecondition}
\end{equation}
The converse is similar; for a static $Z_2$ shell where $Rf'(R)/2f(R)=1$, inverting the above algebraic relations shows $p = -\sigma$, so the brane is pure tension. This is consistent with known results for Randall-Sundrum branes; see Appendix \ref{sec:RSbranes} for additional discussion.

\section{Stability of Shells Around Black Holes}\label{sec:Stability}

Here we generalize the analysis of Brady, Louko, and Poisson \cite{Brady:1991np}, reviewed in Sec.~\ref{sec:BLP}, to general spherically symmetric spacetimes. In Sec.~\ref{sec:thermAdS} and Sec.~\ref{sec:mechAdS}, the flat-space stability obstruction of Sec.~\ref{sec:obstruction} is generalized to AdS. There, a finite window of joint stability opens.

The specific heat at the photon sphere of a $d$-dimensional black hole is shown to be inversely proportional to the cosmological constant in Sec.~\ref{sec:SpecificHeatPS}. This is independent of how the energy is defined, provided it is monotonic in the mass enclosed within a finite region.

\subsection{An Extended Formalism for Shell Stability}

The Bianchi identity gives $\nabla^\mu T_{\mu \nu}=0$, which leads to the usual expression for energy-momentum conservation on the thin Israel layer: 
\begin{equation}
    \dot{\sigma}+2 \frac{\dot{r}}{r}(\sigma+p)=0.
    \label{eq:energyconservationshell}
\end{equation}
This can be rewritten using the chain-rule, since $\sigma(R) = \sigma(R(\tau))$ implies $\dot{\sigma}=(d \sigma / d R) \dot{R}$:
\begin{equation}
    \left(\frac{d \sigma}{d R}\right) \dot{R}+2 \frac{\dot{R}}{R}(\sigma+p)=0 \implies \frac{d \sigma}{d R}=-\frac{2(\sigma+p)}{R}.
    \label{eq:propagation}
\end{equation}
Here we fixed $\dot{\theta}=\dot{\phi}=0$, and assumed the equation of state for the brane is not pure tension \textit{after} the perturbation, $\sigma(R(\tau)) \ne -p(R(\tau))$. Then $\dot\sigma\ne0$ by Eq.~\eqref{eq:propagation}, sound waves propagate in the material, and we may cancel $\dot{R}$.\footnote{Pure tension branes with rigid equations of state, e.g. unstabilized Randall-Sundrum branes, are unstable. The brane is stabilized by its response to a perturbation; when the brane is pushed inward, the stiffness of the brane provides a restoring force that keeps it there and sound waves propagate. If the brane remains pure tension after the perturbation, a straightforward calculation at the photon sphere yields:
\begin{equation}
    V '' (r_0) = r_0^2 h''(r_0)>0, \quad \quad h(r_0) \equiv \frac{f(r_0)}{r_0^2}.
\end{equation}
For a ``thick" brane where an extended photon sphere exists, $h''=0$ and the stability is marginal. This occurs when the extended photon sphere condition is satisfied, see App.~\ref{sec:UCO}.}

The IJCs in Eq.~\eqref{eq:IJC}, Eq.~\eqref{eq:jump_Ktau_geom}, and Eq.~\eqref{eq:jump_Ktheta_geom} require that:
\begin{equation}
    \sqrt{\dot{R}^2 + f_{-}(R)}- \sqrt{\dot{R}^2 + f_{+}(R)} = \frac{ \kappa^2}{2} \sigma(R) R.
\label{eq:junctionsigma}
\end{equation}
This gives $\dot{R}^2 + V(R) = 0 $, where:
\begin{equation}
    V(R) = f_{-}(R) - \frac{\left(f_{+}(R)-f_{-}(R)-S^2\right)^2}{4 S^2},
    \label{eq:potentialformulation}
\end{equation}
and $S(R) \equiv \frac{\kappa^2}{2}\sigma(R) R$. Then a shell at $R_0$ is mechanically stable if: 
\begin{equation}
    V(R_0)= V'(R_0) = 0 \quad \quad V''(R_0)>0.
\end{equation}

To determine $\sigma(r)$ in the conservation equation Eq.~\eqref{eq:propagation}, we expand around the equilibrium value as $p(\sigma) \simeq p_0+\lambda\left(\sigma-\sigma_0\right)$, with $\lambda \equiv(d p / d \sigma)_0$. This perturbation changes the equation of state; note $\lambda \equiv c_s^2$ is the squared speed of sound for waves propagating on the shell. The shift vanishes at $\bar \sigma \equiv \sigma_0 - p_0/\lambda$, in terms of which Eq.~\eqref{eq:propagation} reads:
\begin{equation}\label{eq:diffeopert}
    \frac{d \sigma}{d r}+\frac{2(1+\lambda)}{r} \sigma=\frac{2 \lambda}{r} \bar{\sigma} , \qquad \lambda \equiv(d p / d \sigma)_0=c_s^2 , \qquad \bar{\sigma} \equiv \sigma_0-\frac{p_0}{\lambda}.
\end{equation}
For Eq.~\eqref{eq:diffeopert} the boundary condition $\sigma(r_0)=\sigma_0$ gives:
\begin{equation}\label{eq:linearizedpert}
    (1+\lambda) \frac{\sigma}{\sigma_0}=\lambda \frac{\bar{\sigma}}{\sigma_0}+\left(1+\lambda-\lambda \frac{\bar{\sigma}}{\sigma_0}\right)\left(\frac{r}{r_0}\right)^{-2(1+\lambda)} .
\end{equation}
To eliminate $\bar \sigma$ we use the  IJCs for a static shell, which in asymptotically flat space give:
\begin{equation}
    \frac{p}{\sigma}=\frac{n-1}{2 n}\left(\frac{1}{\alpha_{+} \alpha_{-}}-1\right).
\end{equation}
As an example, let us consider a perturbation for $n=2$ around the static value:
\begin{equation}\label{eq:flatspacepert}
    \frac{p_0}{\sigma_0}=\frac{1-\alpha_{+} \alpha_{-}}{4 \alpha_{+} \alpha_{-}} .
\end{equation}
Expanding the equation of state around the static solution gives:
\begin{equation}\label{eq:expandflat}
    \frac{\bar{\sigma}}{\sigma_0}=1-\frac{1}{\lambda} \frac{p_0}{\sigma_0}=\frac{4 \lambda+1-\frac{1}{\alpha_{+} \alpha_{-}}}{4 \lambda} .
\end{equation}
Then we can define $z = r/r_0$ and get: 
\begin{equation}
    (1+\lambda) \frac{\sigma}{\sigma_0}=\frac{1}{4}\left(4 \lambda+1-\frac{1}{\alpha_{+} \alpha_{-}}\right)+\frac{1}{4}\left(3+\frac{1}{\alpha_{+} \alpha_{-}}\right) z^{-2(1+\lambda)} .
\end{equation}
This gives, defining $F(z)$ for the asymptotically flat case: 
\begin{equation}\label{eq:Fdefflat}
        (1+\lambda) \frac{\sigma}{\sigma_0}=\frac{F(z)}{4 z^2}, \qquad F(z) \equiv\left(4 \lambda+1-\frac{1}{\alpha_{+} \alpha_{-}}\right) z^2+\left(3+\frac{1}{\alpha_{+} \alpha_{-}}\right) z^{-2 \lambda}
\end{equation}
Writing the IJCs using $z = r/r_0$ and $\dot{z} = \dot{r}/r_0$, with $V(z) \equiv V\left(r_0 z\right)$, we recover BLP \cite{Brady:1991np}:
\begin{equation}
    r_0^2 \dot{z}^2+V\left(z\right)=0 .
\end{equation}
The more general case (not asymptotically flat) is more difficult. The IJCs are:
\begin{equation}
    \frac{p}{\sigma}=-\frac{n-1}{n}-\frac{1}{n} \frac{\left[K_\tau^\tau\right]}{\left[K^\theta{ }_\theta\right]}
\end{equation}
The jumps read: 
\begin{equation}
\left[K_\tau^\tau\right]=\frac{f_{+}^{\prime}(R)}{2 \sqrt{f_{+}(R)}}-\frac{f_{-}^{\prime}(R)}{2 \sqrt{f_{-}(R)}}, \quad \quad \left[K_\theta^\theta\right]=\frac{\sqrt{f_{+}(R)}-\sqrt{f_{-}(R)}}{R}.
\end{equation}
Specializing to $n=2$ (4D) gives:
\begin{equation}
    \frac{p(r_0)}{\sigma(r_0)}=-\frac{1}{2} -\frac{1}{2}\frac{\frac{f'_{+}\left(r_0\right)}{2 \sqrt{f_{+}\left(r_0\right)}}-\frac{f'_{-}\left(r_0\right)}{2 \sqrt{f_{-}\left(r_0\right)}}}{\frac{\sqrt{f_{+}\left(r_0\right)}-\sqrt{f_{-}\left(r_0\right)}}{r_0}}
    \label{eq:ratioptosigma}
\end{equation}

A challenge when generalizing the analysis of BLP \cite{Brady:1991np} is that $f'(r_0)$ simplifies in asymptotically flat space (Eq.~\eqref{eq:flatspacepert}), but not in general. The first part of Eq.~\eqref{eq:expandflat}:
\begin{equation}
    \frac{\bar{\sigma}}{\sigma_0}=1-\frac{1}{\lambda} \frac{p_0}{\sigma_0},
\end{equation}
inserted into Eq.~\eqref{eq:linearizedpert} gives:
\begin{equation}
    (1+\lambda) \frac{\sigma(z)}{\sigma_0}=\left(\lambda-\frac{p_0}{\sigma_0}\right)+\left(1+\frac{p_0}{\sigma_0}\right) z^{-2(1+\lambda)} .
\end{equation}
Repeating the same procedure as in Eq.~\eqref{eq:Fdefflat}, we define $F(z)$:
\begin{equation}
    (1+\lambda) \frac{\sigma}{\sigma_0}\equiv\frac{F(z)}{4 z^2} \implies F(z)\equiv4\left(\lambda-\frac{p_0}{\sigma_0}\right) z^2+4\left(1+\frac{p_0}{\sigma_0}\right) z^{-2 \lambda}.
\end{equation}
To obtain $S(R)$ we use Eq.~\eqref{eq:junctionsigma} for $\dot{R}=0$. Here $z\equiv r/r_0$, so $z=1$ is the shell's location:
\begin{equation}
    S_0 \equiv S(1)=\sqrt{f_{-}\left(r_0\right)}-\sqrt{f_{+}\left(r_0\right)}, \quad \quad S(z)=S_0 \frac{F(z)}{4(1+\lambda) z}.
\end{equation}
\begin{equation}\label{eq:Vstabcon}
    V(z)=f_{-}\left(r_0 z\right)-\frac{\left(f_{+}\left(r_0 z\right)-f_{-}\left(r_0 z\right)-S(z)^2\right)^2}{4 S(z)^2}.
\end{equation}
To determine stability at $z=1$ (where, by construction, $V(z)=V'(z)=0$) we only need to check if $V(z)$ is concave up or concave down. For AdS see Sec.~\ref{sec:mechAdS}. In passing, we note traceless light shells at the photon sphere can be viewed as orbiting massless particles.\footnote{Defining $f_{-}(R)\equiv f(R)$ for notational simplicity, write $f_{+}(R)=f(R)+\delta f(R)$ and take the light-shell limit of Eq.~\eqref{eq:potentialformulation}. Noting $|\delta f| = 2 |\delta m|/R$, a short calculation yields: 
\begin{equation}
    \left(\frac{\dot{R}}{R}\right)^2=\frac{1}{b(R)^2}-\frac{f(R)}{R^2}, \quad \quad \frac{1}{b(R)^2}=\left(\frac{2 |\delta m|}{\kappa^2 \sigma(R) R^3}\right)^2.
\end{equation}
Energy conservation for the shell in Eq.~\eqref{eq:energyconservationshell}, for an equation of state $p \equiv w \sigma$, gives $\sigma(R) \propto R^{-2(1+w)}$. For $w=1/2$ (traceless) shells $\sigma(R)R^3 \propto (\sigma A)R= ER$ is conserved. Indeed, evaluating Eq.~\eqref{eq:potentialformulation} at $r_0=3m$ and inserting it into Eq.~\eqref{eq:ratiotoredshift} gives $\sigma_0 = 2 p_0$. Taking $R d\phi = d\tau$ gives the null geodesic equation.}

\subsection{Thermal Stability in anti-de Sitter Space}\label{sec:thermAdS}

The regions of mechanical and thermodynamic stability do not coincide in flat space; they only intersect at $r_0=3M$. However, we find that in AdS the regions overlap. To avoid ambiguity in this section, we write $C_{r_0}$ instead of $C_R$, which is the standard notation. 

To repeat the York computation for the specific heat $C_{r_0}$ in AdS, we write:
\begin{equation}
    T\left(r_0\right)=T_H(m, L) \sqrt{\frac{f_{\mathrm{AdS} }\left(r_0\right)}{f(r_0)}}.
\end{equation}
Here $f_{\mathrm{AdS}}=1+\frac{r_0^2}{L^2}$ and $f=1+\frac{r_0^2}{L^2}-\frac{2 m}{r_0}$. For convenience we define: 
\begin{equation}
    g\left(r_0\right) \equiv \sqrt{\frac{f_{\mathrm{AdS}}\left(r_0\right)}{f\left(r_0\right)}}, \quad \Rightarrow \quad T\left(r_0\right)=T_H g\left(r_0\right) .
\end{equation}
The specific heat within a cavity at some $r_0$ is: 
\begin{equation}
    C_{r_0} \equiv\left(\frac{d E}{d T}\right)_{r_0}=\frac{(d E / d m)_{r_0}}{(d T / d m)_{r_0}}.
\end{equation}
Following York \cite{York:1986it}, the thermodynamic energy is: 
\begin{equation}
    E\left(r_0\right)=r_0 f_{\mathrm{AdS}}\left(r_0\right)\left(1-\sqrt{\frac{f\left(r_0\right)}{f_{\mathrm{AdS}}\left(r_0\right)}}\right).
\end{equation}

A straightforward computation gives  $\left(dE/dm\right)_{r_0}=g\left(r_0\right)$. This useful simplification gives an expression for $C_{r_0}$ where $g$ cancels:
\begin{equation}
    \frac{d T}{d m}=g \frac{d T_H}{d m}+T_H \frac{d g}{d m}=g(r_0)\left(\frac{d T_H}{d m}+\frac{T_H}{r_0 f\left(r_0\right)}\right) \implies C_{r_0}^{-1}={\frac{d T_H}{d m}+\frac{T_H}{r_0 f\left(r_0\right)}}.
     \label{eq:general_specific_heat}
\end{equation}

A more general discussion can be found in Sec.~\ref{sec:SpecificHeatPS}, where we note the definition of $E(r_0)$ is unimportant for this analysis. The specific heat changes sign at the pole: 
\begin{equation}
    r_0 f\left(r_0\right)=-\frac{T_H}{d T_H / d m} \implies r_0+\frac{r_0^3}{L^2}-2 m=-\frac{T_H(m, L)}{d T_H / d m}.
    \label{eq:polelocation}
\end{equation}
For instance, for AdS Schwarzschild black holes:
\begin{equation}
    m=\frac{r_{+}}{2}\left(1+\frac{r_{+}^2}{L^2}\right), \quad T_H=\frac{f'(r_+)}{4 \pi} =\frac{1}{4 \pi r_{+}}\left(1+\frac{3 r_{+}^2}{L^2}\right) .
\end{equation}
This gives, for the first term in $C_{r_0}^{-1}$:
\begin{equation}
    \frac{d T_H}{d m}=\frac{d T_H / d r_{+}}{d m / d r_{+}}=\frac{\left(\frac{3}{L^2}-\frac{1}{r_{+}^2}\right)}{2 \pi\left(1+\frac{3 r_{+}^2}{L^2}\right)} .
\end{equation}
The first term in $C_{r_0}^{-1}$ reflects the well-known spinodal transition, where the specific heat diverges at $r_+ = \frac{L}{\sqrt{3}}$. The finite distance computation gave the additional second term, which shifts the location of the pole in $C_{r_0}$. 

This can be understood as follows. The specific heat changes sign only on the small black hole branch, where $\frac{dT_H}{dm}<0$. In particular:
\begin{equation}
    r_0 f\left(r_0\right)=r_0+\frac{r_0^3}{L^2}-2 m, \quad \frac{d}{d r_0}\left(r_0 f\left(r_0\right)\right)=1+\frac{3 r_0^2}{L^2}>0.
    \label{eq:psigmaratioAdS}
\end{equation}
The LHS is strictly increasing in $r_0$, so there can be at most one radius $r_0^*$ where the specific heat changes sign for fixed $(m,L)$.\footnote{This mirrors the mechanical stability analysis in the next section, because the $A(x) = (1 + 3x^2)^2$ terms in $V''(1)$ are sensitive to: 
\begin{equation}
    \frac{p_0}{\sigma_0}=-\frac{1}{4}+\frac{1+3 x^2}{4 \alpha_{+} \alpha_{-}}, \quad x=\frac{r_0}{L}.
\end{equation}
That is, the $r^2/L^2$ term enhances the shell's restoring force and enlarges the mechanically stable region.} 

This can be seen in more detail as follows. In flat space the specific heat  changes sign at $r_0 = 3m$, which can be seen by asymptotically expanding at large $L$: 
\begin{equation}
    r_+(m) = 2m - \frac{8m ^3}{L^2}+\frac{96 m^5}{L^4} + O\left(\frac{1}{L^6}\right).
\end{equation}
Inserting this into the Hawking temperature $T_H(r_+) = \frac{1}{4 \pi r_{+}}\left(1+\frac{3 r_{+}^2}{L^2}\right)$ and once again expanding at large $L$ yields: 
\begin{equation}
    T_H(m) = \frac{1}{8 \pi m} + \frac{2 m}{L^2 \pi} - \frac{10 m^3}{L^4 \pi}+ O\left(\frac{1}{L^6}\right)
\end{equation}
Now we solve asymptotically at large $L$ for the location of the pole in Eq.~\eqref{eq:polelocation}, which corresponds to the small black hole branch. This gives: 
\begin{equation}
   -\frac{T_H(m, L)}{d T_H / d m} = m + \frac{32 m^3}{L^2}  + \frac{192 m^5}{L^4} + O\left( \frac{1}{L^6}\right).
\end{equation}
Finally, solving for the location of the pole and expanding again in large $L$ gives: 
\begin{equation}
    \left.r_0\right|_{C_{r_0}^{-1}=0} = 3m + \frac{5 m^3}{L^2} + \frac{57 m^5}{L^4} + O\left( \frac{1}{L^6}\right).
\end{equation}
In conclusion then, in A(dS) the thermodynamically stable region sits outside (inside) the photon sphere. This procedure can be easily carried out to all orders. 

To summarize, the specific heat $C_{r_0}$ changes sign at $r_0$ \textit{outside} the photon sphere on the small black hole branch. As we approach the spinodal point, at $r_+ \rightarrow {L}/{\sqrt{3}}$, the location where the specific heat changes sign runs off to asymptotic infinity. Indeed, for large black holes the specific heat is positive for all locations of the cavity wall $r_0$. 

More generally, the specific heat at the photon sphere is inversely proportional to the cosmological constant in all dimensions; see Sec.~\ref{sec:SpecificHeatPS}.

\subsection{Joint Stability of Shells at the Photon Sphere of AdS Black Holes}\label{sec:mechAdS}

As reviewed in Sec.~\ref{sec:BLP}, mechanical stability of an infinitesimally light, traceless shell around a black hole in asymptotically flat spacetime is \textit{marginal}
at $r=3M$ when the squared speed of sound is luminal: $\lambda=1$. We now show that in AdS spacetimes the regions of \textit{strict} mechanical and thermodynamic stability overlap at the photon sphere. 

Assuming the AdS radius $L$ takes the same value on each side of the shell, we will see such shells are conditionally stable for $\lambda \in (\frac{1}{2}, 1)$. Defining $x \equiv r_0/L$:
\begin{equation}
    \frac{m}{r_0} = \frac{1+x^2-\alpha_{-}^2}{2} \quad \quad \frac{M}{r_0} = \frac{1+x^2-\alpha_+^2}{2}.
\end{equation}
Since $\alpha_{\pm}$ are not bounded above by 1, as in flat space, for convenience we rescale:
\begin{equation}\label{eq:alphatilde}
    \alpha_{\pm} \equiv \tilde\alpha_{\pm} \sqrt{1 + x^2}, \qquad \tilde{\alpha}^2=1-\frac{2 m_{\pm} / r_0}{1+x^2},
\end{equation}
with $m_-=m$ and $m_+=M$. Inserting this into Eq.~\eqref{eq:Vstabcon}, a calculation gives:
\begin{equation}
    V^{\prime \prime}(1)=\tfrac{3}{2}(1+4 \lambda)\left(1+x^2\right) \tilde{\alpha}_{+} \tilde{\alpha}_{-}+2 \lambda+6(1+\lambda) x^2-\tfrac{\left(1+3 x^2\right)^2}{2\left(1+x^2\right)}\left(\tfrac{1}{\tilde{\alpha}_{+}^2}+\tfrac{1}{\tilde{\alpha}_{-}^2}+\tfrac{1}{\tilde{\alpha}_{+} \tilde{\alpha}_{-}}\right) .
    \label{eq:VppAdS}
\end{equation}
Stability needs $V''(1)\ge 0$, which is marginal when $V''(1)=0$. It follows from $\tilde{\alpha}_{\pm}\le 1$ that:
\begin{equation}\label{eq:AdSeasybound}
    V^{\prime \prime}(1) \leq 12\left( \lambda - \frac{1}{2}\right)x^2 + 8 \lambda + 6\left( 1 - \frac{1}{1+x^2}\right).
\end{equation}
For large $x \equiv r_0/L$ shells are unstable unless $\lambda \ge 1/2$. In the opposite limit $x \rightarrow 0$, taking $\tilde{\alpha}_+=\tilde{\alpha}_-=\tilde{\alpha}$ and solving for $V''(1)=0$ in Eq.~\eqref{eq:Vstabcon} recovers the known result \cite{Brady:1991np}: 
\begin{equation}
    \tilde{\alpha}(\lambda)=\left(\frac{\sqrt{4 \lambda^2+36 \lambda+9}-2 \lambda}{3(4 \lambda+1)}\right)^{1 / 2}
+O(x^2).
\end{equation}
At asymptotic infinity every static shell is traceless: $\tilde{\alpha}^2_\pm=1-\frac{2 m_{ \pm} / r_0}{1+x^2} \simeq 1 - \frac{2 m_{ \pm} L^2}{r_0^3}$, so $p_0/\sigma_0 \rightarrow 1/2$ and $\tilde{\alpha}_+=\tilde{\alpha}_-$. Using Eq.~\eqref{eq:staticjunctionbh}, the density of a shell at infinity reads:
\begin{equation}
\sigma_0=\frac{M-m}{2 \pi r_0^2\left(\alpha_{-}+\alpha_{+}\right)} \simeq \frac{M-m}{4 \pi r_0^2} \frac{L}{r_0},
\end{equation}
consistent with that of a conformal fluid ($\sigma_0 \propto r_0^{-3}$). In fact the AdS boundary is a photon sphere, in the sense that it satisfies $rf'/2f=1$. Using Eq.~\eqref{eq:AdSeasybound}, such shells are stable under perturbations when $c_s^2\ge 1/2$, and remain conformal for $c_s^2=1/2$.

In any case, it follows from the IJCs, see Eq.~\eqref{eq:ratioptosigma}, that: 
\begin{equation}
    \frac{p_0}{\sigma_0}=-\frac{1}{4}+\frac{1+3 x^2}{4 \alpha_{+} \alpha_{-}} = -\frac{1}{4} + \frac{(1+3x^2)}{4(1+x^2) \tilde{\alpha}_+ \tilde{\alpha}_-}.
    \label{eq:ratiotoredshift}
\end{equation}
This is enough to repeat the stability analysis of BLP in AdS; see Fig.~\ref{fig:BLPExtendedAdS}. As in flat space, the $\sigma_0 = 2 p_0$ curve intersects the photon sphere in the infinitesimally light shell limit.

\begin{figure}
    \centering
    \includegraphics[width=\linewidth]{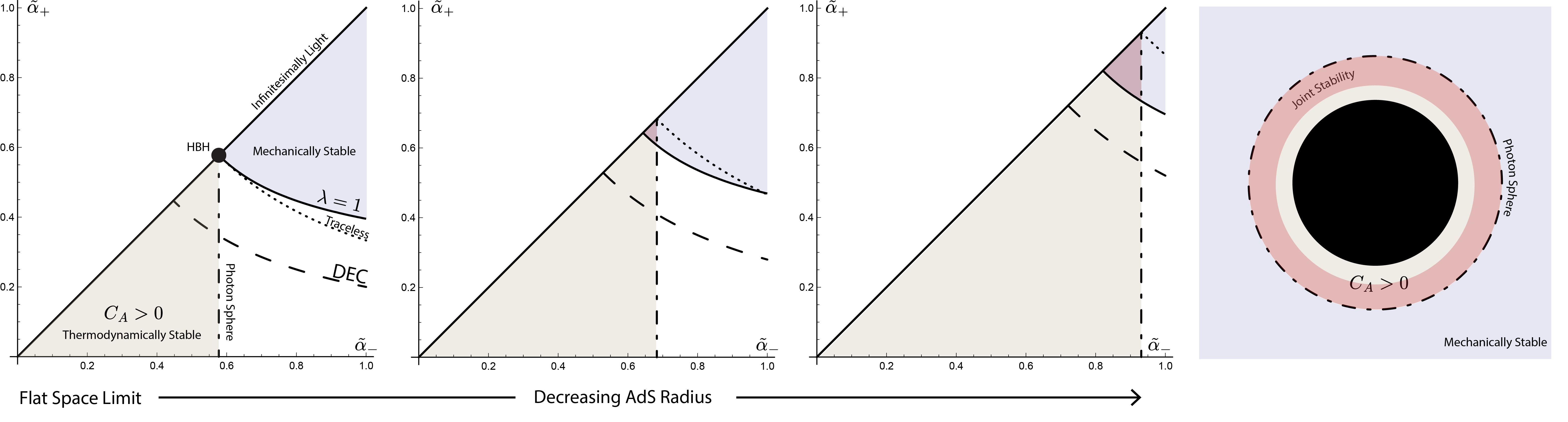}
    \caption{ Overlapping regions of mechanical and thermodynamic stability for shells around small black holes satisfying $m \ll L_{AdS}$. The thermodynamically stable region grows with $r_+$. The flat-space limit recovers Fig.~\ref{fig:BLP}. Shells are \textit{stable} at the photon sphere, where these regions now overlap. For large black holes every location outside the horizon is thermodynamically stable, as shown in Sec.~\ref{sec:SpecificHeatPS}.  The mechanically stable region grows inward, approaching $1.61 \, r_+$, in the limit $L_{AdS} \rightarrow 0$.}
    \label{fig:BLPExtendedAdS}
\end{figure}

An infinitesimally light shell at the photon sphere is mechanically stable in AdS. A straightforward computation shows that a stable light shell, where $\tilde{\alpha}\equiv \tilde{\alpha}_+=\tilde{\alpha}_-$, satisfies:
\begin{equation}\label{eq:soundspeedAdS}
    \lambda \equiv c_s^2 \ge\frac{3\left(1+3 x^2\right)^2-12 x^2\left(1+x^2\right) \tilde{\alpha}^2-3\left(1+x^2\right)^2 \tilde{\alpha}^4}{4\left(1+x^2\right)\left(1+3 x^2\right) \tilde{\alpha}^2+12\left(1+x^2\right)^2 \tilde{\alpha}^4}, \quad \tilde{\alpha}^2=1-\frac{2 m / r_0}{1+x^2} .
\end{equation}
Note the critical sound speed $\lambda_c$ depends on $x$. At the photon sphere of the black hole:
\begin{equation}\label{eq:lightshellads}
    \lambda_c= \frac{1}{2} + \frac{1}{2(1+3x^2)}
   , \quad \tilde{\alpha}^2=\frac{1+3 x^2}{3\left(1+x^2\right)}, \quad x = 3m/L,
\end{equation}
with stability for $\lambda > \lambda_c$ (marginal for $\lambda=\lambda_c$), ranging from $1/2$ (small $L$) to $1$ (large $L$). For $\lambda=1$ the shell is mechanically stable at the photon sphere: $V''(1)=54m^2/L^2$. For small $L$ at the photon sphere, Eq.~\eqref{eq:lightshellads} saturates the sound speed of a (2+1)-dimensional CFT; the pattern  $\lambda_c \equiv c_s^2 = 1/(d-1)$ holds at general $d$. Note that, unlike in flat space, sound waves may causally propagate on a shell inside the photon sphere of the black hole.

At small AdS curvature one recovers $\tilde{\alpha}^2=1/3$, consistent with \cite{Brady:1991np}. Dialing up a negative cosmological constant, the mechanically stable region extends inward from the photon sphere without quite reaching the horizon. For an infinitesimally light shell, Eq.~\eqref{eq:soundspeedAdS} at $\lambda=1$ as $x \rightarrow \infty$ gives $\tilde{\alpha}^2 = \left( \sqrt{61}-4\right)/5$. Taking the planar limit for $\tilde{\alpha}^2 \equiv f(r_0)/(1+x^2)$, which gives $1-(r_+/r_0)^3$, demonstrates instability within $[5 /(9-\sqrt{61})]^{1 / 3} r_{+} \approx 1.61 \, r_{+}$.

Finally, one can show that $V''(1)$ (Eq.~\eqref{eq:VppAdS}) increases monotonically in $r_0$ when $c_s^2 \ge 1/2$; in this case, thin shells are always stable outside some $r_0$. However, if $c_s^2 < 1/2$ there will be at most a window of stability where shells are mechanically stable. The window of stability for $c_s^2<1/2$ vanishes at infinite AdS curvature because of Eq.~\eqref{eq:AdSeasybound}.

\subsection{The Specific Heat at the Photon Sphere is Inversely Proportional to $-\Lambda$}
\label{sec:SpecificHeatPS}

Here we show the following general result, which holds in all dimensions where photon spheres exist ($n \ge 2$). The inverse specific heat $C_{r_0}^{-1}$ at the photon sphere of a black hole is proportional to (minus) the cosmological constant $\Lambda$. 

This fact is independent of the definition of $E$, provided $E$ is monotonic in $\mu$, because:
\begin{equation}
C_{r_0}^{-1}=\left.\frac{d T}{d E}\right|_{r_0}=\left.\frac{d T / d \mu}{d E / d \mu}\right|_{r_0} \propto \left. \frac{dT}{d\mu} \right|_{r_0}.
\end{equation}
To determine where $\left.\frac{d T}{d \mu}\right|_{r_0}$ changes sign, write the local temperature in terms of the surface gravity $\kappa \equiv \frac{f'(r_+)}{2}$ and calculate:
\begin{equation}
    T\left(r_0\right)=\frac{\kappa}{2 \pi \sqrt{f\left(r_0\right)}} \implies \left.\frac{d \ln T}{d \mu}\right|_{r_0}=\frac{d \ln \kappa}{d \mu}-\left.\frac{1}{2} \frac{1}{f\left(r_0\right)} \frac{\partial f\left(r_0\right)}{\partial \mu}\right|_{r_0}.
\end{equation}
For general Schwarzschild-Tangherlini black holes where $f(r) = 1 + \epsilon \frac{r^2}{L^2} - \frac{\mu}{r^{n-1}}$, the $r$ and $\mu$ derivatives are related by $r \partial_r f=-(n-1) \mu \partial_\mu f+2 \epsilon \frac{r^2}{L^2}$. At the photon sphere, this gives:
\begin{equation}
    \left.\frac{d \ln T}{d \mu}\right|_{r_{\mathrm{ps}}}=\frac{d \ln \kappa}{d \mu}+\frac{1}{(n+1) \mu f\left(r_{\mathrm{ps}}\right)}.
    \label{eq:generalsh}
\end{equation}
Note that $\mu=r_h^{n-1}\left(1+\epsilon \frac{r_h^2}{L^2}\right)$, where $r_{\mathrm{ps}}^{n-1}=\frac{n+1}{2} \mu$. This gives:
\begin{equation}
    \kappa=\frac{1}{2 r_h}\left[(n-1)+\epsilon(n+1) \frac{r_h^2}{L^2}\right] \quad \quad f_{\mathrm{ps}}=f\left(r_{\mathrm{ps}}\right)=\frac{n-1}{n+1}+\epsilon \frac{r_{\mathrm{ps}}^2}{L^2}.
\end{equation}
For later use, it is convenient to define the ratio: 
\begin{equation}
    \left(\frac{r_{\mathrm{ps}}}{r_h}\right)^{n-1}=\frac{n+1}{2}\left(1+\epsilon \frac{r_h^2}{L^2}\right) \equiv Q .
    \label{eq:photonsphererelation}
\end{equation}
Then we obtain:
\begin{equation}
    \left.\mu \frac{d \log T}{d \mu}\right|_{r_{\mathrm{ps}}}=\frac{\left(1+\epsilon \frac{r_h^2}{L^2}\right)\left[\epsilon(n+1) \frac{r_h^2}{L^2}-(n-1)\right]}{\left[(n-1)+\epsilon(n+1) \frac{r_h^2}{L^2}\right]^2}+\frac{1}{(n-1)+\epsilon(n+1) \frac{r_{\mathrm{ps}}^2}{L^2}}.
    \label{eq:specificheatpseqn}
\end{equation}
In asymptotically flat spacetime, $\epsilon=0$ and these terms cancel exactly. We now consider AdS ($\epsilon=1)$ and dS $(\epsilon=-1)$ in turn, pausing to consider how certain features are reflected in their mirror arguments. 

In AdS $(\epsilon=1)$ the first term in Eq.~\eqref{eq:specificheatpseqn} changes sign at the spinodal point, where $ \frac{r_h^2}{L^2}=\frac{n-1}{n+1}$ and the two black hole branches merge. Both terms are manifestly positive on the large black hole branch. 
On the small black hole branch, we need to show that:
\begin{equation}
    \left(\frac{(n-1)+(n+1) \frac{r_h^2}{L^2}}{(n-1)+(n+1) \frac{r_{\mathrm{ps}}^2}{L^2}}\right)>\left(1+\frac{r_h^2}{L^2}\right)\frac{(n-1)-(n+1) \frac{r_h^2}{L^2}}{(n-1)+(n+1) \frac{r_h^2}{L^2}} .
\end{equation}
For this purpose it is convenient to rewrite our expressions in terms of $Q$ using Eq.~\eqref{eq:photonsphererelation}:
\begin{equation}
    \left.\mu \frac{d \log T}{d \mu}\right|_{r_{\mathrm{ps}}}=\frac{Q(Q-n)}{(n+1)(Q-1)^2}+\frac{1}{(n-1)+(2 Q-n-1) Q^{2 /(n-1)}}.
    \label{eq:specificheatcondition}
\end{equation}
In this language the spinodal transition in AdS occurs at $Q = n$. Both terms are manifestly positive on the large black hole branch. On the small black hole branch, we need:
\begin{equation}
    n>Q>\frac{n+1}{2}, \quad  \quad (2 Q-n-1)\left[n Q-1-(n-Q) Q^{\frac{n+1}{n-1}}\right]>0 .
    \label{eq:nQeqn}
\end{equation}
The first factor in Eq.~\eqref{eq:nQeqn} is positive because $Q>(n+1)/2$ on the small black hole branch. The second factor is also positive; to see this, call the second term $H(Q)$: 
\begin{equation}
    H(Q)\equiv n Q-1-(n-Q) Q^{\frac{n+1}{n-1}}, \quad \quad H\left(\frac{n+1}{2}\right) > 0 \quad \text{for} \quad n \ge 2.
\end{equation}
To establish proportionality later, we will use the fact that $H(Q)>0$; this is shown below for both AdS and dS. Note that $H(Q)>0$ on the small black hole branch, since:
\begin{equation}
    H^{\prime}(Q)=n+Q^{\frac{2}{n-1}} \left(\frac{2n(Q-\frac{n+1}{2})}{n-1}\right)>0.
\end{equation}

In dS $(\epsilon=-1)$ the black hole horizon coincides with the cosmological horizon in the Nariai limit, which occurs when $\frac{r_h^2}{L^2}=\frac{n-1}{n+1}$ and $Q=1$. The relation has mirror interpretations in AdS and dS. In AdS this is the spinodal point, while in dS it is the Nariai limit. We are not aware of this parallel being emphasized before in this form.\footnote{Curiously, if the analytic continuation of the AdS Hawking-Page transition in Eq.~\eqref{eq:specificheatpseqn} at $r_h=L$ occurred on a physical branch, it would give $Q = 0$.} 

In dS the AdS argument is essentially mirrored with signs reversed. Black holes below the Nariai limit satisfy $1 < Q < \frac{n+1}{2}$; by similar reasoning we need to show: 
\begin{equation}
    \frac{(n-1)-(n+1) \frac{r_h^2}{L^2}}{(n-1)-(n+1) \frac{r_{\mathrm{ps}}^2}{L^2}}<\left(1-\frac{r_h^2}{L^2}\right) \frac{(n-1)+(n+1) \frac{r_h^2}{L^2}}{(n-1)-(n+1) \frac{r_h^2}{L^2}}.
\end{equation}
In other words: 
\begin{equation}
    1<Q<\frac{n+1}{2}, \quad(2 Q-n-1)\left[n Q-1-(n-Q) Q^{\frac{n+1}{n-1}}\right]<0 .
\end{equation}
The first factor is manifestly negative in dS because $1<Q<(n+1)/2$. The second factor is the same $H(Q)$ used in the AdS case. 

We now show that $H'(Q)$ has a single root at $Q=1$; we will see that $H(Q)>0$. When photon spheres exist, $n\ge 2$, and so $Q>1$ in both AdS and dS.  

It is helpful to write $H'(Q)$ as:
\begin{equation}
    H^{\prime}(Q)=n\left[1-\frac{(n+1-2 Q) Q^{2 /(n-1)}}{n-1}\right] \equiv n \left[ 1-\frac{G(Q)}{n-1}{}\right].
\end{equation}
Note that $G(1)=n-1$. Additionally:
\begin{equation}
    G^{\prime}(Q)=\frac{2(n+1)}{n-1} Q^{2 /(n-1)-1}(1-Q)<0.
\end{equation}
Then for $Q>1$ we have $G(Q)<n-1$ and $H'(Q)>0$. It follows that $H(Q)>0$ because $H(1)=0$. We conclude that $H(Q)>0$ in both AdS and dS; $Q = \frac{n+1}{2}$ is precisely the flat space limit where $C_R^{-1}=0$. 

Meanwhile, it follows from Eq.~\eqref{eq:generalsh} and Eq.~\eqref{eq:photonsphererelation} that:
\begin{equation}
    (n-1)+(2 Q-n-1) Q^{2 /(n-1)}=(n+1) f\left(r_\mathrm{ps}\right).
\end{equation}
It follows from Eq.~\eqref{eq:photonsphererelation} that $2 Q-n-1=-\frac{2 \Lambda r_h^2}{n}$. Then Eq.~\eqref{eq:specificheatcondition} can be written as:
\begin{equation}\label{eq:PSspecificheatlambda}
    \boxed{\left.\mu \frac{\partial \log T}{\partial \mu}\right|_{r_{\mathrm{ps}}}=\frac{(2 Q-n-1) H(Q)}{(n+1)(Q-1)^2 (n+1) f\left(r_{\mathrm{ps}}\right)} \propto -\Lambda.}
\end{equation}
Then the inverse specific heat $C_{r_0}^{-1}$ at the photon sphere is proportional to $-\Lambda$, as claimed.\footnote{The inverse specific heat vanishes at the photon sphere in asymptotically flat spacetime \cite{York:1986it,Andre:2021ctu}.} 

\section{Self-Gravitating Radiation and the Photon Sphere}\label{sec:ExtendedPhotonSpheres}

Here we present our main arguments. First we derive the anisotropic TOV equation, clarifying the role of the photon sphere, in Sec.~\ref{sec:anisotropictov}. In Sec.~\ref{sec:emergentisraellayers} we show that Israel layers form in the limit where the radial pressure goes to zero. We consider their thermodynamics using entropy maximization in Sec.~\ref{sec:entropyfixedpoint}. See Kim and Lee \cite{Kim:2019ygw} for previous work, which found scaling relations using a temperature independent entropy maximization method.

In Sec.~\ref{sec:hillingarcontinuumlimit}, we find an exception to the constraints reviewed in Sec.~\ref{sec:obstruction}: the HBH \cite{Riojas:2026ytt,Riojas:2026mfu}. At zero radial pressure, the self-similar fixed point in SWZ's analysis (Fig.~\ref{fig:SWZfig}) becomes the line of fixed points in Fig.~\ref{fig:IJCTOV}. The solution curve ($C$ in Fig.~\ref{fig:SWZfig}) crosses these fixed points in locations which we show are equivalent to the IJCs. If this line is crossed once for traceless matter, it gives an extended photon sphere realizing the thought experiment in Fig.~\ref{fig:thoughtexperiment}. Each marginally stable layer of the HBH ocean straddles the interface between the mechanically and thermodynamically stable regions in Fig.~\ref{fig:BLP}. 

In Sec.~\ref{sec:ascoldasablackhole}, we show the HBH is the coldest stable nested shell configuration around a black hole; the Hawking temperature is unchanged at fixed total mass. In Sec.~\ref{sec:heatcapandtemp}, we show the following general result: at fixed total mass, adding a shell around a black hole decreases the asymptotic temperature iff the local specific heat at the shell is \textit{positive}. This gives a simple interpretation of the result in Sec.~\ref{sec:ascoldasablackhole}: shells at the photon sphere leave the asymptotic temperature unchanged, because $\Lambda=0$ there (see Sec.~\ref{sec:SpecificHeatPS}), making the resulting configuration (e.g. the HBH) precisely as cold as a black hole of mass $M$. 

In Sec.~\ref{sec:coarsegrainedIJC}, an alternative approach to black hole thermodynamics is developed; instead of the Euclidean path integral, it makes direct use of shells and anisotropic thermodynamic relations. We show that an HBH in thermal equilibrium has the same coarse-grained entropy and the same temperature as a black hole of mass $M$. These independent methods further verify the key results in \cite{Riojas:2026ytt,Riojas:2026mfu}, where we used the Euclidean path integral and entropy maximization methods. 

Finally, we determine when Einstein's equations allow extended photon spheres in Sec.~\ref{sec:beyondnested}, and find thermodynamic mimicry requires a similar constraint. We show a self-similar family of solutions, including ``frozen stars" \cite{Brustein:2018web,Brustein:2021lnr,Brustein:2023hic} and ``stiffest stars" \cite{Zeldovich:1961sbr,Banks:2002fj} can satisfy $S=4\pi M^2$. Thermodynamic mimicry holds for these solutions when they have massless walls, which generically violate DEC. Walls can break optical mimicry, and more reasonable massive walls break thermodynamic mimicry. 

In contrast, the HBH does not need walls to hold its orbiting massless particles. This luminal ``Einstein cluster" \cite{Einstein:1939ms,Boehmer:2007az,Herrera:1997plx,1974RSPSA.337..529F,Cardoso:2021wlq,Maeda:2024tsg} satisfies the usual energy conditions.

\subsection{The Anisotropic Tolman-Oppenheimer-Volkoff (TOV) Equation}\label{sec:anisotropictov}

Here we review the well-known anisotropic Tolman-Oppenheimer-Volkoff equation, from a perspective that makes the structure related to the photon sphere manifest. 

Take a static and spherically symmetric metric ansatz: 

\begin{equation}
    ds^2 = -f(r) dt^2 + \frac{1}{j(r)} dr^2 + r^2 d\Omega^2,
    \label{eq:metricform}
\end{equation} 
as well as a spherically symmetric, anisotropic stress-energy tensor: 
\begin{equation}
    T_{\nu}^\mu = {\rm diag}(-\rho, P_r, P_\perp, P_\perp) \quad \quad G^\mu_\nu+\Lambda \delta^\mu_\nu=8 \pi G T^\mu_\nu \ .
\label{eq:mixedeinsteinequation}
\end{equation}
The (rr) and (tt) components of Einstein's equation are:
\begin{equation}
    \frac{f'(r)j(r)}{r  f(r)}+\frac{j(r)}{r^2 }-\frac{1}{r^2}=8 \pi  P_r -\Lambda,
    \label{eq:EErrcomponent}
\end{equation}
\begin{equation}
  -\frac{j^{\prime}(r)}{ r}+\frac{1-j(r)}{r^2}=  \frac{1}{r^2}\frac{d}{d r}\left[r\left(1-j(r)\right)\right]=8 \pi \rho + \Lambda \ .
    \label{eq:EEttcomponent}
\end{equation}
The (tt) component Eq.~\eqref{eq:EEttcomponent} can be integrated directly:
\begin{equation}
    j(r) = 1-\frac{2 \widehat{m}(r)}{r}-\frac{\Lambda r^2}{3} \quad \quad \widehat{m}^{\prime}(r)=4 \pi r^2 \rho(r).
   \label{eq:EEttinteg}
\end{equation}
Combined with  the (rr) component Eq.~\eqref{eq:EErrcomponent}, this yields:
\begin{equation}
    \frac{r f'(r)}{2 f(r)}=\frac{\widehat{m}(r)+4 \pi  r^3 P_r-\frac{1}{3}\Lambda  r^3}{r-2 \widehat{m}(r)-\frac{1}{3}\Lambda  r^3}.
    \label{eq:EErrwithj(r)}
\end{equation}
The Bianchi identity  $\nabla^\mu T_{\mu \nu}=0$ gives the condition of hydrostatic equilibrium: 
\begin{equation}
    P_r^{\prime}(r)=-\frac{f^{\prime}(r)\left(P_r(r)+\rho(r)\right)}{2 f(r)}-\frac{2\left(P_r(r)-P_{\perp}(r)\right)}{r} .
    \label{eq:hydrostaticequilibrium}
\end{equation}
Using Eq.~\eqref{eq:EErrwithj(r)} and Eq.~\eqref{eq:hydrostaticequilibrium}, the anisotropic TOV equation takes the form:
\begin{equation}
    P_r^{\prime}(r)=-\left(P_r(r)+\rho(r)\right) \frac{\widehat{m}(r)+r^3\left(4 \pi P_r(r)- \frac{1}{3}\Lambda\right)}{r(r-2 \widehat{m}(r)-\frac{1}{3}\Lambda r^3)}-\frac{2\left(P_r(r)-P_{\perp}(r)\right)}{r}.
    \label{eq:anisotropic_TOV}
\end{equation}
It is worth noting that the photon sphere condition $rf'/2f=1$ appears in these equations. The fact that Eq.~\eqref{eq:anisotropic_TOV} becomes simple when $P_r'=P_r=0$ will be useful for our purposes.

\subsection{Shells of Anisotropic Self-Gravitating Radiation}\label{sec:emergentisraellayers}

Sorkin, Wald, and Zhang (SWZ) argued \cite{Sorkin:1981wd} that the TOV equation follows from an entropy maximization principle, by varying the action: 
\begin{equation}
    I=\int_0^{r_0}\left(\frac{1}{r^2} \frac{d m}{d r}\right)^{3 / 4}\left[1-\frac{2 m(r)}{r}\right]^{-1 / 2} r^2 d r.
\end{equation}
This can be straightforwardly generalized to anisotropic gases. The following argument, which is adapted from companion work \cite{Riojas:2026mfu}, begins by writing:
\begin{equation}
    I=\int s(\rho) \sqrt{h} d^3 x=4 \pi \int s(\rho)\left(1-\frac{2 m(r)}{r}\right)^{-1 / 2} r^2 d r.
\end{equation}
Varying the action and integrating by parts, with the usual assumption that $\delta m=0$ at the edges of the interval, and defining the temperature $(\partial s/\partial\rho)|_r = 1/T$, a straightforward calculation gives:
\begin{equation}
    \frac{T^{\prime}}{T}=\frac{r m'-m-4 \pi r^3(s T)}{r(r-2 m)}.
\end{equation}
Assuming that the anisotropic TOV equation holds, it \textit{follows} that:
\begin{equation}\label{eq:anisothermo}
    s T=\rho+P_r.
\end{equation}
Taking the radial derivative of $s T = P_r + \rho$, we find
\begin{equation}
    \frac{d P_r}{dr} + \left[ \frac{d \rho}{dr} - T \frac{ds}{dr}\right] = s \frac{dT}{dr}.
\end{equation}
This resembles the condition for energy conservation: 
\begin{equation}
    \frac{dP_r}{dr} + \left[\frac{2(P_r - P_\perp)}{r}\right] =-\frac{f^{\prime}(r)\left(P_r(r)+\rho(r)\right)}{2 f(r)}=s \frac{d T}{d r} .
\end{equation}
Combining these two equations and using
\begin{equation}
    \frac{ds}{dr}=\left(\frac{\partial s}{\partial \rho}\right)_r \frac{d \rho}{dr} + \left( \frac{\partial s}{\partial r}\right)_\rho = \frac{1}{T} \frac{d \rho}{dr} + \left( \frac{\partial s}{\partial r}\right)_\rho,
\end{equation}
gives the following thermodynamic equation:
\begin{equation}
    \frac{d \rho}{d r}-T \frac{d s}{d r} = -T\left( \frac{\partial s}{\partial r}\right)_\rho= \frac{2\left(P_{r}-P_\perp\right)}{r}\ .
    \label{eq:masterformulascaling}
\end{equation}
Note this implies that $s$ depends explicitly on $r$ only in anisotropic gases.  Also, since $(\partial \rho/\partial s)_r=T$, we obtain the differential relation: 
\begin{equation}\label{eq:firstlawaniso}
    d \rho = T ds - (P_\perp - P_r) d \log A, \quad \quad A \equiv 4 \pi r^2.
\end{equation}

These relations hold for spherically symmetric $P_r(r)$ and $P_\perp(r)$. The relations found by Kim and Lee \cite{Kim:2019ygw} in the special case $P_r=w_r \rho$, $P_\perp = w_\perp \rho$ where $w_r,w_\perp$ are constants, can be recovered from this approach. For additional discussion, see companion work \cite{Riojas:2026mfu}.

\subsubsection{TOV $=$ IJC at Zero Radial Pressure}

The anisotropic TOV equation reduces to the IJCs in the zero radial pressure limit.\footnote{We expect it is more generally true that setting one pressure component to zero gives an IJC for shells oriented along that normal direction. This will be revisited in an upcoming article.} We apply the SWZ analysis of the TOV equation to anisotropic radiation. 

We take a static, spherically symmetric metric:
\begin{equation}
    d s^2=-f(r) d t^2+\frac{d r^2}{1-2 m(r) / r}+r^2 d \Omega^2
\end{equation}
As in SWZ \cite{Sorkin:1981wd}, we take the definitions:
\begin{equation}
    \mu(r) \equiv \frac{m(r)}{r}, \quad q(r) \equiv 4 \pi r^2 \rho(r), \quad z=\ln r .
\end{equation}
Here $m(r)$ satisfies: 
\begin{equation}
    \frac{d m}{d r}=4 \pi r^2 \rho=q(r) \implies \frac{d \mu}{d z}=q-\mu.
\end{equation}
For anisotropic null gas, this reads:
\begin{equation}
    P_r=w_r \rho, \quad P_\perp=w_\perp \rho, \quad w_r+2 w_\perp=1.
\end{equation}
In these variables, the TOV equation in Eq.~\eqref{eq:anisotropic_TOV} becomes: 
\begin{equation}
    q^{\prime}(z)=\frac{-\left(1+w_r\right) q^2+q\left(\frac{2 w_\perp}{w_r}-\frac{1+4 w_\perp+w_r}{w_r} \mu\right)}{1-2 \mu}.
\end{equation}
This reduces to the problem studied by SWZ when the gas is isotropic, i.e. $w_r = w_\perp = 1/3$. For the null gas this simplifies to: 
\begin{equation}
    q^{\prime}(z)=-\frac{q\left[\mu\left(3-w_r\right)+w_r\left(1+w_r\right) q+\left(w_r-1\right)\right]}{w_r(1-2 \mu)}  
    \label{eq:anisotropicSWZTOV}
\end{equation}

To understand the behavior of these solutions it is useful to consider the curves where $q(\mu)$ becomes vertical and horizontal: the vertical and horizontal nullclines. When $q'(z)=0$ the trajectory $q(\mu)$ is horizontal; when $q(\mu) =\mu$ we have $\mu'(z)=0$ and the curve is vertical. Calling these curves $q_V$ and $q_H$: 
\begin{equation}
    q_H(\mu) \equiv \frac{2 w_\perp-\left(1+4 w_\perp+w_r\right) \mu}{\left(1+w_r\right) w_r}  \quad \quad q_V(\mu)\equiv\mu.
\end{equation}
These solutions $q(\mu)$ spiral around the fixed point, which occurs where they cross at:
\begin{equation}
    (\mu_*, q_*) \equiv \left(\frac{2 w_{\perp}}{\left(1+w_r\right)^2+4 w_{\perp}} , \frac{2 w_{\perp}}{\left(1+w_r\right)^2+4 w_{\perp}} \right).
\end{equation}
At large $r$, the mass function becomes linear, with $m(r) \sim \left(\frac{2 w_{\perp}}{\left(1+w_r\right)^2+4 w_{\perp}}\right)r$, as the curve in the $(\mu,q)$ plane approaches the fixed point at $(\mu_*,q_*)$.

As the radial pressure is decreased, solutions spiral violently as in Fig.~\ref{fig:spiraling} and Fig.~\ref{fig:spontaneouslayers}. In the limit where $w_r \rightarrow 0$, the horizontal nullcline $q_H$ becomes vertical and centered on:
\begin{equation}
    \mu = \frac{2 w_\perp}{1 + 4 w_\perp}.
\end{equation}

\begin{figure}[p!]
    \centering
    \includegraphics[width=\linewidth]{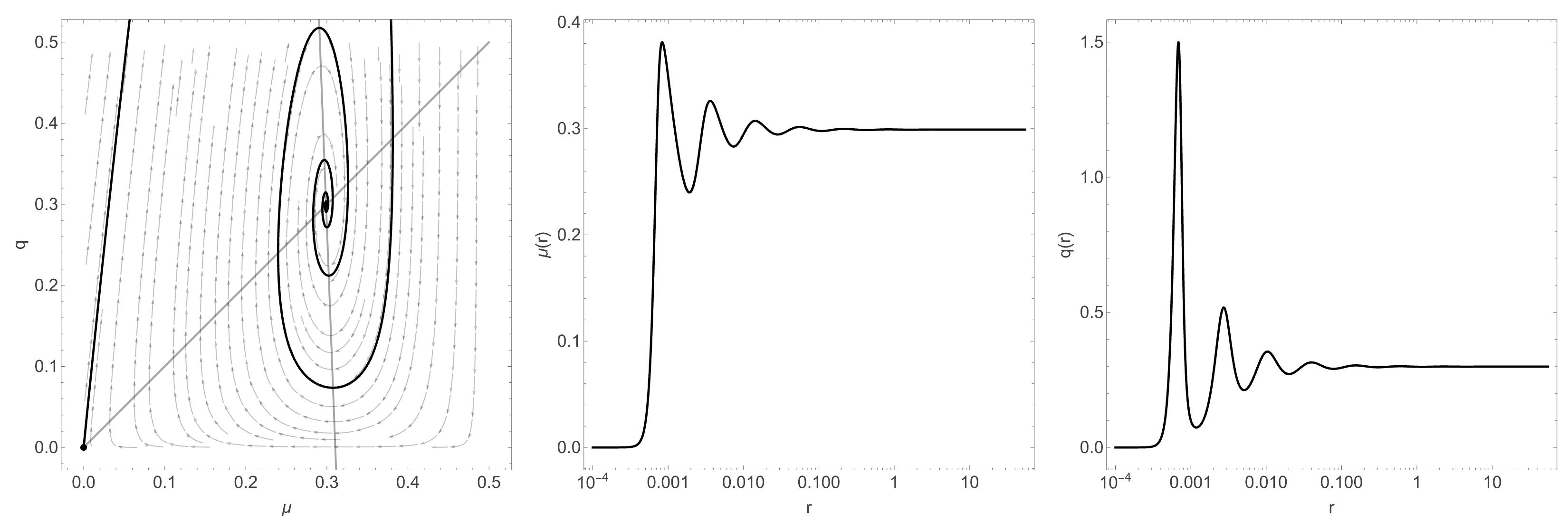}
    \caption{Self-gravitating gas solutions spiral around the fixed point as $w_r\rightarrow 0$. These are unstable beyond the first turning point in $\mu(r)$ \cite{Sorkin:1981jc,Sorkin:1981wd}. The existence of an infinite number of discrete non-singular layers in this limit has been noted by Andréasson \cite{Andreasson:2021lsh}.}
    \label{fig:spiraling}
\end{figure}
\begin{figure}[p!]
    \centering
    \includegraphics[width=1\linewidth]{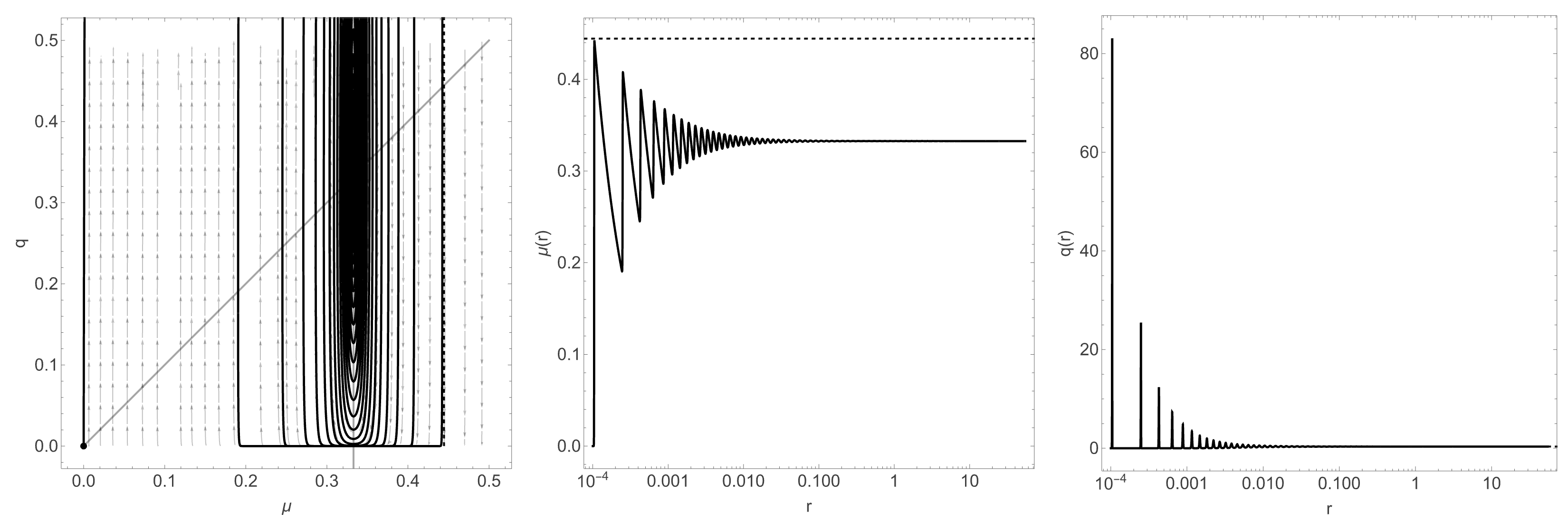}
    \caption{The self-gravitating anisotropic radiation spontaneously forms thin Israel layers as $w_r\rightarrow 0$. The streamlines that become vertical in this limit spiral around $\mu=1/3$. The parameter $\mu \equiv \frac{M}{R}$ saturates the Buchdahl-Andréasson bound \cite{Buchdahl:1959zz,Andreasson:2007ck}, pictured as a dotted line, in the first layer. The isolated fixed point at $(\mu,q) = \left(\frac{1}{3},\frac{1}{3}\right)$ is a line of fixed points at $w_r=0$, see Fig.~\ref{fig:Qmusystem}.}
    \label{fig:spontaneouslayers}
\end{figure}
\begin{figure}[p!]
    \centering
    \includegraphics[width=\linewidth]{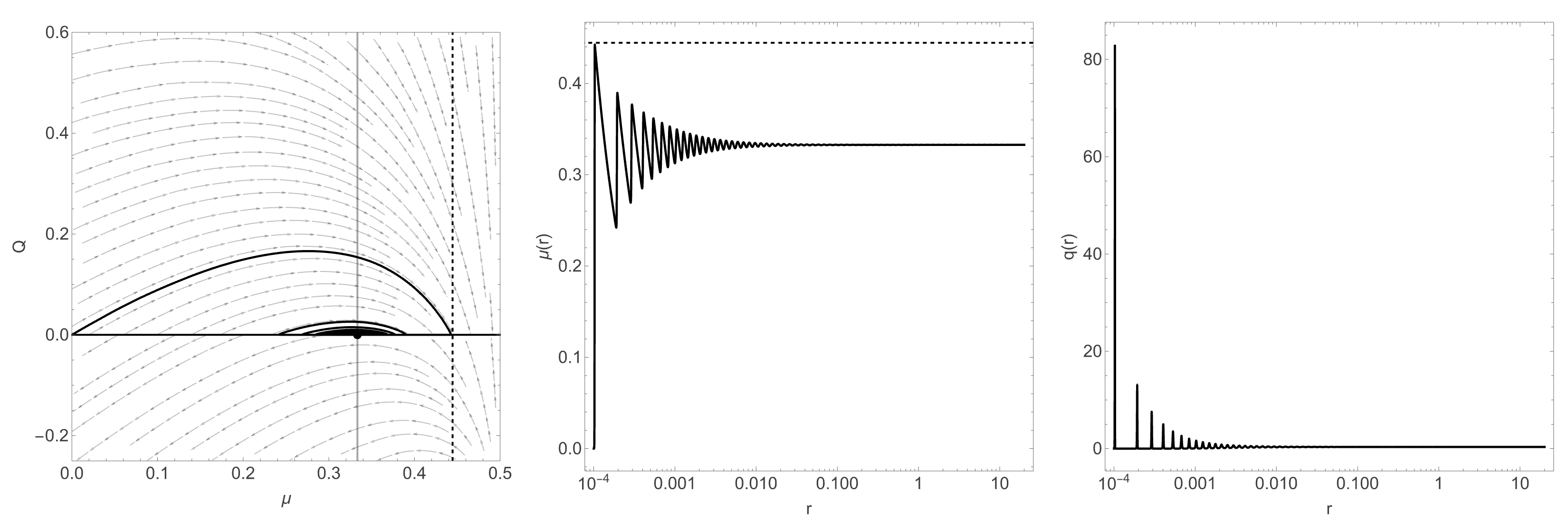}
        \caption{The solution in $(Q,\mu)$ coordinates. In the $\epsilon \rightarrow 0$ limit the isolated fixed point becomes a line of fixed points. The endpoints where the curve intersects the line of fixed points recover the IJCs. Consistent with mechanical stability, shells appear where $\mu\le 1/3$; $\mu\equiv M/R$ then jumps temporarily to $\mu \ge 1/3$, giving alternating stable and unstable light rings. }
    \label{fig:Qmusystem}
\end{figure}

To parse the degenerate limit $w_r \rightarrow 0$, it is convenient to introduce the variables: 
\begin{equation}
    \varepsilon \equiv w_r, \quad Q\equiv\varepsilon q, \quad \xi\equiv\frac{z}{\varepsilon} .
\end{equation}
The limit where $\varepsilon \rightarrow 0$ gives:
\begin{equation}
    \frac{d \mu}{d \xi}=Q, \quad \frac{d Q}{d \xi}=-\frac{Q(3 \mu+Q-1)}{1-2 \mu},  \quad  \frac{d Q}{d \mu}=\frac{(d Q / d \xi)}{(d \mu / d \xi)}=\frac{1-3 \mu-Q}{1-2 \mu} .
\end{equation}

In these $(Q,\mu)$ coordinates, the degenerate behavior observed at small $w_r$ becomes well-behaved; see Fig.~\ref{fig:Qmusystem}. This is solved by the family of solutions: 
\begin{equation}
    Q(\mu)=3 \mu-2+C \sqrt{1-2 \mu} .
\end{equation}
To give one example, consider the case that saturates the Buchdahl-Andréasson bound \cite{Buchdahl:1959zz,Andreasson:2007ck}. It is a traceless shell enclosing vacuum; with regularity at the origin $(\mu, q)=(0,0)$, we find $C=2$:
\begin{equation}
    Q(\mu)=3 \mu-2+2 \sqrt{1-2 \mu}.
\end{equation}
The curve for that shell ends when $Q(\mu_*)=0$,  which occurs precisely at the Buchdahl radius $\mu = 4/9$. This traceless matter saturates the bound obtained by Andréasson \cite{Andreasson:2007ck}.

\begin{figure}
    \centering
    \includegraphics[width=.8\linewidth]{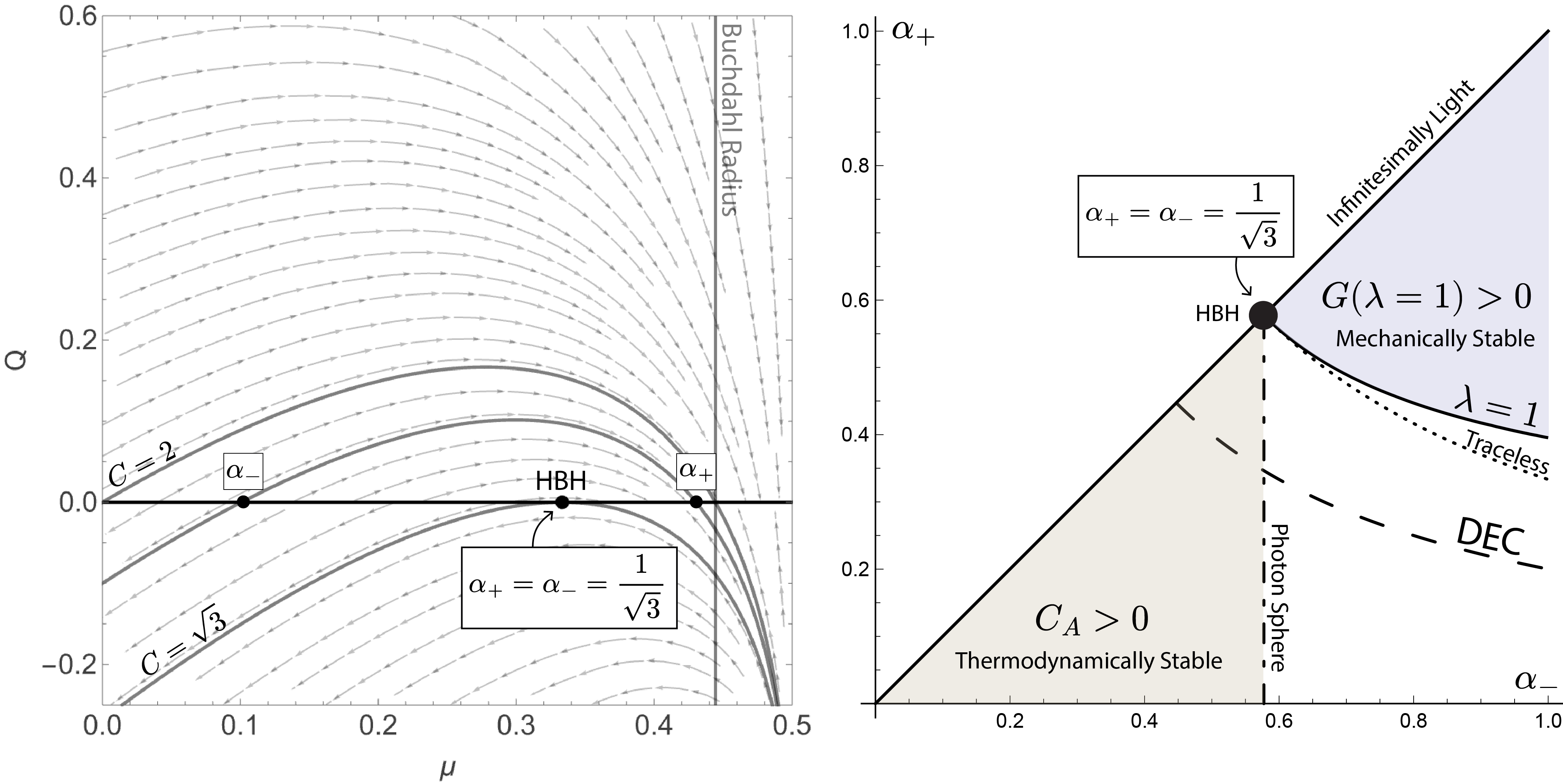}
    \caption{The IJCs are solved using TOV in the $w_r \rightarrow 0$ limit (left). $Q(\mu)$ crosses the line of fixed points, yielding $\alpha_{\pm}=\sqrt{1-2 \mu}$. For $w_\perp = \frac{1}{2}$, equivalently $\sigma = 2p$, the HBH touches this line once for $\alpha_\pm=1/\sqrt{3}$, matching BLP \cite{Brady:1991np} (right). The Buchdahl-Andréasson bound satisfies $C=2$. As discussed below, the diagrams for other $w_\perp$ are analogous. Self-similar Einstein clusters satisfy $\alpha_+=\alpha_-$. The HBH is a marginally stable, luminal Einstein cluster around a black hole.  }
    \label{fig:IJCTOV}
\end{figure}

For the other cases, we require that two layers are joined smoothly at the boundary between layers where $Q=0$. The same constant $C$ applies at both ends: 
\begin{equation}
    C=\frac{2-3 \mu_{-}}{\sqrt{1-2 \mu_{-}}}=\frac{2-3 \mu_{+}}{\sqrt{1-2 \mu_{+}}} .
\end{equation}
Suppose the total masses enclosed within neighboring layers are, respectively, $\mu_{-}=m/R$ and $\mu_{+}=M/R$. Solving for $R$ gives: 
\begin{equation}
   R=\frac{3}{8}\left(3 m+3 M \pm \sqrt{9 m^2-14 m M+9 M^2}\right).
\end{equation}
This matches our Israel junction condition for a shell around a black hole; note that for $m=0$ we recover the Buchdahl radius $\mu = 4/9$. 

We have just considered Eq.~\eqref{eq:anisotropicSWZTOV} for null gas. For more general, not necessarily null, gravitating radiation an identical procedure yields: 
\begin{equation}
    Q(\mu)=\left(1+4 w_{\perp}\right) \mu-\left(1+2 w_{\perp}\right)+C \sqrt{1-2 \mu} .
    \label{eq:pressurelessSWZTOVsol}
\end{equation}
For the first layer, $Q(0)=0$ fixes $C=1+2 w_{\perp}$, so: 
\begin{equation}
    Q(\mu)=\left(1+4 w_{\perp}\right) \mu+\left(1+2 w_{\perp}\right)\left(\sqrt{1-2\mu}-1\right).
\end{equation}
The first layer ends where $Q(\mu_*)=0$, which occurs at the radius: 
\begin{equation}
    \mu_*=\frac{4 w_{\perp}\left(1+2 w_{\perp}\right)}{\left(1+4 w_{\perp}\right)^2} .
\end{equation}
The Buchdahl radius is recovered in the limit $w_\perp \rightarrow 1/2$. The first layer approaches its local horizon $\mu = 1/2$ in the limit where $w_\perp \rightarrow \infty$.

To determine the other layers, we need only require that two layers are joined smoothly where $Q=0$. This gives the matching condition: 
\begin{equation}
    \frac{\left(1+2 w_{\perp}\right)-\left(1+4 w_{\perp}\right) \mu_{-}}{\sqrt{1-2 \mu_{-}}}=\frac{\left(1+2 w_{\perp}\right)-\left(1+4 w_{\perp}\right) \mu_{+}}{\sqrt{1-2 \mu_{+}}} .
\end{equation}
The mass contained within these layers satisfies, respectively, $\mu_- = m/R$ and $\mu_+ = M/R$. This gives a nesting rule for successive layers:
\begin{equation}
    R=\frac{\left(1+4 w_{\perp}\right)\left(\left(1+4 w_{\perp}\right)(m+M) \pm \sqrt{\left(1+4 w_{\perp}\right)^2(M-m)^2+4 m M}\right)}{8 w_{\perp}\left(1+2 w_{\perp}\right)} .
\end{equation}
The IJCs in Sec.~\ref{sec:IJC} are recovered directly in the $w_r \rightarrow 0$ limit, equivalently $Q=0$. Returning to our earlier definition in Eq.~\eqref{eq:alpha_def}, we have $\alpha\equiv \sqrt{1 - 2 \mu}$. Then at the boundary layer, where $Q=0$, Eq.~\eqref{eq:pressurelessSWZTOVsol} yields: 
\begin{equation}
    C(\alpha)=\frac{1+\left(1+4 w_{\perp}\right) \alpha^2}{2 \alpha} .
\end{equation}
Requiring smoothness at the boundary layers gives: 
\begin{equation}
    \frac{1+\left(1+4 w_{\perp}\right) \alpha_{-}^2}{2 \alpha_{-}}=\frac{1+\left(1+4 w_{\perp}\right) \alpha_{+}^2}{2 \alpha_{+}} .
\end{equation}
Solving for $w_\perp$ gives: 
\begin{equation}
    \frac{p}{\sigma} \equiv w_{\perp}=\frac{1}{4}\left(\frac{1}{\alpha_{+} \alpha_{-}}-1\right),
    \label{eq:TOVIJCformula}
\end{equation}
a perfect match for the Israel junction condition in Eq.~\eqref{eq:p_over_sigma_alpha}. The surface density is fixed by the enclosed mass jump, so this ratio suffices to obtain both Israel junction conditions. 

The physical meaning of $C$ is as follows. For a given $w_\perp$, the two boundary values $\alpha_{\pm}$ are the two roots of Eq.~\eqref{eq:pressurelessSWZTOVsol}: 
\begin{equation}
    \alpha_{+} \alpha_{-}=\frac{1}{1+4 w_{\perp}}, \quad \alpha_{+}+\alpha_{-}=\frac{2 C}{1+4 w_{\perp}} .
    \label{eq:IJCfromTOV}
\end{equation}
Using Eq.~\eqref{eq:alpha_def}, $\alpha_{ \pm} \equiv \sqrt{f_{ \pm}(R)}$, so $C$ is the average of the redshift factors on each side: 
\begin{equation}
    C=\frac{1}{2}\left(\frac{1}{\alpha_{+}}+\frac{1}{\alpha_{-}}\right) .
    \label{eq:Cdef}
\end{equation}
If the region inside the Israel layer is empty, regularity at the origin gives the maximum value of $C$, which occurs at $\alpha = {1}/{\sqrt{1+4 w_{\perp}}}$:
\begin{equation}
    1 \le \sqrt{1+4 w_{\perp}} \le C \le 1 + 2 w_\perp \le 2.
    \label{eq:genC}
\end{equation}
The roots $\alpha_{\pm}$ are the inner and outer edges of a vacuum region between matter layers. When $C = \sqrt{1 + 4 w_\perp}$, the curve is tangent to the line of fixed points, making the layers continuous. For $w_\perp = 1/2$, Eq.~\eqref{eq:genC} recovers the values of $\sqrt{3}\le C \le 2$ for the null case; $C=2$ recovers the Buchdahl-Andréasson bound \cite{Buchdahl:1959zz,Andreasson:2007ck}, while $C = \sqrt{3}$ places each layer at its local photon sphere; $w_\perp=0$ gives $C=1$. In particular, $w_\perp=1/2$ gives an ultra-compact Einstein cluster, where massless radiation orbits at the speed of light within its local photon sphere. We will see in Sec.~\ref{sec:hillingarcontinuumlimit} that this is the HBH.

The IJC can be determined from Fig.~\ref{fig:IJCTOV} as follows. First choose some $C$ and find where the solution for $Q$ crosses the line of fixed points. Their intersections give values of $\mu$ that determine the IJC through $\alpha_{\pm}=\sqrt{1-2 \mu}$. Alternatively, one can use Eq.~\eqref{eq:TOVIJCformula}.

This section established three results in the zero radial pressure limit: (1) the IJCs are given precisely by the points where the exact solutions to the TOV equation Eq.~\eqref{eq:pressurelessSWZTOVsol} cross the line of critical points in Fig.~\ref{fig:IJCTOV}, (2) the photon sphere sits at $\alpha_{\pm}=1/\sqrt{3}$, dividing stable and unstable fixed points; for the HBH, the $Q(\mu)$ curve is tangent to the critical line, and is the same point where mechanical and thermodynamical stability coincide in Fig.~\ref{fig:BLP}, and (3) the Buchdahl-Andréasson bound \cite{Buchdahl:1959zz,Andreasson:2007ck} is the $C = 2$ endpoint of this same family of solutions. Similar comments apply to the self-similar family of solutions in Sec.~\ref{sec:beyondnested}.

\subsubsection{The Buchdahl-Andréasson bound}
\label{sec:Buchdahlbound}

Here we comment briefly on the Buchdahl-Andréasson bound \cite{Buchdahl:1959zz,Andreasson:2007ck}. This gives an alternative perspective on the related discussion that appears in the companion papers \cite{Riojas:2026ytt,Riojas:2026mfu}. 

The saturation occurs for the solution that is \textit{luminal} at the fixed point in the continuum limit. To see this, let us consider the velocity of these orbits more generally. The local circular orbit velocity as measured by a static observer at radius $r$ is given by: 
\begin{equation}
    \Omega^2 \equiv\left(\frac{d \phi}{d t}\right)^2=\frac{f^{\prime}(r)}{2 r} \implies v_{\mathrm{loc}}^2(r)=\frac{r^2 \Omega^2}{f(r)}=\frac{r f^{\prime}(r)}{2 f(r)}=\frac{m(r)+4 \pi r^3 P_r(r)}{r-2 m(r)} .
        \label{eq:shellspeed}
\end{equation}
At the fixed point we have:
\begin{equation}
    v_{\text{loc}}^2=\frac{\nu\left(1+w_r\right)}{1-2 \nu} .
\end{equation}
For $w_r=0$ and $w_\perp=1/2$ as is appropriate for orbiting massless radiation, we have $\nu=1/3$ and so the motion is luminal: $v_{\text{loc}}=1$. At the fixed point, we have photons orbiting their local photon sphere. This condition will be of interest later in Sec.~\ref{sec:beyondnested}.

Away from the fixed point the motion on shells can be acausal; this is not an issue for the HBH. Note that, in terms of the $\mu$ and $Q$ variables:
\begin{equation}
    v_{\text{loc}}^2(r)=\frac{\mu(r)+w_r q(r)}{1-2 \mu(r)}=\frac{\mu+Q}{1-2 \mu}.
\end{equation}
For $Q=0$ the orbits of the constituents of the shell are superluminal for $\mu>1/3$. For the shell enclosing vacuum, where $Q(\mu)=3 \mu-2+2 \sqrt{1-2 \mu}$, this becomes: 
\begin{equation}
    v_{\mathrm{loc}}^2=2\left(\frac{1}{\sqrt{1-2 \mu}}-1\right).
\end{equation}
For traceless radiation a single shell enclosing vacuum saturates the Buchdahl-Andréasson bound $\mu = 4/9$ \cite{Buchdahl:1959zz,Andreasson:2007ck}; $v_{\text{loc}}=2$, which is superluminal. More generally, the velocity of the orbiting geodesics oscillates around a constant value,\footnote{The velocity at the fixed point satisfies $v_{\text{loc}}^2=\frac{r f^{\prime}(r)}{2 f(r)}=\frac{\hat{m}(r)+4 \pi r^3 P_r}{r-2 \hat{m}(r)}=\frac{\nu\left(1+w_r\right)}{1-2 \nu} \equiv \frac{\delta}{2}$, as shown in \cite{Riojas:2026ytt}.} saturating in the limit. The fixed point for traceless radiation saturates at ($\nu=1/3, w_r=0)$, which gives $v_{\mathrm{loc}}=1$. We will see shortly that this gives the HBH: a marginally stable, luminal Einstein cluster.

\subsubsection{Entropy Density at the Fixed Point}\label{sec:entropyfixedpoint}

The solution at the fixed point where $q_H$ and $q_V$ intersect is self-similar. This motivates studying the thermodynamics of self-similar fluids in more detail. For these solutions we now repeat the discussion that appears in \cite{Riojas:2026mfu} concerning the scaling behavior $S \propto M^p$,  for some power $p$.

In the previous section we saw that for anisotropic gas, $\rho = T s - P_r$. For $P_r = w_r \rho$ and $P_\perp = w_\perp \rho$:
\begin{equation}
    s(r) = \frac{\rho(r)(1 + w_r)}{T(r)}.
\end{equation}
For the self-similar fixed-point solutions this gives: 
\begin{equation}
    \rho(r) = \frac{q(r)}{4 \pi r^2} = \frac{\nu}{4 \pi r^2}, \quad  m(r) = \nu r, \quad  \nu=\frac{2 w_{\perp}}{1+4 w_{\perp}}.
    \label{eq:selfsimilarrelations}
\end{equation}
For $\Lambda=0$, Einstein's equations Eq.~\eqref{eq:EErrcomponent} require:
\begin{equation}
    \frac{r f^{\prime}(r)}{2 f(r)}=\frac{\hat{m}(r)+4 \pi r^3 P_r}{r-2 \hat{m}(r)}=\frac{\nu\left(1+w_r\right)}{1-2 \nu} \equiv \frac{\delta}{2},
    \label{eq:selfsimilardelta}
\end{equation}
where we specialized to the fixed point\footnote{The Stefan-Boltzmann law holds only for \textit{isotropic} radiation. These calculations reproduce the usual expectations: hydrostatic equilibrium Eq.~\eqref{eq:anisotropic_TOV} gives $f(r) \propto \rho(r)^{-\frac{2 w}{w+1}}$, and $T\propto f^{-1/2}$ gives $\rho \propto T^{(1+w) / w}$. For \textit{anisotropic} radiation Eq.~\eqref{eq:firstlawaniso} is not enough to fix $\rho(f)$; the fixed point gives $\rho \propto r^{-2}$, and $f \propto r^\delta$, so $T \propto f^{-1 / 2} \propto \rho^{\delta / 4}$. The stiffest isotropic star \cite{Zeldovich:1961sbr,Banks:2002fj}, $w=1$, also gives $s(r) \propto 1/r$ and $S \propto M^2$. } in the last step. This gives the $g_{tt}$ component $f(r)$, and the local temperature $T(r)$, up to a coefficient:
\begin{equation}
    f(r) \propto r^\delta \implies T(r) = \frac{T_H}{\sqrt{f(r)}} \propto r^{-\delta/2}.
\end{equation}
At the fixed point we have $g_{rr}=\frac{1}{1-2\nu}$, which is constant. This gives the power-law scaling: 
\begin{figure}
    \centering
    \includegraphics[width=0.75\linewidth]{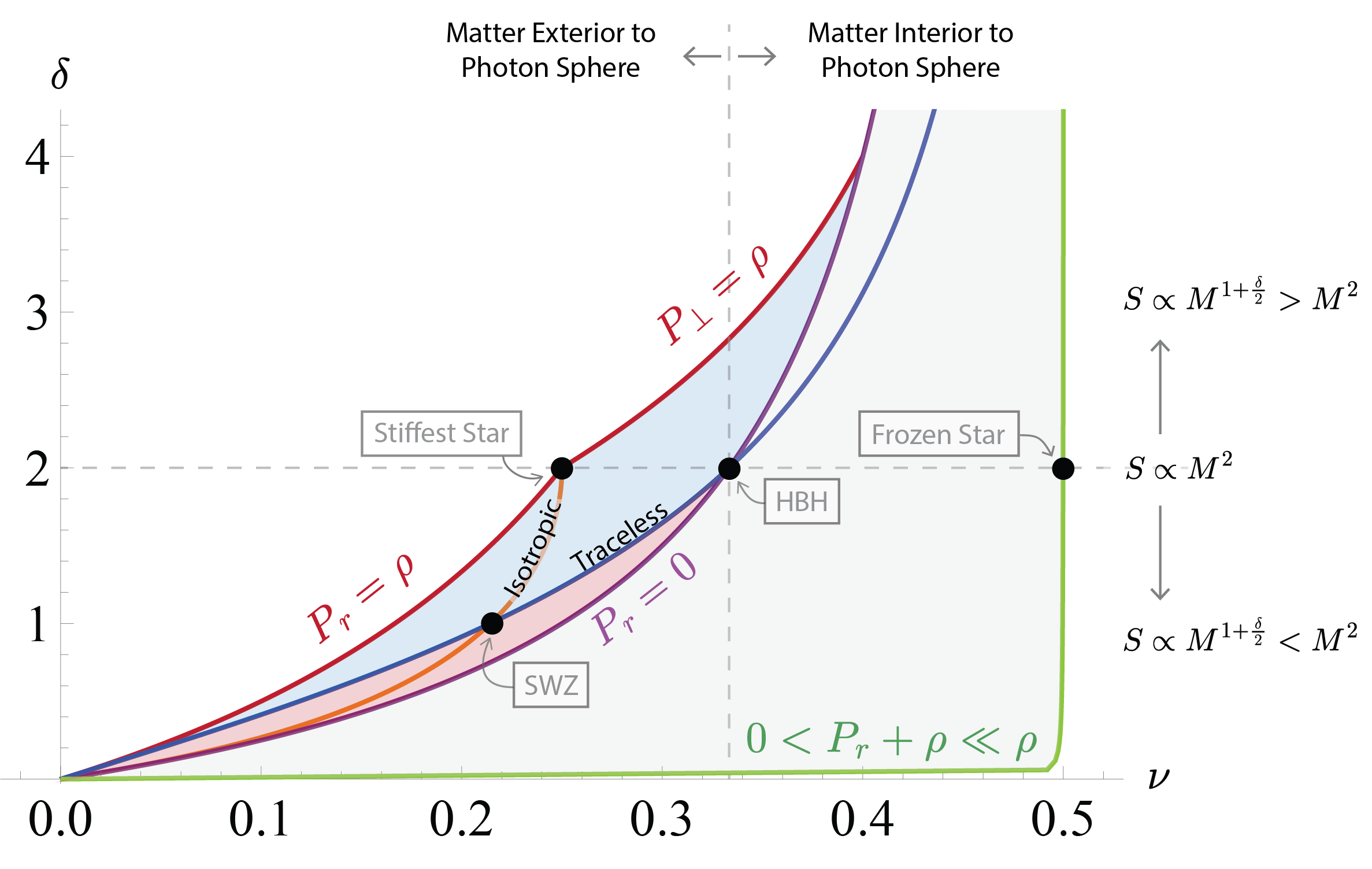}
    \caption{From \cite{Riojas:2026ytt,Riojas:2026mfu}, the entropy density of self-similar solutions in the $(\delta, \nu)$ plane. In \cite{Riojas:2026mfu} it was shown that the HBH satisfies $T=(8 \pi M)^{-1}$ and $S = 4 \pi M^2$. In Sec.~\ref{sec:beyondnested}, we show that for $\sigma=0$ this can be conditionally extended to the entire $\delta=2$ family, including ``frozen stars" \cite{Brustein:2018web,Brustein:2021lnr,Brustein:2023hic} (see also \cite{Sorkin:1981jc}) and ``stiffest stars" \cite{Banks:2002fj}. The conditions $\delta=2$ and $\nu=1/3$ occur together for the HBH, which is \textit{aligned} as discussed in Sec.~\ref{sec:HBHintro}. In particular, since $\sigma=2p$ as in the thought experiment (Fig.~\ref{fig:thoughtexperiment}), no IJC is needed in Eq.~\eqref{eq:p_over_sigma}. Other cases require an IJC and are \textit{misaligned}.}
    \label{fig:classesofmatter}
\end{figure}

\begin{equation}
    s(r) \propto r^{\frac{\delta}{2}-2} \implies S \equiv \int \sqrt{g_{rr}}(4 \pi r^2) s(r) dr \propto M^{1+\frac{\delta}{2}}.
\end{equation}
For the entropy to grow as $S \propto M^2$, we need $\delta=2$. This gives the photon sphere condition:
\begin{equation}
    \frac{r f^{\prime}(r)}{2 f(r)}=1, \quad \quad \nu = \frac{1}{3} \implies \rho = \frac{1}{12 \pi r^2},  \quad m(r) = \frac{r}{3}.
\end{equation}

In the limit where $w_r \rightarrow 0$, where the radiation orbits the black hole and discrete Israel layers emerge, we obtain the solution discussed in the previous section where $\alpha_{\pm}=1/\sqrt{3}$. The layers form a continuum that extends the photon sphere of the black hole outward. 

The anisotropic generalization of SWZ \cite{Sorkin:1981wd} yields a single solution that is marginally stable for both York \cite{York:1986it} and BLP \cite{Brady:1991np}. The HBH does not require walls, because $P_r=0$; it has an extended photon sphere, because $\delta=2$; its particles orbit within, because $T^\mu {}_\mu=-\sigma+2p=0$. It lies at the intersection of these three curves in the $(\delta, \nu)$ plane; see Fig.~\ref{fig:classesofmatter}, from \cite{Riojas:2026ytt,Riojas:2026mfu}, where the entropy density of self-similar solutions is organized.  
Related solutions with $S \propto M^2$ have appeared in the literature, see \cite{Brustein:2023hic,Banks:2002fj}.

In Sec.~\ref{sec:beyondnested}, we show the $\delta=2$ family of self-similar solutions satisfies $S=4 \pi M^2$ under stringent conditions. This thermodynamic mimicry holds only for massless walls, so every member of the $\delta=2$ family other than the HBH (where $\sigma = 2p$) violates DEC.  

\subsection{Marginal Joint Stability Selects the Hillingar Black Hole}\label{sec:hillingarcontinuumlimit}

We have seen that the anisotropic TOV equation, in the zero radial pressure limit, organizes matter into discrete Israel layers. At the photon sphere these layers are infinitesimal and continuous, saturating the BLP bound $\sigma=2p$, placing it at the marginal boundary of thermodynamic and mechanical stability. No IJC is needed because $\sigma = 2p$ in Eq.~\eqref{eq:p_over_sigma}. Each shell sits at its local photon sphere; the mass interior to this shell satisfies $\widehat{m}(r)=r/3$. 

These constraints lead to the solution in \cite{Riojas:2026ytt} by identical reasoning. For the self-similar and transverse anisotropic null gas, the TOV equation becomes algebraic. Inserting $\widehat{m}(r)=r/3$ into Eq.~\eqref{eq:EEttinteg} and solving for $\Lambda$ determines $j(r)$ in Eq.~\eqref{eq:metricform}:
\begin{equation}
    j(r)=\frac{1-r^2 \Lambda}{3}={\frac{1}{3} \mp \frac{r^2}{L^2}}.
\end{equation}
The photon sphere condition $r f'/2f=1$ gives $f(r) \propto r^2$. Its coefficient can be determined from the Lyapunov exponent $\lambda_M$, which controls massless geodesics near the photon sphere:
\begin{equation}
    \lambda_M^2 \equiv \frac{f(r)}{r^2}\Bigg|_{r=3M} = \left(\frac{1}{27 M^2}-\frac{\Lambda}{3}\right).
\end{equation}
It is the inverse timescale associated with the rate that particles spiral toward or away from the photon sphere \cite{Cardoso:2008bp}. To match to the exterior Schwarzschild metric we must choose:
\begin{equation}
    f(r)=\lambda_M^2 r^2=\left(\frac{1}{27 M^2}-\frac{\Lambda}{3}\right) r^2.
    \label{eq:matchingexteriorsolution}
\end{equation}
Here $M$ is the ADM mass. Then the metric Eq.~\eqref{eq:metricform} takes the form:
\begin{align}\label{HBHAdSmetric0}
    j(r)\equiv a^{-1} = 1-\frac{\Lambda r^2}{3}-\frac{2\widehat{m}(r)}{r}  , \quad \quad 
    f(r)
    = \left[\frac{\lambda_M}{\hat \lambda(r)}\right]^2  j(r) .
\end{align}
The Lyapunov exponent $\hat{\lambda}(r)$ and mass-function $\widehat m(r)$ are piecewise defined:
\begin{equation}
\hat{\lambda}^2(r)=\frac{1}{27\,\widehat{m}(r)^2}-\frac{\Lambda}{3},
\qquad
\widehat{m}(r)=
\begin{cases}
m, & r\in (r_+,3m)\\
r/3, & r\in (3m,3M)\\
M, & r\in (3M,\infty)   
\end{cases} 
\end{equation}
This has recovered the solution considered in \cite{Riojas:2026ytt,Riojas:2026mfu}: the HBH.

This system sits at the single location where thermodynamical and mechanical stability for thin shells overlap in Fig.~\ref{fig:BLP}, realizing the thought experiment illustrated in Fig.~\ref{fig:thoughtexperiment}; see Fig.~\ref{fig:HBH}. Entropy maximization gives $\nu=1/3$ (the photon sphere position) and $\delta=2$ (the photon sphere condition); every radial shell within the ocean sits at the local photon sphere. 

The HBH remains marginally stable in the continuum (see Sec.~\ref{sec:Stability}) for a strictly traceless equation of state. To show this result, we use the trace-reversed IJCs \eqref{eq:IJC}, which are:
\begin{equation}
    \left[K_{a b}\right]-\gamma_{a b}[K]=-\kappa_D^2 S_{a b} .
\end{equation}
Since $2 G_{\mu \nu} n^\mu n^\nu=-{ }^{(3)} R+K^2-K_{a b} K^{a b}$ (Wald E.2.27 \cite{Wald:1984rg}), this takes the useful form:
\begin{equation}
    2 \kappa_D^2\left[P_r\right]=\left[K^2\right]-\left[K_{a b} K^{a b}\right] \implies S^{a b} \bar{K}_{a b}=\left[P_r\right]=0,
\end{equation}
where $\bar{K}_{a b} \equiv \frac{1}{2}\left(K_{a b}^{+}+K_{a b}^{-}\right)$ is the average over the two sides of the layer, and we assumed $P_r=0$ on each side. Then using Eq.~\eqref{eq:Kthetatheta}, Eq.~\eqref{eq:Ktautau}, and $d\big(\dot{R}^2\big) / d R=2 \ddot{R}$, this gives:
\begin{equation}
    \frac{d}{d R} \ln \left(j+\dot{R}^2\right)=\frac{4 p}{\sigma R}-\frac{f^{\prime}}{f}+\frac{j^{\prime}}{j} .
\end{equation}
The conservation law Eq.~\eqref{eq:energyconservationshell} then gives $\frac{d}{d R} \ln \left(\sigma R^2\right)=\frac{-2p}{\sigma R}$. (The integration constant is determined by setting $\dot{R}=0$ for a static shell.) Defining $z \equiv R/R_0$ then gives: 
\begin{equation}
    \dot{R}^2+V(z)=0, \quad V(z)=j\left(R_0 z\right)\left[1-\frac{f\left(R_0\right) \sigma_0^2}{f\left(R_0 z\right) \sigma(z)^2 z^4}\right],
\end{equation}
where as before the shell's location is at $z =1$. Consistently, a straightforward computation shows the ocean is static when made of traceless matter, since $V'(1)=0$ when $R_0 f'(R_0)/2f(R_0)=2p_0/\sigma_0$. Following the same reasoning as in Sec.~\ref{sec:Stability}: 
\begin{equation}
    V^{\prime \prime}(1)=j_0\left[8\left(1+\frac{p_0}{\sigma_0}\right)\left(\lambda-\frac{p_0}{\sigma_0}\right)+\frac{4 p_0}{\sigma_0}+R_0^2\left(\frac{f^{\prime \prime}}{f}-\frac{f^{\prime 2}}{f^2}\right)\right],
\end{equation}
where $\lambda \equiv(d p / d \sigma)_0$. For the HBH ocean in asymptotically flat spacetime we have $f \propto R^2$ and $j=1/3$,  which  gives $V''(1) = 4(\lambda-1/2)$. In AdS the argument is similar, but $j=1/3 + R^2/L^2$, and so $V''(1) = (4 + 12 R_0^2/L^2)(\lambda-1/2)$. We conclude that the HBH ocean is marginally stable even when the equation of state is held fixed at $p/\sigma=1/2$.

\subsubsection{The Coldest Stable Nested Shell Configuration: the HBH}
\label{sec:ascoldasablackhole}

Consider a black hole of horizon mass $m_0$, surrounded by finitely many static spherical shells, with successive vacuum regions $R_k$ with Schwarzschild masses $m_{k}$: 
\begin{equation}
    R_1<\cdots<R_N, \quad \quad m_0<m_1<\cdots<m_N=M.
\end{equation}
Let us require that each of these shells is mechanically stable; then by Sec.~\ref{sec:BLP}: 
\begin{equation}
    R_k \geq 3 m_{k-1} \quad(k=1, \ldots, N) .
\end{equation}
We show the temperature exceeds $T_\infty = \frac{1}{8 \pi M}$ for all configurations with finitely many shells, extending the argument shown in Appendix C of \cite{Riojas:2026mfu}, for traceless shells, to the general case.  The exceptions are the trivial case with no shells, which is an ordinary Schwarzschild black hole of mass $M$, and a non-trivial case with infinitely many infinitesimally light traceless shells at the photon sphere, the HBH. 

Let $T_H$ denote the temperature at infinity of a Schwarzschild black hole of mass $M$. Let $T_\infty$ be the resulting temperature of the combined configuration. The Schwarzschild temperature measured at infinity is:
\begin{equation}
    T_{\infty}=\frac{1}{8 \pi m_0} \prod_{k=1}^N \sqrt{\frac{1-\frac{2 m_k}{R_k}}{1-\frac{2 m_{k-1}}{R_k}}}.
\end{equation}
As compared to the Hawking temperature of a black hole of mass $M$, the ratio is: 
\begin{equation}
    \frac{T_{\infty}}{1 /(8 \pi M)}=\frac{M}{m_0} \prod_{k=1}^N \sqrt{\frac{1-\frac{2 m_k}{R_k}}{1-\frac{2 m_{k-1}}{R_k}}}\equiv \prod_{k=1}^N \Phi_k(R_k),
\end{equation}
\begin{equation}\label{eq:redshiftfactor}
    \Phi_k(R)\equiv \frac{m_{k}}{m_{k-1}} \sqrt{\frac{1-\frac{2 m_k}{R}}{1-\frac{2 m_{k-1}}{R}}} .
\end{equation}
For fixed $m_{k-1}<m_k$, $\Phi_k(R)$ is strictly increasing for $R > 2m_k$, because: 
\begin{equation}
    \frac{d}{d R} \ln \Phi_k=\frac{m_k-m_{k-1}}{\left(R-2 m_k\right)\left(R-2 m_{k-1}\right)}>0 .
\end{equation}
A short calculation shows the unique crossing point where $\Phi_k(R_k)=1$ is: 
\begin{equation}
    R_k^*=\frac{2\left(m_k^2+m_k m_{k-1}+m_{k-1}^2\right)}{m_k+m_{k-1}} , \quad \quad 3 m_{k-1}<R_k^*<3 m_k .
    \label{eq:crossing}
\end{equation}

Now we show that $\Phi(R_k)>1$ for stable nested shells. Because $\Phi(R)$ is a monotonically increasing function of $R$, this amounts to showing that $G(\lambda, \alpha_+, \alpha_-)$ is strictly negative when evaluated at the crossing point $R = R_k^*$, with $\lambda \equiv c_s^2$ the squared speed of sound: 
\begin{equation}
    G\left(\lambda, \alpha_{+}, \alpha_{-}\right) \equiv 3(4 \lambda+1) \alpha_{+}^3 \alpha_{-}^3+4 \lambda \alpha_{+}^2 \alpha_{-}^2-\left(\alpha_{+}^2+\alpha_{-}^2+\alpha_{+} \alpha_{-}\right).
\end{equation}
Here $\alpha_{ \pm} \equiv \sqrt{1-2 m_{ \pm} / R}$, with $m_-=m_{k-1}$ and $m_+ =m_k$. For shells where sound wave propagation is causal, we need only consider $\lambda \le 1$. The condition $\Phi_k\left(R_k^*\right)=1$ reads $m_k^2 \alpha_{+}^2=m_{k-1}^2 \alpha_{-}^2$, giving $\alpha_+ = \eta \alpha_-$, where for convenience we define: 
\begin{equation}
    \eta \equiv \frac{m_{k-1}}{m_k} \in(0,1).
\end{equation}
It is straightforward to show that inverting Eq.~\eqref{eq:crossing} for $\alpha_{-}^2(R_k^*)$ gives: 
\begin{equation}
    \alpha_{-}^2(R_k^*)=1-\frac{2 m_{k-1}}{R_k^*}=\frac{m_k^2}{m_k^2+m_k m_{k-1}+m_{k-1}^2}=\frac{1}{1+\eta+\eta^2}.
\end{equation}

Substituting $\alpha_+ = \eta\, \alpha_-$ and $\alpha_-^2 = \frac{1}{1+\eta+\eta^2}$ into $G(\lambda, \alpha_+, \alpha_-)$, one finds that it is negative when the following polynomial is negative:
\begin{equation}
    G<0 \iff P(\eta, \lambda) \equiv 3(4 \lambda+1) \eta^3+4 \lambda \eta^2\left(1+\eta+\eta^2\right)-\left(1+\eta+\eta^2\right)^3<0.
\end{equation}
Causality requires $\lambda \le 1$, which suggests separating this polynomial as: 
\begin{equation}
    P(\eta, \lambda)=P(\eta, 1)+\left(\lambda-1\right)\cdot \left(4 \eta^2\left(1+4 \eta+\eta^2\right)\right) .
    \label{eq:conP}
\end{equation}
\begin{equation}
    P(\eta, 1)=-1-3 \eta-2 \eta^2+12 \eta^3-2 \eta^4-3 \eta^5-\eta^6.
    \label{eq:palindrome}
\end{equation}
The second term in Eq.~\eqref{eq:conP} is manifestly negative for causal shells, so it suffices to show the first term Eq.~\eqref{eq:palindrome} is negative on $\eta \in (0,1)$. That term takes the form of, essentially, the simplest possible negative palindromic polynomial that manifestly vanishes at $\eta=1$.\footnote{It is simple to numerically verify this claim. However a satisfying analytic argument exists as follows. The Abel-Ruffini theorem states that no solution in radicals exists for the roots of a polynomial of degree $5$ or higher with arbitrary coefficients. However, \textit{palindromic} polynomials of degree $n$ have the following remarkable property: for even degree $2k$, divide by $x^k$ and substitute $u=x+\frac{1}{x}$ to reduce their degree by half. In our case, this process (with $N \equiv \eta + \eta^{-1}$) gives a cubic polynomial that factors as $-(N-2)(N^2 + 5N + 9)$. For $\eta \in (0,1)$ we have $N>2$, so $(N-2)$ is positive. Then $P(\eta, 1)<0$ for $\eta \in (0,1)$, and so $G<0$ at $R_k^*$.} 

Then $\Phi(R_k)>1$ for stable nested shells. Multiplying over all shells gives $T_{\infty}>\frac{1}{8 \pi M}$. If there are no shells, we have simply a Schwarzschild black hole of mass $M$, with temperature $T_\infty = \frac{1}{8 \pi M}$. There is also a non-trivial limit suggested by Eq.~\eqref{eq:crossing}. Let these shells' masses become infinitesimal, as they approach the local photon sphere: 
\begin{equation}
    \delta m_k \equiv m_k-m_{k-1} \rightarrow 0, \quad \quad  R_k \rightarrow 3 m_k .
\end{equation}
In this limit each $\Phi_k \rightarrow 1$. More precisely, writing $m_k=\left(1+\varepsilon _k\right) m_{k-1}$:  
\begin{equation}
    \Phi_k\left(3 m_k\right) \equiv \frac{m_k}{m_{k-1}} \sqrt{\frac{1-\frac{2 m_k}{3m_k}}{1-\frac{2 m_{k-1}}{3m_k}}} =\frac{\left(1+\varepsilon_k\right)^{3 / 2}}{\sqrt{1+3 \varepsilon_k}}=1+\frac{3 \varepsilon_k^2}{2} + O(\varepsilon_k^3).
\end{equation}
In the continuum limit the product tends to $1$, and the resulting configuration is the HBH: 
\begin{equation}
    \widehat{m}(r) = \frac{r}{3}, \quad \quad  r \in (3m, 3M). \quad \quad T_{\infty}= \frac{1}{8 \pi M}.
\end{equation}
In AdS, $\Phi(R)$ is again monotonic, but $\Phi(R_k)=1$ can now occur within a broad region where mechanical and thermal stability are satisfied (Sec.~\ref{sec:mechAdS}); these stable shells can cool the combined system while holding $M$ constant. If that radiation is in thermal equilibrium with the black hole, then the coarse-grained entropy increases.

\subsection{Shells Cool Black Holes $\iff$ Specific Heat is Positive}\label{sec:heatcapandtemp}

Here we clarify the physical meaning of the divergent specific heat at the photon sphere of a black hole. The same mechanism controls nested shell configurations. Define: 
\begin{equation}
    T_{\infty}=T_H\left(m_0, L\right) \prod_{k=1}^N \sqrt{\frac{f_{m_k}\left(R_k\right)}{f_{m_{k-1}}\left(R_k\right)}}.
\end{equation}
Here we are allowing for an AdS curvature $L$. The ratio of the temperature at infinity for the shell configuration, $T_\infty$, to the Hawking temperature of a black hole of mass $M$ is: 
\begin{equation}
    \frac{T_{\infty}}{T_H(M)}=\prod_{k=1}^N \Phi_k\left(R_k\right), \quad \quad \Phi(R)=\frac{T_H(m, L)}{T_H(m+\delta m, L)} \sqrt{\frac{f_{m+\delta m}(R)}{f_m(R)}}.
\end{equation}
For one infinitesimal shell, varying the mass while holding $R$ fixed gives:
\begin{equation}
    \delta \ln \frac{T_{\infty}}{T_H(M, L)}=\delta \ln \Phi(R)=-\delta \ln T_H+\frac{1}{2} \delta \ln f_m(R).
\end{equation}
The second term takes a simple form for black hole metrics:
\begin{equation}
     \frac{1}{2} \delta \ln f_m(R)=\frac{1}{2} \frac{\partial_m f_m(R)}{f_m(R)} \delta m.
\end{equation}
Then the variation of the redshift factor for the shells, $\Phi_k$, gives: 
\begin{equation}
    \delta \ln \Phi(R)=-\left(\frac{d \ln T_H}{d m}+\frac{1}{R f_m(R)}\right) \delta m =-\frac{1}{T_H}\left(\frac{d T_H}{d m}+\frac{T_H}{R f_m(R)}\right) \delta m.
    \label{eq:variationredshift}
\end{equation}
The specific heat at finite radius is given by Eq.~\eqref{eq:general_specific_heat}:
\begin{equation}
    C_R^{-1}=\frac{d T_H}{d m}+\frac{T_H}{R f(R)}.
\end{equation}
Comparing to Eq.~\eqref{eq:variationredshift} gives one of the main results of this note, illustrated in Fig.~\ref{fig:shellsandtemp}:
\begin{equation}
    \boxed{\delta \ln \frac{T_{\infty}}{T_H(M, L)}=-\frac{C_R^{-1}}{T_H} \delta m.}
    \label{eq:specificheatchangetemp}
\end{equation}
Forming shells of thermal radiation around a black hole, with fixed total mass $M$, cools the system if and only if the specific heat at the location of the shell is \textit{positive}. 

To determine whether a black hole surrounded by nested shells is colder than the original configuration, we compute the ratio: 
\begin{equation}
    \frac{T_{\infty}}{T_H(M)}=\prod_{k=1}^N \Phi_k\left(R_k\right), \quad \Phi_k=\frac{T_H\left(m_{k-1}\right)}{T_H\left(m_k\right)} \sqrt{\frac{f_{m_k}\left(R_k\right)}{f_{m_{k-1}}\left(R_k\right)}}.
\end{equation}
It was just shown that: 
\begin{equation}
    \delta \ln \Phi=-\frac{C_R^{-1}}{T_H} \delta m, \quad C_R^{-1}=\frac{d T_H}{d m}+\frac{T_H}{R f(R)}.
\end{equation}
So $\Phi<1$ if and only if $C_R^{-1}>0$. Mechanically stable shells in asymptotically flat spacetime have $C_R^{-1}\le 0$, with equality at the photon sphere; this is why the HBH, the coldest such configuration, is exactly as cold as a black hole. 

\begin{figure}
    \centering
    \includegraphics[width=\linewidth]{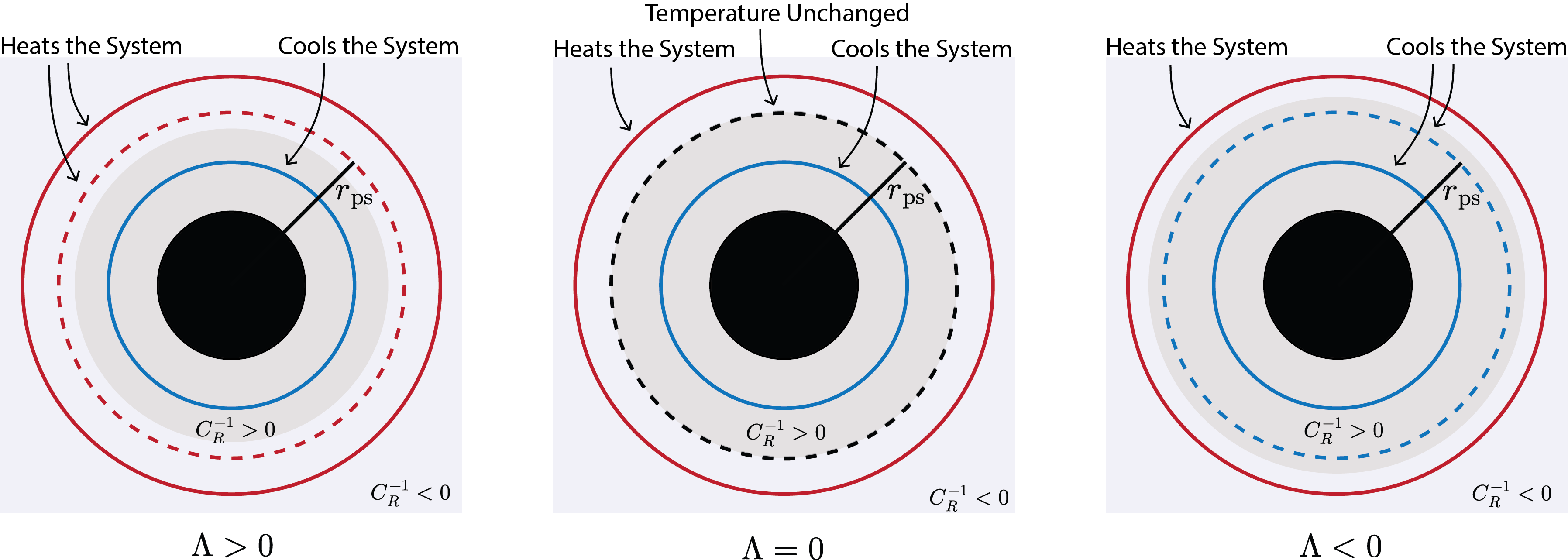}
    \caption{Locations of possible shells. The asymptotically measured Hawking temperature $T_\infty$, after building a shell around a black hole, changes with sign opposite to that of the locally measured specific heat (Eq.~\eqref{eq:specificheatchangetemp}). Shells placed where $C_R^{-1}<0$ increase the asymptotically measured temperature (red), while shells placed where $C_R^{-1}>0$ decrease it (blue). At the photon sphere, $C_{r_0}^{-1}(r_{\mathrm{ps}}) \propto -\Lambda$ (Eq.~\eqref{eq:PSspecificheatlambda}).  As illustrated (center) for asymptotically flat spacetime, the temperature is unchanged when shells are added at the photon sphere at fixed total mass.   }
    \label{fig:shellsandtemp}
\end{figure}

However, in asymptotically AdS spacetime, $C_R^{-1}>0$ at the photon sphere, so the evaporation process would run away. This gives a different perspective on \cite{Riojas:2026mfu,RiojasStrasslerAdS}, where we pointed out that AdS black holes may be thermodynamically unstable in the microcanonical ensemble. From this perspective, they can increase their entropy by emitting radiation and cooling off. They thus increase their entropy in the \textit{microcanonical} ensemble for the unintuitive reason that their \textit{local} specific heat is positive in the \textit{canonical} ensemble.\footnote{This counter-intuitive result was also found for the HBH from entropy maximization \cite{Riojas:2026mfu,RiojasStrasslerAdS}; the AdS case is similar to Sec.~\ref{sec:entropyfixedpoint}. One could demand that this \textit{positive} specific heat system cannot reach thermal equilibrium with a bath, due to AdS/CFT. This would constrain bulk matter in AdS/CFT. }

\subsection{Entropy of Systems in Equilibrium with Black Holes}
\label{sec:EntropyandShells}

Here we introduce a practical method for computing coarse-grained entropies without using the Euclidean path integral, which we then greatly generalize in Sec.~\ref{sec:beyondnested}.\footnote{Given the Hawking temperature, the IJC reproduce York's quasilocal energy, surface pressure and first law without the Euclidean action. We also make use of the free energy $F = E_{BY}-TS$.}

First we consider a simpler case at $P_r=0$, where our procedure can be understood as follows. By adding shells and tracking the asymptotically measured temperature, the coarse-grained entropy of a system in thermal equilibrium with a black hole is calculated. Shells in equilibrium around black holes have been considered previously, see \cite{Martinez:1996ni,Andre:2019zzo,Andre:2021ctu,Lemos:2023yiz}. 

\subsubsection{Thermodynamics from IJC: Shells are Heat}\label{sec:shellsareheat}

Here we derive black hole thermodynamics directly from the IJCs, with only the Hawking temperature and thermal equilibrium as inputs; the geometry is built layer by layer while tracking the asymptotic temperature. A summary is as follows. 

The IJCs in Eq.~\eqref{eq:staticjunctionbh} determine the surface density $\sigma$ of an infinitesimal thin shell at radius $R$ between Schwarzschild geometries of mass $m$ and $m + \delta m$. Expanding to first order in $\delta m$, we obtain the well-known Brown-York quasilocal energy \cite{York:1986it}. The asymptotic temperature Eq.~\eqref{eq:tempformulafromshell} follows from gravitational redshift. If  local thermodynamic equilibrium holds, the gravitational first law $\delta S(\widehat{m})=\delta \widehat{m}/T_\infty(\widehat{m})$ Eq.~\eqref{eq:entropyofshells} follows, in a local form depending on the mass $\widehat{m}(r)$ enclosed within areal radius $r$.

We begin with the free energy $F \equiv E_{BY} - T S$ \cite{York:1986it}, where $E_{BY}$ is the Brown-York energy satisfying $(\delta E_{BY})_{r} \equiv \delta M \left( 1-\frac{2M}{r}\right)^{-1/2}$. Expanding the Israel junction condition for infinitesimally thin static shells Eq.~\eqref{eq:staticjunctionbh} in $\delta M$:
\begin{equation}
 \sigma = \frac{1}{4 \pi R}\left( \alpha_- - \alpha_+\right) \approx \frac{1}{4 \pi R^2} \frac{\delta M}{\sqrt{1-\frac{2M}{R}}} \implies (\delta E_{BY})_r \equiv  \frac{\delta M}{\sqrt{1 - \frac{2M}{R}}} = A \sigma.
 \label{eq:BYEshell}
\end{equation}
It follows from direct integration in $\delta M$ that $E_{B Y}=R\left(1-\sqrt{1-2 m_{+} / R}\right)$. York showed \cite{York:1986it} that $dE_{BY} = T(r) d S  - p_Y dA$, where the surface pressure is $p_Y = (8 \pi R)^{-1}(\frac{1-m_+/R}{\alpha_+}-1)$. To recover his exact differential from our approach, we compute:
\begin{equation}
    \left.\frac{\partial E_{BY}}{\partial R}\right|_{m_+} dR = \left( 1 - \frac{1-m_+/R}{\alpha_+}\right) d R = \frac{1}{8 \pi R}\left( 1 - \frac{1-m_+/R}{\alpha_+}\right) dA.
\end{equation}
Then $d E_{B Y}=T(r) d S-p_Y d A$. We have derived York's local thermodynamics directly from the IJCs and Hawking temperature, without the Euclidean path integral. 

The next step gives a physical picture. The $T dS$ term in York's thermodynamics increases the mass of the system -- which amounts to adding heat -- by depositing a thin Israel layer at the boundary. The $p_Y dA$ term holds the mass fixed while varying the boundary area. Then holding the boundary area fixed, this increases the entropy:
\begin{equation}
    \boxed{\delta S(\widehat{m})=\frac{\delta E_{BY}}{T(r)}=\frac{\delta \widehat m / \sqrt{f}}{T_{\infty} / \sqrt{f}}=\frac{\delta \widehat m}{T_{\infty}(\widehat{m})}.}
    \label{eq:entropyofshells}
\end{equation}
This is the first law of gravitational thermodynamics, obtained through IJCs. Normalization is set by (e.g.) Bogoliubov coefficients, see Appendix \ref{sec:bogoliubov}. 

The coarse-grained entropy of infinitesimally light nested shells in thermal equilibrium with a black hole can be computed from Eq.~\eqref{eq:entropyofshells}. For one infinitesimal step at radius $r$: 
\begin{equation}
    T_{\infty}(\widehat{m}+\delta\widehat{m})=T_H(m_0) \prod_k \sqrt{\frac{1+\frac{R_k^2}{L^2}-\frac{2\left(\widehat{m}+\delta \widehat{m}\right)}{R_k}}{1+\frac{R_k^2}{L^2}-\frac{2 \widehat{m}}{ R_k}}} \implies \boxed{d \ln T_{\infty}=-\frac{d \widehat{m}}{r f(r)}.}
    \label{eq:tempformulafromshell}
\end{equation}
This can determine $T_\infty$ from $\widehat{m}(r)$. For instance, the HBH ($L\rightarrow \infty$) satisfies $\widehat{m}=r/3$, for which the continuum product of shell factors gives:
\begin{equation}
    d \ln T_{\infty}=-\frac{d \widehat{m}}{\widehat{m}} \implies T_{\infty}(M)=T_{\infty}(m) \frac{m}{M} = \frac{1}{ 8 \pi M}.
    \label{eq:asymptempshell}
\end{equation}

These expressions (Eq.~\eqref{eq:entropyofshells} and Eq.~\eqref{eq:tempformulafromshell}) suffice to compute coarse-grained entropies of configurations of nested shells using simple elementary integrals. To the best of our knowledge, this method of ``printing" spacetime layer by layer while tracking the asymptotic temperature has not appeared before. This method is used below to recover results obtained from Euclidean path integral and entropy maximization methods in  \cite{Riojas:2026ytt,Riojas:2026mfu}. It is generalized further in Sec.~\ref{sec:beyondnested}, yielding a general Euler relation in Eq.~\eqref{eq:totalentropy}.

\subsubsection{Computing Coarse-Grained Entropies from IJC}\label{sec:coarsegrainedIJC}

Here we compute coarse-grained entropies using the IJC, beginning with the HBH in asymptotically flat spacetime, assuming thermal equilibrium for these systems.

The coarse-grained entropy of the ocean -- a cloud of massless particles, e.g. photons, orbiting and extending the photon sphere of the black hole -- can be computed from Eq.~\eqref{eq:entropyofshells} and Eq.~\eqref{eq:tempformulafromshell}. First Eq.~\eqref{eq:tempformulafromshell} gives $T(\widehat{m})= (8 \pi \widehat{m})^{-1}$; then Eq.~\eqref{eq:entropyofshells} gives: 
\begin{equation}
    S_{\mathrm{ocean}}\equiv S(M)-S(m)=\int_m^M 8 \pi \widehat{m} \ d \widehat{m}=4 \pi\left(M^2-m^2\right).
\end{equation}
After including the black hole with $S_{\mathrm{bh}} = 4 \pi m^2$, the total entropy of the HBH \cite{Riojas:2026ytt,Riojas:2026mfu} is:\footnote{It is important to realize that, while we lack a microphysical model, the Hawking temperature from a Bogoliubov calculation has precisely the right value for the ocean to reach thermal equilibrium, see App.~\ref{sec:bogoliubov}. The ocean's entropy density $s \propto 1/Gr$ is consistent with anisotropic thermodynamics, see Sec.~\ref{sec:entropyfixedpoint}. }
\begin{equation}
    S(M)= S_{\mathrm{bh}}(m) + S_{\mathrm{ocean}}(M,m)=4 \pi M^2 .
\end{equation}

Evaporation into an ocean is not entropically forbidden since the initial Schwarzschild black hole has the same coarse-grained entropy as an HBH. Throughout the evaporation process, for thermal equilibrium to be maintained the constant value of $T_\infty$ given by Eq.~\eqref{eq:asymptempshell} is required. This is a non-trivial \textit{output} of the model; $C_R^{-1}=0$ at the photon sphere, so the asymptotic temperature is fixed as the black hole evaporates; see Eq.~\eqref{eq:specificheatchangetemp}. 

This procedure works equally well for an ordinary Schwarzschild black hole; placing layers on the horizon of the hole builds up an entropy $S = 4 \pi M^2$. To see this, suppose successive shells are placed at their own Schwarzschild radius $R = 2\widehat{m}$. Recall that: 
\begin{equation}
    \sigma = \frac{\delta m}{4 \pi R^2 \sqrt{1-\frac{2 \widehat{m}}{r}}}, \quad \quad T_{\mathrm{loc}} = \frac{T_H}{\sqrt{1-\frac{2 \widehat{m}}{r}}}.
\end{equation}
These quantities diverge at the horizon, but the shell entropy $\delta S$ is redshift independent:
\begin{equation}
    \delta S = \frac{\delta E_{BY}}{T(r)} = \frac{\delta \widehat{m}/\alpha_-}{T_\infty(\widehat{m})/\alpha_-}  = 8 \pi \widehat{m} \ \delta \widehat{m} \implies S_{bh}(M) = 4 \pi M^2.
\end{equation}
This is consistent with the Bekenstein bound. The HBH case has been illustrated in Fig.~\ref{fig:3Dprintingentropy}.

\begin{figure}
    \centering
    \includegraphics[width=\linewidth]{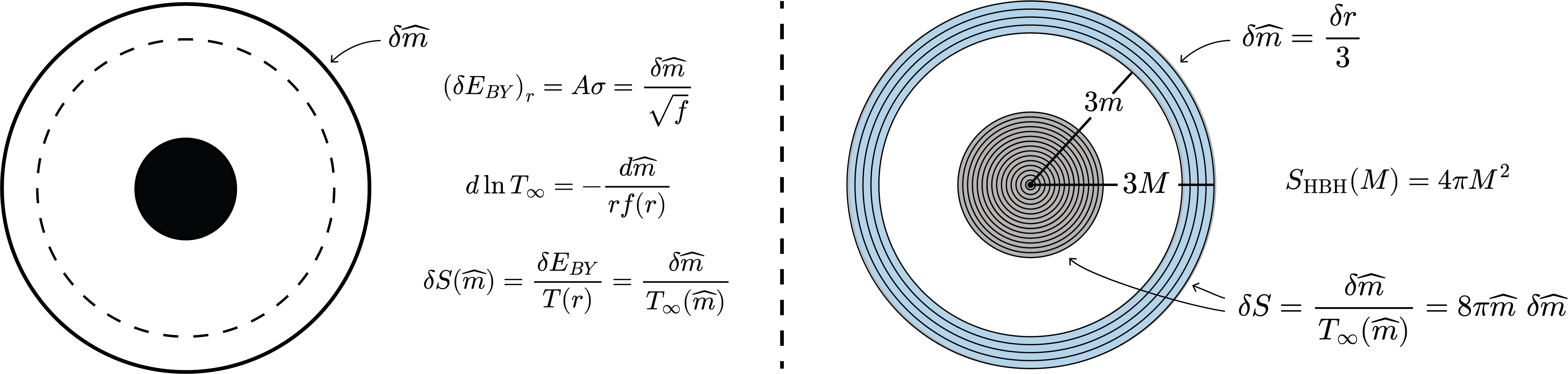}
    \caption{An illustration of the procedure where coarse-grained entropies are computed by building up the geometry layer by layer. Left: the addition of one light shell changes the Brown-York energy by $\delta \widehat{m}/{\sqrt{f}}$, and also changes the asymptotically measured Hawking temperature. This builds up some coarse-grained entropy $\delta S(\widehat{m})=\delta \widehat{m}/T_\infty(\widehat{m})$. Right: The coarse-grained entropy of a black hole of mass $m$, and an ocean of mass $M-m$, is computed by building it up layer by layer.}
    \label{fig:3Dprintingentropy}
\end{figure}

The same formalism can be readily extended to more general configurations. Now we consider Einstein clusters formed by a continuum of nested shells, recovering the results first obtained in the companion papers \cite{Riojas:2026ytt,Riojas:2026mfu} using our independent method. 

For an Einstein cluster around a black hole, the asymptotically measured Hawking temperature satisfies Eq.~\eqref{eq:tempformulafromshell}. For these solutions, Eqs.~\eqref{eq:selfsimilarrelations}-\eqref{eq:selfsimilardelta}  give $\widehat m = \nu r$ and $\delta = \frac{4 w_\perp}{1 + w_r} = 4 w_\perp$, and then:
\begin{equation}
    d \log T_{\infty} = - \frac{d \widehat m}{r f(r)} = - \frac{\nu}{1-2 \nu} \frac{d \widehat m}{\widehat m} = -\frac{\delta}{2} \frac{d \widehat m}{\widehat m}=-w_{\perp} d \log (A).
\label{eq:asymptotictemperatureA}
\end{equation}
In the last step we used $A = 4 \pi r^2$. For Schwarzschild black holes of mass $m$ this gives:
\begin{equation}
    T_\infty(M) = \frac{1}{8 \pi m} \left( \frac{m}{M}\right)^{\delta/2}.
\end{equation}
If this cluster can be held in thermal equilibrium with the black hole, Eq.~\eqref{eq:entropyofshells} shows: 
\begin{equation}
    S_O = \int_m^M \frac{d \widehat m}{T_\infty(\widehat m)} = \int_m^M d \widehat m \left( 8 \pi m \right) \left(\frac{\widehat m}{m}\right)^{\delta/2} = \frac{16 \pi}{2 + \delta} \left( M^{1+\delta/2} m^{1-\delta/2}-m^2\right).
\end{equation}
This was also obtained in Sec.~5.1 of \cite{Riojas:2026mfu} by a different method. For $\delta=2$ the total coarse-grained entropy of this system around a black hole is $S= 4 \pi M^2$. As a consistency check, note that $\frac{\delta}{2}\equiv\frac{r f'(r)}{2 f(r)}=1$ is precisely the HBH, for which massless particles orbit within an extended photon sphere. 

In anti-de Sitter spacetime we have seen that forming shells around black holes at fixed total mass cools the combined system. Unlike in asymptotically flat spacetime, Sec.~\ref{sec:Stability} shows there exists an extended region where such shells are \textit{stable} rather than marginally stable. At the local photon sphere, $f = \frac{1}{3} + \frac{9 \widehat{m}^2}{L^2}$, which using Eq.~\eqref{eq:tempformulafromshell} gives \cite{Riojas:2026mfu}: 
\begin{equation}
    d \ln T_{\infty}=-\frac{d \widehat{m}}{\widehat{m}\left(1+\frac{27 \widehat{m}^2}{L^2}\right)}\implies T_\infty(M)=T_\infty(m)\frac{m}{M} \sqrt{\frac{1+\frac{27 M^2}{L^2}}{1+\frac{27 m^2}{L^2}}} .
    \label{eq:tempcompared}
\end{equation}

This system cools as shells are added, increasing the coarse-grained entropy. The minimum temperature is reached in the regime where only a small black hole remains; it is surrounded by an ocean of radiation with $M\gg m$. The Hawking temperature, before considering shells, is $T_H(m) = {1}/{8 \pi m}$; the asymptotically measured temperature is:
\begin{equation}
    T_{\infty}(M)=\frac{1}{8 \pi M} \sqrt{1+\frac{27 M^2}{L^2}} \implies  \lim_{M \rightarrow \infty} T_{\infty}(M) = \frac{3 \sqrt{3}}{8 \pi L}.
\end{equation}
This minimum temperature satisfies $T_{\mathrm{min}} \sim 1/L$. From Eq.~\eqref{eq:entropyofshells}, $dS$ can be written as: 
\begin{equation}
    d S=\frac{d M}{T_{\infty}(M)}=\frac{8 \pi M d M}{\sqrt{1+\frac{27 M^2}{L^2}}}.
\end{equation}
This gives\footnote{This gives another perspective on the Hagedorn-like behavior (identified first in \cite{Riojas:2026mfu,RiojasStrasslerAdS}), where it was obtained from the Euclidean path integral and an Euler relation from entropy maximization, see Sec.~\ref{sec:entropyfixedpoint}.} Hagedorn-like behavior, as shown in \cite{Riojas:2026mfu,RiojasStrasslerAdS}:
\begin{equation}\label{eq:Hagedorn}
    S(M)-S(0)=\frac{8 \pi L^2}{27}\left(\sqrt{1+\frac{27 M^2}{L^2}}-1\right) \sim \frac{8 \pi L}{3 \sqrt{3}} M =\frac{M}{T_{min}} \sim M L.
\end{equation}

Let us see what occurs for more general AdS black holes of mass $m$. From Eq.~\eqref{eq:tempcompared} evaluated at $T_\infty(\widehat{m})$, the entropy from Eq.~\eqref{eq:entropyofshells} is: 
\begin{equation}
    d S_{\mathrm{shell }}=\frac{d \widehat{m}}{T_{\infty}(\widehat{m})}=\frac{1}{T_{\infty}(m)} \frac{\widehat{m}}{m}\frac{\sqrt{1+\frac{27 m^2}{L^2}} }{{\sqrt{1+\frac{27 \widehat{m}^2}{L^2}}}} \ d \widehat{m} .
\end{equation}
Including the black hole $S = \pi r_+^2$, this immediately reproduces the result of \cite{Riojas:2026mfu,RiojasStrasslerAdS}, which was obtained using the Euclidean path integral and entropy maximization methods: 
\begin{equation}
    S = \pi r_+^2 + \frac{L^2}{27 m T_\infty(m)} \sqrt{1+\frac{27 m^2}{L^2}}\left(\sqrt{1+\frac{27 M^2}{L^2}}-\sqrt{1+\frac{27 m^2}{L^2}}\right) .
\end{equation}
If thermalization is viewed as impossible for this system, which is both mechanically and thermodynamically stable (Sec.~\ref{sec:Stability}), these solutions might constrain stress-energy tensors in AdS/CFT \cite{Maldacena:1997re}, relating bulk matter content to the boundary theory on a sphere.\footnote{Hagedorn-like scaling $S \sim E L$ holds in \cite{Riojas:2026mfu} near a minimum temperature at fixed cutoff. The expected $S \sim E^{2/3}$ scaling \cite{Witten:1998zw,Witten:2024upt} of a dual CFT at large $E$ can be recovered by taking $E$ large before sending the boundary cutoff to infinity \cite{Riojas:2026mfu}. We are very grateful to Daniel Jafferis for discussion on this point. } 

If a stable shell can reach thermal equilibrium with the black hole, then canonical stability ($C_{r_0}^{-1}>0$) is incompatible with microcanonical stability. As shown in Sec.~\ref{sec:heatcapandtemp}, forming shells at fixed total mass decreases the temperature if and only if $C_{r_0}^{-1}>0$. This increases the coarse-grained entropy of the system by forming an ocean: 
\begin{equation}\label{eq:badbehavior}
S_{\mathrm {ocean }}=\int_m^M \frac{d \widehat{m}}{T_{\infty}(\widehat{m})}>\int_m^M \frac{d \widehat{m}}{T_H(\widehat{m})}=\Delta S_{\mathrm{bh}}.
\end{equation}
In any case, if black holes can cool by forming shells while remaining in thermal equilibrium, Eq.~\eqref{eq:badbehavior} will drive a feedback loop; if unchecked at large $M$, this is consistent with Eq.~\eqref{eq:Hagedorn}. 

Finally, it is interesting to consider the consequences of placing these shells where the specific heat changes sign, as we have done in asymptotically flat spacetime:
\begin{equation}
    C_R^{-1}=\frac{d T_H}{d m}+\frac{T_H}{R f(R)}= {T_H}\left(\frac{d \log T_H}{dm}+\frac{1}{R f(R)}\right)=0.
\end{equation}
We have seen in Sec.~\ref{sec:Stability} that shells at this location are mechanically stable. In thermal equilibrium, for mechanically stable matter with non-negative specific heat, this gives:
\begin{equation}
    d \ln T_{\infty}=-\frac{d \widehat{m}}{R f_{\widehat{m}}(R)}=\frac{d \ln T_H}{d \widehat{m}} d \widehat{m}=d \ln T_H(\widehat{m}),
\end{equation}

\begin{equation}
    S_{\mathrm{ocean}}=\int_m^M \frac{d \widehat{m}}{T_{\infty}(\widehat{m})}=\int_m^M \frac{d \widehat{m}}{T_H(\widehat{m})}= \int \frac{1}{\frac{1}{4 \pi r_+}\left( 1 + \frac{3 r_+^2}{L^2}\right)}\left( \frac{1}{2}\left(1+\frac{3 r_+^2}{L^2}\right)\right) dr_+ = 
    \pi r_+^2 \big|^M_m.
\end{equation}
This system has the same coarse-grained entropy as a black hole precisely because it is exactly as cold as one. In AdS this can be done only on the small black hole branch; on the large black hole branch, $C_R^{-1}$ is strictly positive. 

The result is quite general; if the inverse specific heat vanishes at the location where the shell is placed, then the entropy of the system follows directly from the first law of black hole thermodynamics. This observation is further generalized below.

\subsubsection{Euler Relation and a Self-Similar Family as Cold as a Black Hole}\label{sec:beyondnested}

Now we consider configurations where more general self-gravitating gases are contained between two Israel layers, generalizing the discussion of Sec.~\ref{sec:coarsegrainedIJC}. 

Adding the (tt) and (rr) components of Einstein's equations, Eq.~\eqref{eq:EErrcomponent} and Eq.~\eqref{eq:EEttcomponent}, the cosmological constant cancels:
\begin{equation}\label{eq:fjandstress}
    \frac{j}{8 \pi r}\left(\frac{f^{\prime}}{f}-\frac{j^{\prime}}{j}\right)=\rho+P_r.
\end{equation}
The Hawking temperature is well known to be $T_{\infty}=\frac{1}{4 \pi} \sqrt{f^{\prime}\left(r_h\right) j^{\prime}\left(r_h\right)}$. The asymptotically measured temperature therefore changes as:
\begin{equation}
    d \log T_{\infty}=\frac{1}{2}\left[\frac{j^{\prime}(r)}{j(r)}-\frac{f^{\prime}(r)}{f(r)}\right] d r=-\frac{4 \pi r\left(\rho+P_r\right)}{j(r)} d r.
\end{equation}
This recovers Eq.~\eqref{eq:tempformulafromshell} in the vacuum thin shell limit. 

The configuration is precisely as cold as a black hole when the temperature matches point-wise Eq.~\eqref{eq:asymptempshell}; for $\Lambda$=0, Eq.~\eqref{eq:EEttinteg} requires: 
\begin{equation}
    \frac{P_r}{\rho} = \frac{3j(r)-1}{1-j(r)} = \frac{r-3 \widehat m(r)}{\widehat m(r)}.
    \label{eq:temperaturemimicry}
\end{equation}
This is closely related to the condition required for an extended photon sphere to form, as explained in Appendix \ref{sec:UCO}. An extended photon sphere forms when Eq.~\eqref{eq:extendedphotonspherecondition} holds:
\begin{equation}\label{eq:extendedphotonsphereconditionmain}
    P_\perp(r) = \frac{1- 2 \widehat{m}'(r)}{8 \pi r^2}, \quad P_r = \frac{r - 3\widehat{m}(r)}{4 \pi r^3}, \quad \rho = \frac{\widehat{m}'(r)}{4 \pi r^2}, \implies \frac{P_r}{\rho} = \frac{r-3\widehat{m}(r)}{r \widehat{m}'(r)}.
\end{equation}
Thus Hod's condition \cite{Hod:2017zpi} for degenerate light rings, applied over an extended region, takes the same form as thermodynamic mimicry; for both to hold we need $\widehat{m}' = \widehat{m}/r$, i.e. self-similar gases with $\widehat{m}(r)=\nu r$ and $rf'/2f=1$. The HBH \cite{Riojas:2026ytt,Riojas:2026mfu} is the simplest case.

To explore this in more detail, we determine coarse-grained entropies of these systems, starting with the walls enclosing the self-gravitating gas. From the lower-dimensional picture on the wall, the Euler relation is that of isotropic radiation $\sigma+p=T s_{\Sigma}$; this gives:
\begin{equation}\label{eq:shellentropy}
S_{\mathrm{wall}}=\int_{\mathrm{wall}} s_{\Sigma} d A=\frac{(\sigma+p) A}{T(R)}.
\end{equation}
The IJC for this \textit{discrete} shell, in Eq.~\eqref{eq:p_over_sigma}, can be conveniently rewritten using Eq.~\eqref{eq:extrinsiccomponents}:
\begin{equation}\label{eq:Sshell}
    8 \pi(\sigma+p)=\left[K^\tau{ }_\tau\right]-\left[K^\theta{ }_\theta\right]=\frac{1}{R}\left[\sqrt{j}\left(\frac{r f^{\prime}}{2 f}-1\right)\right]_R.
\end{equation}
The induced metric must match at the interface Eq.~\eqref{eq:inducedmetricmatching}, so $f(r)$ is a continuous function, and can be moved inside the bracket. 

The locally measured Tolman temperature satisfies $T(r)=T_\infty/\sqrt{f(r)}$.  The coarse-grained entropy on the shell of area $A = 4 \pi R^2$ is then:
\begin{equation}
    S_{\mathrm{shell}}=\frac{R}{2 T_{\infty}}\left[\sqrt{f j}\left(\frac{r f^{\prime}}{2 f}-1\right)\right]_R.
\end{equation}

So it is for walls. Now we consider the coarse-grained entropy between them. The proper volume element is $dV = 4 \pi r^2 dr/\sqrt{j}$. Then Eq.~\eqref{eq:anisothermo} and Eq.~\eqref{eq:fjandstress} yield:
\begin{equation}\label{eq:Sbulk}
    S_{\mathrm{bulk}}= \frac{1}{T_\infty} \int dr \ 4 \pi r^2 (\rho + P_r) \sqrt{\frac{f(r)}{j(r)}}=\frac{1}{2 T_{\infty}} \int dr\  r \sqrt{f j}\left(\frac{f^{\prime}}{f}-\frac{j^{\prime}}{j}\right).
\end{equation}
Assembling these results, and including the black hole, the full coarse-grained entropy is: 
\begin{equation}\label{eq:totalentropy}
    \boxed{S=\frac{A_{\mathrm {H}}}{4}+\sum_{\mathrm {shells}} \frac{1}{2 T_{\infty}}\left[r\sqrt{f j}\left(\frac{r f^{\prime}}{2 f}-1\right)\right]_{R_i}+\frac{1}{2 T_{\infty}} \int_{\mathrm{gas}} r \sqrt{f j}\left(\frac{f^{\prime}}{f}-\frac{j^{\prime}}{j}\right) d r.}
\end{equation}
To sum up, the area term for the horizon is added to the coarse-grained entropy of the (an)isotropic fluid and, when they are needed, the walls. No wall is needed at the photon sphere, where its respective term vanishes. The gas term vanishes in vacuum, where $f=j$.

The Euler relation\footnote{Following identical reasoning to that of the argument originally given in Section 4.2 of \cite{Riojas:2026mfu}, this expression can be used to find an Euler relation for a \textit{misaligned} self-similar gas with \textit{massless walls}:
\begin{equation}
    2 T_{\infty} S=\frac{4+w_r(2-\delta)}{\delta+2}\left[M-m\left(\frac{m}{M}\right)^{\delta / 2}\right].
\end{equation}
Indeed, this is an Euler relation for the self-similar gas. For $\delta=2$ and $m=0$ we recover $2 T_\infty S = M$, the same as that satisfied by a black hole. This is consistent with Eq.~\eqref{eq:entropylayers} and with the earlier result in \cite{Riojas:2026mfu}.} in Eq.~\eqref{eq:totalentropy} is general, and one of the main results of this note. It permits direct evaluation of coarse-grained entropies without the Euclidean path integral. The discussion below Eq.~\eqref{eq:temperaturemimicry} now motivates us to specialize to self-similar solutions. Einstein's equations require $8 \pi r^2\left(\rho+P_{\perp}\right)=1$, the condition responsible for the optical mimicry of the HBH. These satisfy Eq.~\eqref{eq:selfsimilarrelations} and Eq.~\eqref{eq:selfsimilardelta}, which gives Eq.~\eqref{eq:extendedphotonsphereconditionmain}:
\begin{equation}\label{eq:selfsimilargas}
    \widehat{m} = \nu r, \quad \rho = \frac{\nu}{4 \pi r^2}, \quad P_r = \frac{1-3\nu}{4 \pi r^2}, \quad P_\perp = \frac{1-2\nu}{8 \pi r^2}. 
\end{equation}
Indeed, since $w_r \equiv P_r/\rho= (1-3\nu)/\nu$, the entire self-similar family satisfies the temperature mimicry condition Eq.~\eqref{eq:temperaturemimicry}.

When the assumptions given above hold, if we further assume $\widehat{m}(r)$ is \textit{continuous}, then this family has the temperature and coarse-grained entropy of a black hole of mass $M$: 
\begin{equation}
    S=4 \pi m^2+\int_m^M \frac{d \widehat{m}}{T_{\infty}\left(\widehat{m}\right)}=4 \pi m^2+8 \pi \int_m^M \widehat{m} \ d \widehat{m}=4 \pi M^2.
\end{equation}
Combining Eq.~\eqref{eq:Sshell}, Eq.~\eqref{eq:totalentropy}, and Eq.~\eqref{eq:selfsimilargas}, we find the entropy is distributed between the self-similar gas and the (massless)\footnote{These are massless because $\widehat{m}(r)$ is assumed to be continuous. Negative entropies appeared also in \cite{Brustein:2023hic}.} Israel layers as:
\begin{equation}
    S = \frac{A_{\mathrm H}}{4} - 4 \pi m(3m-R_{\mathrm{in}}) + 4 \pi M(3M-R_{\mathrm{out}})+4 \pi (1+w_r) (M^2-m^2) = 4 \pi M^2.
    \label{eq:entropylayers}
\end{equation}
These are the coarse-grained entropies of, respectively, (1) the inner black hole horizon, (2) the inner Israel layer, (3) the outer Israel layer, and (4) the self-gravitating gas between the Israel layers, where $R_\mathrm{in}=m/\nu$ and $R_{\mathrm{out}}=M/\nu$. 

This one-parameter family of self-similar solutions with $T=(8 \pi M)^{-1}$ and $S = 4 \pi M^2$ includes the $\delta=2$ line in Fig.~\ref{fig:classesofmatter}. The assumption that $\widehat{m}(r)$ is continuous imposes a constraint on the IJC; before turning to this caveat, we proceed to discuss how this coarse-grained entropy is distributed between the self-gravitating gases and the Israel layers. 

The stiffest star \cite{Banks:2002fj} satisfies $\nu=1/4$ and $w_r=1$, so the self-similar gas has $S_{\mathrm{gas}}= 8 \pi (M^2-m^2)$. The outer wall at $R_{\mathrm{out}}=4M$ has $S_{\mathrm{out}}=-4 \pi M^2$, and the inner wall at $R=4 m$ has $S_{\mathrm{in}}=4 \pi m^2$. Now turning to the frozen star \cite{Brustein:2018web,Brustein:2021lnr,Brustein:2023hic}, it satisfies $\nu=1/2$ and $w_r=-1$. In this case $sT = \rho + P_r =0$ by Eq.~\eqref{eq:anisothermo}. The self-similar gas carries no coarse-grained entropy. The walls sit at their local horizons $R_{\mathrm{in}}=2m$ and $R_{\mathrm{out}}=2M$; the inner wall cancels the horizon's area term, while the outer wall carries $S_{\mathrm{out}}=4 \pi M^2$.\footnote{
\label{fn:Frozen}This is consistent with \cite{Brustein:2023hic}, which argued that the frozen star satisfies $T=(8 \pi M)^{-1}$ and $S=4 \pi M^2$. Note the horizonless limit. The surface pressure $p\gg \sigma =0$ matches the $p_\perp \gg \rho$ transition region of \cite{Brustein:2023hic}.}

The HBH is the only \textit{aligned} member of this family, in the sense discussed in Sec.~\ref{sec:HBHintro}. The coarse-grained entropy of the ocean is $4 \pi (M^2-m^2)$. There are no Israel layers at the inner and outer edges of the HBH because $rf'/2f=1$, which is consistent with their vanishing coarse-grained entropies in Eq.~\eqref{eq:totalentropy}. For thermodynamic mimicry to hold for the other $\delta=2$ members of this family, the IJCs in Eq.~\eqref{eq:p_over_sigma} require massless walls, but $\sigma=0$ and $p = -R P_r/2 \sqrt{j}$ severely violates the DEC: $\sigma \ge |p|$ unless $P_r=0$. 

Massive walls can be considered, where $\sigma \ge 0$ is a free parameter, but then thermodynamic mimicry breaks down because Eq.~\eqref{eq:specificheatchangetemp} vanishes only for infinitesimally light shells at the photon sphere, i.e. $\sigma = 2p$. This concurs with Fig.~\ref{fig:BLP} in \cite{Brady:1991np}. There does remain the possibility of massive walls that are not thin shells. However, if $T_\infty =(8 \pi \widehat{m})^{-1}$, then $T_\infty \widehat{m}$ is constant, but at the interface of the Schwarzschild region we have, from Eq.~\eqref{eq:tempformulafromshell}:
\begin{equation}\label{eq:mimicbreakdown}
    d \log (T_\infty \widehat{m}) =\frac{R-3 \widehat{m}}{\widehat{m}(R-2 \widehat{m})} d \widehat {m} \ne 0, \quad \mathrm{unless} \quad R=3 \widehat{m}.
\end{equation}
In conclusion, massive walls break thermodynamic mimicry unless $R = 3 \widehat{m}$, but massless walls with non-vanishing transverse pressure violate DEC. This is not an issue for the HBH. Its photon sphere is \textit{aligned}; indeed, Eq.~\eqref{eq:mimicbreakdown} vanishes for the HBH because $\widehat{m}=r/3$.

\section{Discussion}

The well-known obstruction \cite{York:1986it,Brady:1991np} to placing a black hole in thermal equilibrium with self-gravitating radiation has been revisited. The thought experiment of Fig.~\ref{fig:thoughtexperiment} builds up a \textit{hillingar} black hole (HBH) \cite{Riojas:2026ytt,Riojas:2026mfu} by placing layers at $r=3\widehat{m}$, maintaining the photon sphere geometry through backreaction, thereby constructing a marginally stable exception.

The HBH has been uniquely selected by the joint stability constraints \cite{York:1986it,Brady:1991np}, and is an optical \cite{Riojas:2026ytt} and thermodynamic \cite{Riojas:2026mfu} mimic of an ordinary Schwarzschild black hole. It seems unlikely that these properties would arise together if the solution lacks a microphysical realization of any kind. A key question is whether a microscopic model can be constructed, or ruled out; ruling it out would more fully establish that self-gravitating matter cannot reach stable equilibrium with a black hole in asymptotically flat spacetime. 

Independently of the eventual dynamical status of the HBH, this work established several formal results relating photon spheres, the Israel junction conditions (IJCs), and finite-radius thermodynamics. We determined when extended photon spheres form (Eq.~\eqref{eq:extendedphotonsphereconditionmain}) and showed that the zero radial pressure limit of the anisotropic TOV equation is equivalent to the IJCs. We found the inverse specific heat at the photon sphere is proportional to $-\Lambda$, showed that forming shells cools black holes iff the specific heat is positive, and described how to compute coarse-grained entropies without the Euclidean path integral by tracking the asymptotic temperature. The stability arguments in \cite{Brady:1991np} were also generalized and extended to AdS. None of these results are conditional on the stability of the HBH.

In Sec.~\ref{sec:beyondnested} we found that a one-parameter self-similar family of solutions mimics a Schwarzschild black hole thermodynamically. Each is an extended photon sphere inside a spherical box around a black hole. The distributions of their coarse-grained entropies between the gas and walls vary throughout this family, which includes ``frozen stars" \cite{Brustein:2018web,Brustein:2021lnr,Brustein:2023hic} and ``stiffest stars" \cite{Banks:2002fj}. The HBH, the exception to the constraints \cite{York:1986it,Brady:1991np}, also happens to be the only member of this family without walls. Indeed, it satisfies $\sigma = 2p$ (as in Fig.~\ref{fig:thoughtexperiment}), and is \textit{aligned} as discussed in Sec.~\ref{sec:HBHintro} and Sec.~\ref{sec:beyondnested}. The HBH therefore satisfies thermodynamic mimicry without violating the dominant energy condition (DEC).

The HBH should be viewed as a toy model for a more realistic system, such as an equilibrium between Hawking quanta within the thermal atmosphere of a black hole and its photon sphere. Although the formal assumption of thermal equilibrium is strong, an explicit Bogoliubov computation in Appendix \ref{sec:bogoliubov} shows that the ocean redshifts the horizon so that $T_\infty=1/8\pi M$, as required for thermal equilibrium, while Eq.~\eqref{eq:specificheatchangetemp} gives the same conclusion. Still, wavelengths of modes within an ocean in equilibrium with the black hole will be of order
$M$, which seems inconsistent with a naive picture of particles on orbits. Rotation also seems likely to destabilize the ocean. For additional discussion, see \cite{Riojas:2026ytt,Riojas:2026mfu}.

Through Einstein's equations we may easily imagine a \textit{hillingar} black hole, and consider its properties. These systems may or may not exist in nature, but their construction poses formal questions that may be within reach. For example, if Eq.~\eqref{eq:temperaturemimicry} holds for a system, then $T_\infty=(8 \pi M)^{-1}$; in local thermodynamic equilibrium, Eqs.~\eqref{eq:temperaturemimicry} and \eqref{eq:anisothermo} give:
\begin{equation}
     T(r) = \frac{\hbar }{8 \pi \widehat{m}(r) \sqrt{1 - \frac{2 \widehat{m}(r)}{r}}}, \quad s(r)=\frac{2 \widehat{m}^{\prime}(r)}{G \hbar r}\left(1-\frac{2 \widehat{m}(r)}{r}\right)^{3 / 2}, \quad \widehat{m}(r) \equiv G m(r).
\end{equation}
For self-similar gases $\widehat{m} \propto r$ gives $s \sim 1/G\hbar r$ \cite{Riojas:2026mfu}: the $1/G\hbar$ of black hole entropy, in matter outside the horizon, satisfying Eq.~\eqref{eq:extendedphotonsphereconditionmain}.  Whether a microphysical model exists for any self-similar system as cold as a black hole is an open question, and one worth answering.

\begin{acknowledgments}
  I thank Matt Strassler for extensive collaboration on the companion papers \cite{Riojas:2026ytt,Riojas:2026mfu} and for many essential discussions that shaped this work.\footnote{The metric of Eq.~\eqref{fjmetric0}, which appears in \cite{DiFilippo:2024poc,Maeda:2024tsg}, and horizonless in \cite{1974RSPSA.337..529F}, was studied in \cite{Riojas:2026mfu,Riojas:2026ytt}; the present note shows it is the uniquely determined solution to the York/BLP constraints in asymptotically flat spacetime \cite{York:1986it,Brady:1991np}. The thermodynamics were obtained using the Euclidean path integral and entropy maximization in \cite{Riojas:2026mfu}; Sec.~\ref{sec:EntropyandShells} recovers these results independently using a method developed here.}   I am grateful to Andreas Karch and Lisa Randall for their support, comments, and conversations as I explored these ideas, and to Daniel Jafferis for extensive discussions. I am also grateful to Håkan Andréasson, Akshay Ghalsasi, Rashmish Mishra, Sonia Paban, Matt Reece, Suvrat Raju, Andrew Strominger, and Hongji Wei for helpful conversations. I also thank the authors of \cite{DiFilippo:2024poc} for bringing their work to my attention and for helpful correspondence. Calculations used the Mathematica packages diffgeo.m (Headrick) \cite{Headrick:diffgeo} and OGRe \cite{Shoshany:2021iuc} (Shoshany). This work was supported by the \textbf{Gra}vity, \textbf{S}pacetime, and \textbf{P}article Physics (GRASP) Initiative at Harvard University.
\end{acknowledgments}

\appendix

\section{Extended Photon Spheres, Stability, and Optical Mimicry}
\label{sec:UCO}
Ultra-compact objects (UCOs) are defined as self-gravitating systems with a light ring; substantial discussion on the stability of these objects exists in the literature \cite{Cardoso:2019rvt,Bambi:2025wjx}.

It has been conjectured, following a key observation concerning stability by Keir \cite{Keir:2014oka}, that ultra-compact objects with light rings are black holes \cite{Cardoso:2014sna}. Long-lived modes are generically expected to appear for horizonless systems with light rings, which tend to appear in pairs \cite{Cardoso:2014sna,Cardoso:2016oxy,Cunha:2017qtt}. It was shown \cite{Cunha:2017qtt} that light rings appear in pairs for smooth, horizonless, stationary ultra-compact objects under certain stated assumptions. This instability may arise from low-frequency modes trapped between the outer and inner light rings \cite{Cardoso:2016oxy,Cunha:2022gde}; there has been discussion of ``echoes" in the literature, see for example \cite{Mark:2017dnq}.

It was pointed out by Hod \cite{Hod:2017zpi} that an interesting horizonless exception exists to the argument of \cite{Cunha:2017qtt}. The exception applies to UCOs satisfying the relation:
\begin{equation}
    8 \pi r_{ps}^2\left(\rho+P_{\mathrm{\perp}}\right)=1,
\end{equation}
where $r_{ps}$ is the radius of the light ring, $\rho$ is the density of the matter fields, and $P_{\perp}$ is the transverse pressure. Hod's argument applies to constant density stars, which are horizonless objects; he gives an example where certain stars saturating $R=3M$ possess a solitary light ring in the exterior region with $r_{ps} = 3M$.

This is related to the fact that the HBH has a single light ring at $r=3M$. One finds $V_{\mathrm{eff}}(r)=\ell^2 h(r)+\epsilon f(r)$; for null geodesics $\epsilon=0$, see for example \cite{Riojas:2026ytt}. This gives:
\begin{equation}
    V_{\mathrm{eff}}=h(r) l^2, \quad V'_{\mathrm{eff}}=h'(r) l^2, \quad V''_{\mathrm{eff}}=h''(r)l^2,
\end{equation}
where $h(r) \equiv f(r)/r^2$. The stability of the light ring at the photon sphere $r=r_{\mathrm{ps}}$ hinges on the sign of  $V''_{\mathrm{eff}}(r)$, which can be straightforwardly written as: 
\begin{equation}
\left.V''_{\mathrm{eff}}\right|_{r_{\mathrm{ps}}} = \frac{2 l^2 f(r_{\mathrm{ps}})}{r_{\mathrm{ps}}^3} \frac{d}{dr} \left( \frac{r f'(r)}{2 f(r)}\right),
\label{eq:Vprimeprimecon}
\end{equation}
where we used the condition $r f'/2f=1$ to simplify the expression at $r=r_{\mathrm{ps}}$. Now we bring in the formalism developed in Sec.~\ref{sec:anisotropictov}; Eq.~\eqref{eq:EErrwithj(r)} shows that:
\begin{equation}
    \frac{r f'(r)}{2 f(r)} = \frac{\widehat m + 4 \pi r^3 P_r - \frac{1}{3} \Lambda r^3}{r - 2 \widehat m - \frac{1}{3} \Lambda r^3}.
    \label{eq:rfprimecontov}
\end{equation}
Since $r f'/2f=1$ is satisfied at the light ring, Eq.~\eqref{eq:rfprimecontov} gives $\widehat{m}=\frac{1}{3}\left( r - 4\pi r^3 P_r \right)$. Meanwhile, in this case the condition of hydrostatic equilibrium in Eq.~\eqref{eq:hydrostaticequilibrium} requires: 
\begin{equation}
    P_r'(r) = - \frac{P_r + \rho}{r}-\frac{2 \left( P_r - P_\perp\right)}{r}.
    \label{eq:hydrostaticequilibriumshadowcon}
\end{equation}
Taking the derivative in Eq.~\eqref{eq:Vprimeprimecon}, and inserting Eq.~\eqref{eq:rfprimecontov},  Eq.~\eqref{eq:hydrostaticequilibriumshadowcon}, $\widehat{m}=\frac{1}{3}\left( r - 4\pi r^3 P_r \right)$, and $\widehat{m}'(r)=4 \pi r^2 \rho$ into the resulting expression, we find:
\begin{equation}
    \left.V_{\mathrm{eff}}^{\prime \prime}\right|_{r_{\mathrm{ps}}} = \frac{-6 f(r) l^2}{r_{\mathrm{ps}}^4} \left( \frac{1-8\pi r^2 \left( P_\perp + \rho\right)}{1 - r^2 \Lambda + 8 \pi r^2 P_r} \right)=0 \iff 8 \pi r^2(P_\perp + \rho)=1.
    \label{eq:Hodcon}
\end{equation}
This generalizes Hod's \cite{Hod:2017zpi} condition. Although the HBH is not horizonless, the entire self-similar family of Sec.~\ref{sec:beyondnested}, which includes the HBH, nonetheless satisfies this relation. 

In fact there exists an infinite family of solutions satisfying this condition throughout an extended region, in addition to the HBH. Noting that $\rho(r) = \widehat{m}'(r)/4 \pi r^2$, by enforcing the condition Eq.~\eqref{eq:hydrostaticequilibriumshadowcon}, an extended photon sphere exists precisely when:\footnote{A much simpler derivation is given in Appendix C of \cite{Riojas:2026ytt}.} 
\begin{equation}\label{eq:extendedphotonspherecondition}
    \boxed{P_\perp(r) = \frac{1- 2 \widehat{m}'(r)}{8 \pi r^2}, \quad P_r = \frac{r - 3\widehat{m}(r)}{4 \pi r^3}, \quad \rho = \frac{\widehat{m}'(r)}{4 \pi r^2}}
\end{equation}
This is closely related to thermodynamic mimicry, as mentioned in the main text. The self-similar family discussed there satisfies this relation. 

It was pointed out by Hod that null circular orbits are possible throughout extended regions when a continuous and \textit{isotropic} version of his condition $8 \pi r^2\left(P_{\perp}+\rho\right)=1$ is satisfied \cite{Hod:2025bsf}. This is the isotropic case of Eq.~\eqref{eq:extendedphotonspherecondition}; it retains only the $\nu=1/4$ member of the self-similar family considered here, plus an integration constant: $\rho=(16 \pi r^2)^{-1}+\alpha$ and $p=(16 \pi r^2)^{-1}-\alpha$, where the DEC requires $\alpha \ge 0$; see also \cite{Li:2025wmd}. These solutions are self-similar only for $\alpha=0$, where $p=\rho$ recovers the stiffest star \cite{Banks:2002fj}.

\subsubsection*{Discrete Shells that Mimic Optically are Mechanically Unstable}
To supplement our discussion of optical mimicry, we show that the height of the peaks of the effective potential $V_{\mathrm{eff}}(r)\equiv h(r) \ell^2$ for nested Israel layers is controlled by their stability function $G\left(1, \alpha_{+}, \alpha_{-}\right)$ \cite{Brady:1991np} and trace $S^a{ }_a$. Ascending peaks always imply mechanical instability; hence, Israel layers always break optical mimicry, except for the HBH.

Within the vacuum Schwarzschild regions, we have: 
\begin{equation}
    \frac{d \log h}{d \log r}=-2 \frac{r-3 \widehat{m}-4 \pi r^3 P_r}{r-2 \widehat{m}-\frac{1}{3} \Lambda r^3} = -2\left(\frac{r - 3 \widehat{m}}{r j(r)}\right).
\end{equation}
The denominator is positive outside the horizon, so $h'(r)$ changes sign at $r = 3 \widehat{m}$. For static solutions Israel layers are placed \textit{outside} the photon sphere of the Schwarzschild metric. Each vacuum annulus contributes at most one point where $h'(r)=0$; these occur when an Israel layer causes $\mu$ to exceed $1/3$, as in Fig.~\ref{fig:Qmusystem}.

Geodesics that penetrate a photon sphere at $r=3M$ can, in principle, return to asymptotic infinity if a higher peak in $V_{\mathrm{eff}}$ exists within $r<3M$. For the HBH each shell sits at precisely $r = 3 \widehat{m}$. This makes $V_{\mathrm{eff}}$ a plateau; no walls are required at the edges, and so a geodesic penetrating $r=3M$ cannot return. This plateau implies optical mimicry.

The continuum limit \cite{DiFilippo:2024poc,Maeda:2024tsg} was obtained, along with the plateau, from shells in \cite{DiFilippo:2024poc}. The ascending peaks in Figure 1 of that paper, which asks whether light rings can self-gravitate, are obtained from continuity of the induced metric. Their construction reaches the same  continuum limit by a different route, which we comment on below.

Let $M_{\pm}$ be the masses enclosed on either side of the shell. Note that shells placed at the light ring $R=3M_{-}$, as in \cite{DiFilippo:2024poc}, are not necessarily rings of light. A traceless shell requires $\alpha_+ \alpha_-=1/3$, and since $\alpha_+<\alpha_-$ this places it strictly between $3M_{-}$ and $3M_{+}$; see Eq.~\eqref{eq:nesting} and Eq.~\eqref{eq:crossing}. The discrete shells considered in \cite{DiFilippo:2024poc}, each placed at $R=3M_{-}$, have positive trace and are unstable under radial perturbations \cite{Brady:1991np}, as reviewed in Sec.~\ref{sec:BLP}. Discrete shells of light have strictly descending peaks, and break optical mimicry. The ascending peaks in Fig.~1 in \cite{DiFilippo:2024poc} are not in contradiction with this, as we now explain. 

It is more generally true that ascending peaks occur only for mechanically unstable shells with positive trace; physical shells with negative trace have strictly descending peaks. It follows that the only optical mimic that can be obtained from physically realizable, or at least stable, shells is the HBH; an analogous result holds for Einstein clusters.

By continuity of the induced metric, at the location $R_k$ of the $k$'th shell:
\begin{equation}
    h\left(R_k\right)=c_{k-1} \frac{\alpha_{-}^2}{R_k^2}=c_k \frac{\alpha_{+}^2}{R_k^2} \implies \frac{c_k}{c_{k-1}}=\frac{\alpha_{-}^2}{\alpha_{+}^2}.
\end{equation}
When they occur, peaks appear at or between shells, and just outside the final shell, with
\begin{equation}
    h_k^{\text {peak }}=\frac{c_k\left(1-\frac{2}{3}\right)}{9 M_k^2}=\frac{c_k}{27 M_k^2}.
\end{equation}
The ratio of successive peak heights is then
\begin{equation}
    \frac{h_k^{\text {peak }}}{h_{k-1}^{\text {peak }}}=\left(\frac{\alpha_{-}}{\alpha_{+}}\right)^2\left(\frac{M_{-}}{M_{+}}\right)^2=\frac{\alpha_{-}^2\left(1-\alpha_{-}^2\right)^2}{\alpha_{+}^2\left(1-\alpha_{+}^2\right)^2}.
    \label{eq:peakheightsratio}
\end{equation}
To determine if the peaks increase or decrease in height, we calculate
\begin{equation}
\begin{aligned}
    \alpha_{-}\left(1-\alpha_{-}^2\right)&-\alpha_{+}\left(1-\alpha_{+}^2\right)=\left(\alpha_{-}-\alpha_{+}\right)\left(1-\left(\alpha_{+}^2+\alpha_{-}^2+\alpha_{+} \alpha_{-}\right)\right) \\
    &= 4 \pi R \sigma G\left(1, \alpha_{+}, \alpha_{-}\right)+8 \pi R \alpha_{+} \alpha_{-}\left(5 \alpha_{+}^2 \alpha_{-}^2+3 \alpha_{+} \alpha_{-}+1\right) S^a{ }_a.
    \label{eq:peakdiff}
\end{aligned}
\end{equation}
We used the stability function $G(\lambda, \alpha_+, \alpha_-)$ of BLP \cite{Brady:1991np} in Eq.~\eqref{eq:stabBLP} to rewrite the expression, as well as the trace of the stress-energy tensor on the shell:
\begin{equation}
    S^a{ }_a=2 p-\sigma=\frac{\left(\alpha_{-}-\alpha_{+}\right)\left(1-3 \alpha_{+} \alpha_{-}\right)}{8 \pi R \alpha_{+} \alpha_{-}}.
    \label{eq:traceshell}
\end{equation}
For a finite-mass shell of photons, $S^a{}_a$ vanishes and $G$ is negative, so the peak  heights necessarily \textit{decrease}. However, Fig.~1 of \cite{DiFilippo:2024poc} shows peaks of increasing height; by Eq.~\eqref{eq:peakdiff} this requires $S^a{}_a>0$, which holds for their shells at $R=3M_{-}$. Note that mechanical stability implies non-positive trace \cite{Brady:1991np}; see Fig.~\ref{fig:BLP}. 

Take $G\left(1, \alpha_{+}, \alpha_{-}\right) \geq 0$. Then from Eq.~\eqref{eq:stabBLP}
\begin{equation}
    15 \alpha_{+}^3 \alpha_{-}^3+4 \alpha_{+}^2 \alpha_{-}^2 \geq \alpha_{+}^2+\alpha_{+} \alpha_{-}+\alpha_{-}^2 \geq 3 \alpha_{+} \alpha_{-},
    \label{eq:BLPsresult}
\end{equation}
where we used $\left(\alpha_{-}-\alpha_{+}\right)^2 \geq 0$, hence $\alpha_{+}^2+\alpha_{-}^2 \geq 2 \alpha_{+} \alpha_-$. It follows that: 
\begin{equation}
    15 \alpha_{+}^2 \alpha_{-}^2+4 \alpha_{+} \alpha_{-}-3=\left(3 \alpha_{+} \alpha_{-}-1\right)\left(5 \alpha_{+} \alpha_{-}+3\right) \geq 0,
\end{equation}
which implies $\alpha_+ \alpha_- \ge 1/3$, hence $S^a{}_a \le 0$ by Eq.~\eqref{eq:traceshell}. Then mechanical stability implies non-positive trace; this is one of the results of \cite{Brady:1991np}. But then Eq.~\eqref{eq:BLPsresult} implies $\alpha_{+}^2+\alpha_{+} \alpha_{-}+\alpha_{-}^2 \geq 1$, and so the peaks descend by Eq.~\eqref{eq:peakdiff}. Similarly, we have just seen that mechanical stability $G\left(1, \alpha_{+}, \alpha_{-}\right)>0$ implies $S^a{}_a<0$, so $G\ge0$ and $S^a{}_a>0$ cannot occur together; the peaks ascend only for mechanically unstable shells with positive trace.

On the other hand, both $G(1, \alpha_+, \alpha_-)$ and $S^a{}_a$ vanish together in the traceless continuum limit, which yields a plateau. This underlies the optical mimicry of the HBH \cite{Riojas:2026ytt}.

\subsection*{Regge-Wheeler Modes of the Ocean}

Stability of the ocean requires that neither the Regge-Wheeler equation (for axial perturbations), nor the Zerilli equation (for polar perturbations), admit unstable QNMs with $\mathrm{Im}(\omega)>0$. Both equations satisfy:
\begin{equation}
    \frac{d^2 Z}{d r_*^2}+\left[\omega^2-V(r)\right] Z=0.
\end{equation}
The tortoise coordinate $r_*$ satisfies $d r_*=d r / \sqrt{f j}$. For the Schwarzschild metric it was shown by Chandrasekhar that the Regge-Wheeler and Zerilli equations have identical spectra. It is interesting to consider when this holds more generally, see \cite{Weller:2026jqu}. For non-vacuum spacetimes we use the (inverse-Cowling) Regge-Wheeler potential of \cite{Kokkotas:1999bd,Ferrari:2011rb,Boonserm:2013dua,Kojima:1991np}:
\begin{equation}
    V_{R W}(r)=f(r)\left[\frac{\ell(\ell+1)}{r^2}-\frac{6 \widehat{m}(r)}{r^3}+4 \pi\left(\rho-P_r\right)\right] .
    \label{eq:RW}
\end{equation}
An Israel layer at the edge of the ocean would complicate matters; fortunately, the HBH has none.  Additionally, the Regge-Wheeler potential is constant throughout the ocean because $f(r) \propto r^2$ and $m(r) \propto r$, with $P_r = 0$:
\begin{equation}
    \left.V_0 \equiv V_{\mathrm{RW}}\right|_{\mathrm{ocean}}=\frac{3 \ell(\ell+1)-5}{81 M^2}.
\end{equation}

Since $3 \ell(\ell+1)-5 \geq 1$ for all $\ell \ge 1$, we have $V_0 > 0$. In the ocean, $\sqrt{f j}=r /(9 M)$; then $d r_*=\frac{9 M d r}{r}$, so the tortoise coordinate is $r_*=9 M \ln r+c_1$, with $c_1$ some constant. Then the tortoise coordinate is smooth and logarithmic within the ocean: 
\begin{equation}
    \left.\Delta \equiv r_*\right|_{3 M}-\left.r_*\right|_{3 m}=9 M \ln \left(\frac{M}{m}\right)>0 .
\end{equation}
Here we establish some simple results for axial perturbations of the ocean.

\begin{proposition}
    Consider an ``ocean" region containing gravitating matter, extending from $r_{in} < r < r_{out}$, with (1) no Israel layers at the interface. Suppose (2) the tortoise coordinate is smooth and logarithmic within that region, and (3) the RW potential is constant and non-negative within that region. Then subject to a WKB approximation at the edges of this region, the Regge-Wheeler equation has no unstable QNMs. 
\end{proposition}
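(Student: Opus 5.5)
Inside the ocean the Regge-Wheeler equation has constant coefficients in the tortoise coordinate. I will solve it there exactly, match across the two edges, and show that any candidate mode with $\mathrm{Im}(\omega)>0$ gives a contradiction. By hypothesis (2) the tortoise coordinate $r_*$ is smooth on $(r_{in},r_{out})$ and covers a finite interval of length $\Delta>0$ (for the HBH, $\Delta=9M\ln(M/m)$). By hypothesis (3) the equation there becomes $Z''+(\omega^2-V_0)Z=0$ with $V_0\ge 0$. Writing $k\equiv\sqrt{\omega^2-V_0}$, with the branch chosen so that $\mathrm{Im}(k)$ has the same sign as $\mathrm{Im}(\omega)$ when $\mathrm{Im}(\omega)>0$, the interior solution is $Z=A e^{ikr_*}+Be^{-ikr_*}$. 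Hypothesis (1), the absence of Israel layers, means there is no $\delta$-function contribution to $V_{RW}$ at $r_{in}$ or $r_{out}$. Hence $Z$ and $dZ/dr_*$ are continuous across both edges, and I can use ordinary transfer matrices with no jump terms.

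Outside the ocean the QNM boundary conditions are the usual ones: purely ingoing at the horizon, $Z\sim e^{-i\omega r_*}$, and purely outgoing at infinity, $Z\sim e^{i\omega r_*}$. On each exterior Schwarzschild region I will use a WKB approximation near the edges, as the statement permits. Locally $Z\approx V_{\pm}^{\,}$-adapted plane waves $e^{\pm i k_\pm r_*}$ with $k_\pm=\sqrt{\omega^2-V(r_{edge}^\pm)}$. Continuity of $V_{RW}$ at the edges follows from (1), because $f$, $\widehat m$ and $P_r=0$ are continuous. This gives $k_\pm=k$ at the interfaces, so to leading WKB order there is no reflection at either edge. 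The condition imposed at each edge is then that only the wave propagating away from the ocean survives: the wave heading toward the horizon at $r_{in}$ and the wave heading to infinity at $r_{out}$. Equivalently, the WKB-continued exterior solutions fix the ratios $B/A$ at the two edges to be $0$ and $\infty$ respectively. For a QNM these requirements must hold simultaneously, which forces $A=B=0$ up to reflection coefficients of subleading WKB order.

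To exclude unstable modes more robustly than by this leading-order argument, I will also run the standard energy identity. I multiply the equation by $\bar Z$, integrate over $r_*$ across the whole domain, and take the imaginary part. This gives $2\,\mathrm{Re}(\omega)\mathrm{Im}(\omega)\int|Z|^2=$ boundary flux terms. With outgoing conditions and $\mathrm{Im}\,\omega>0$ the mode decays at both ends, so the flux terms vanish and $\mathrm{Re}\,\omega=0$. In that case $\omega^2=-|\omega|^2<0$. Taking the real part then yields $\int(|Z'|^2+(V+|\omega|^2)|Z|^2)=0$, which is impossible when $V\ge 0$ everywhere. This holds in the ocean by (3), and in the exterior Schwarzschild regions because the Regge-Wheeler potential is positive there for $\ell\ge 2$.

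The main obstacle is justifying that the edge WKB matching is legitimate. Here $V_{RW}$ is continuous but its derivative jumps at $r_{in}$ and $r_{out}$, so there is a small reflection of relative size $O(|V'|/k^3)$. I would bound it first and argue that it cannot create a growing mode. Reflections only redistribute flux, and the energy-identity argument above does not care about their size once positivity of $V$ and decay of $Z$ at both ends are established.
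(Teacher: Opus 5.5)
Your energy identity is what actually carries the proof, and it is a genuinely different route from the paper's.

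\textbf{The paper's route.} It never leaves the ocean. It writes $Z=Cr^{ik}+Dr^{-ik}$ and imposes $dZ/dr_*=\mp i\omega Z$ at $r_{in}$ and $r_{out}$; this is its ``WKB at the edges.'' Eliminating $D/C$ gives $e^{-2i\Omega\Delta}=g(\omega/\Omega)^2$ with $g(z)=(1-z)/(1+z)$. It then uses $|g(z)|<1\iff \mathrm{Re}\,z>0$, together with a case split on $\mathrm{Re}\,\omega$.

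\textbf{Your route.} Your full-line version instead uses exponential decay at the horizon and at infinity, plus $V_{RW}\ge 0$ everywhere. For the HBH with $\ell\ge2$ it is rigorous, needs no WKB input, and uses neither (2) nor the constancy of $V_0$. The price is an assumption outside (1)--(3): positivity of $V_{RW}$ in the exterior regions. That fails for $\ell=1$ on $2m<r<3m$, where $\ell(\ell+1)-6m/r<0$. The stated proposition, by contrast, only constrains the ocean and covers every $\ell$ with $V_0\ge0$.

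\textbf{How to get exactly the stated result.} Run the same identity on the ocean alone, with the paper's edge conditions. With $\mathcal{B}\equiv|Z(r_*^{in})|^2+|Z(r_*^{out})|^2$, the boundary term is $i\omega\,\mathcal{B}$, so
\begin{equation*}
i\omega\,\mathcal{B}-\int |\partial_{r_*}Z|^2\,dr_*+\left(\omega^2-V_0\right)\int |Z|^2\,dr_*=0 .
\end{equation*}
The imaginary part is $\mathrm{Re}\,\omega\,\bigl(\mathcal{B}+2\,\mathrm{Im}\,\omega\int|Z|^2dr_*\bigr)=0$, so $\mathrm{Re}\,\omega=0$. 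The real part then reads $-\mathrm{Im}\,\omega\,\mathcal{B}-\int|\partial_{r_*}Z|^2dr_*-(|\omega|^2+V_0)\int|Z|^2dr_*=0$, which is impossible for $\mathrm{Im}\,\omega>0$. The edge fluxes do not vanish here, but they are dissipative, which is all you need. This replaces the M\"obius lemma and both cases, and uses only $V_0\ge0$.

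\textbf{Your first argument should be dropped.} The leading-order WKB argument rests on a false claim. No Israel layer means $\widehat m$ is continuous, not $\rho$. The ocean density $1/(12\pi r^2)$ drops to zero at both edges, so $V_{RW}$ itself jumps. At $r=3M$ it goes from $(3\ell(\ell+1)-5)/(81M^2)$ to $(3\ell(\ell+1)-6)/(81M^2)$, and it jumps by the same $1/(81M^2)$ at $r=3m$. Hence $k_\pm\neq k$, and there is reflection at leading order, not $O(|V'|/k^3)$.

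That argument also gave $A=B=0$ for every $\omega$, i.e.\ no QNMs at all. This shows it does not model the edges the way the paper does. There the exterior waves carry $\omega$ rather than $\Omega$, and the mismatch $x=\omega/\Omega\neq1$ is exactly what makes $g(x)\neq0$ and permits (stable) modes. The jump is harmless for your energy identity, since a bounded piecewise-continuous $V$ still gives $Z\in C^1$.
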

By Assumption (1) there are no Israel layers, so the Regge-Wheeler equation applies. By Assumption (2), the tortoise coordinate is smooth and logarithmic within the ``ocean":
\begin{equation}
    \frac{dr_*}{dr} = \frac{\alpha}{r}  \implies \Delta \equiv r_*\left(r_{\mathrm{out}}\right)-r_*\left(r_{\mathrm{in}}\right)=\alpha \ln \frac{r_{\mathrm{out}}}{r_{\mathrm{in}}}>0 .
\end{equation}
By Assumption (3) the Regge-Wheeler potential is constant within this region, so $V_{RW}\equiv V_0$. Then the Regge-Wheeler equation is: 
\begin{equation}
    \frac{d^2 Z}{d r_*^2}+\left(\omega^2-V_0\right) Z=0.
\end{equation}
It is straightforward to show that the solution to the Regge-Wheeler equation takes the form:
\begin{equation}
    Z(r)=C r^{i k}+D r^{-i k}, \quad  \Omega^2\equiv\omega^2-V_0, \quad k\equiv\alpha \Omega.
\end{equation}
Now we obtain a condition on the QNMs satisfying the Regge-Wheeler equation, subject to the conditions (1), (2), and (3). It follows by direct calculation that: 
\begin{equation}
    \frac{Z^{\prime}}{Z}=\frac{i k}{r} \frac{1-R r^{-2 i k}}{1+R r^{-2 i k}} .
    \label{eq:logarithmicderivative}
\end{equation}
Here we have used the definitions: 
\begin{equation}
    R \equiv \frac{D}{C}, \quad x \equiv \frac{\omega}{\Omega}, \quad \varphi \equiv\left(\frac{r_{\mathrm {in }}}{r_{\mathrm {out }}}\right)^{2 i k}=e^{-2 i \Omega \Delta}.
\end{equation}
In asymptotically flat spacetime QNMs are strictly ingoing at the horizon and strictly outgoing at asymptotic infinity. The boundary conditions for the ocean are not fixed by this; we take:\footnote{Long-lived modes within the ocean can be an issue for horizonless UCOs; see \cite{Keir:2014oka,Cardoso:2014sna,Cunha:2017qtt,Hod:2017zpi,Cunha:2022gde}. The HBH satisfies Hod's condition in \cite{Hod:2017zpi}, and its flat effective potential seems unlikely to trap such modes. }
\begin{equation}
    Z \sim e^{-i \omega r_*} \quad\left(r=r_{\mathrm{in}}\right), \quad Z \sim e^{+i \omega r_*} \quad\left(r=r_{\mathrm{out}}\right).
\end{equation}
The mode is strictly ingoing toward the horizon at the bottom of the ocean, and strictly outgoing toward asymptotic infinity at the top of the ocean. This then simplifies the problem. Since $\frac{d r_*}{d r}=\frac{1}{\sqrt{f j}}$ is continuous at both boundaries, $Z^{\prime} / Z= \pm i \omega \frac{d r_*}{d r}$, which gives: 
\begin{equation}
    \left.\frac{Z^{\prime}}{Z}\right|_{r=r_{\mathrm{in}}} = -\frac{i \omega \alpha}{r_{\mathrm{in}}}, \quad \quad \left.\frac{Z^{\prime}}{Z}\right|_{r=r_{\mathrm{out}}} = \frac{i \omega \alpha}{r_{\mathrm{out}}}.
\end{equation}
Combined with Eq.~\eqref{eq:logarithmicderivative}, this leads to two matching conditions for the QNMs: 
\begin{equation}
    R r_{\mathrm{in}}^{-2 i k}=\frac{1+x}{1-x}, \quad \quad R r_{\mathrm{out}}^{-2 i k}=\frac{1-x}{1+x} .
\end{equation}
Subsequently eliminating $R$ yields the condition: 
\begin{equation}
    \left(\frac{r_{\mathrm{in}}}{r_{\mathrm{out}}}\right)^{2 i k} \frac{1+x}{1-x}=\frac{1-x}{1+x}.
\end{equation}
This QNM condition can be rewritten sharply in terms of the Möbius transformation $g(z)$: 
\begin{equation}
    \varphi=g(x)^2, \quad g(z) \equiv \frac{1-z}{1+z}.
    \label{eq:QNMcon}
\end{equation}
The properties of the Möbius transformation permit us to rule out unstable QNMs for Regge-Wheeler potentials of this form. 

\begin{lemma}
    \normalfont For any $z \in \mathbb{C}$, $|g(z)|<1$ if and only if $\text{Re}(z)>0$. The proof follows from direct computation: 
\begin{equation}
    |g(z)|^2=\frac{|1-z|^2}{|1+z|^2}=\frac{1-2 \operatorname{Re}(z)+|z|^2}{1+2 \operatorname{Re}(z)+|z|^2} .
\end{equation}
\end{lemma}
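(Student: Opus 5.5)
The claim is that $|g(z)|<1$ exactly when $\operatorname{Re}(z)>0$. I will reduce it to comparing the squared moduli of numerator and denominator, which removes the square roots and leaves a linear inequality in $\operatorname{Re}(z)$.

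\textbf{Domain.} The point $z=-1$, where $g$ has its pole, needs a remark. It has $\operatorname{Re}(z)=-1<0$, and $|g(z)|\to\infty$ as $z\to -1$, so it cannot violate the equivalence. On the extended complex plane the statement therefore holds trivially there. For every other $z$ we have $1+z\neq 0$, so $|g(z)|<1$ is equivalent to $|1-z|^2<|1+z|^2$.

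\textbf{Expansion.} Write $z=a+ib$ and expand $|1\pm z|^2=(1\pm a)^2+b^2 = 1\pm 2a+|z|^2$. This is exactly the displayed quotient in the statement. Subtracting the two expressions gives
\[
|1+z|^2-|1-z|^2 = 4\operatorname{Re}(z).
\]
The denominator $1+2\operatorname{Re}(z)+|z|^2=|1+z|^2$ is strictly positive for $z\neq -1$. Hence $|g(z)|^2<1$ holds if and only if $4\operatorname{Re}(z)>0$, which proves both directions at once.

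The argument also yields the boundary and exterior cases for free. When $\operatorname{Re}(z)=0$ we get $|g(z)|=1$, and when $\operatorname{Re}(z)<0$ we get $|g(z)|>1$. In other words, $g$ is the Cayley-type Möbius map sending the right half-plane onto the unit disk.

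\textbf{Role in the QNM argument.} This refinement is what the subsequent argument needs. Assume an unstable mode, $\operatorname{Im}\omega>0$. The plan is then to compute $|\varphi|=|e^{-2i\Omega\Delta}|=e^{2\Delta\operatorname{Im}\Omega}$ and compare it with $|g(x)|^2$ through the sign of $\operatorname{Re}(x)=\operatorname{Re}(\omega/\Omega)$, to show that the condition in Eq.~\eqref{eq:QNMcon} cannot be satisfied.

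The lemma itself has no real obstacle beyond handling $z=-1$. The only care needed downstream is tracking the branch of $\Omega=\sqrt{\omega^2-V_0}$ so that the signs of $\operatorname{Im}\Omega$ and $\operatorname{Re}(\omega/\Omega)$ are correlated. I would fix that branch by requiring $\operatorname{Im}\Omega$ to have the same sign as $\operatorname{Im}\omega$, which is possible because $V_0\ge 0$ is real.
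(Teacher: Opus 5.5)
Your proof is correct and follows the paper's own route. Expanding $|1\pm z|^2 = 1\pm 2\operatorname{Re}(z)+|z|^2$ and using that $|1+z|^2>0$ for $z\neq -1$ is exactly the direct computation the paper displays, and your explicit treatment of the pole at $z=-1$ is a bit of extra care the paper leaves implicit.
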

\begin{proposition}
    \normalfont No unstable QNMs exist with  $\operatorname{Im}(\omega)>0$, because $\varphi=g(x)^2$ is not satisfied. Here we implicitly assume the WKB approximation used previously is justified.

    \noindent\normalfont \textbf{Proof.} 
    It suffices to consider modes with $\mathrm{Re}(\omega)
    \ge0$ because Eq.~\eqref{eq:QNMcon} is invariant under $\omega \rightarrow - \omega^*$. Let $\omega = a+ i b$ with $a \ge 0 $ and $b >0$. Write $\Omega=\sqrt{\omega^2-V_0}= A+ i B$ for either choice of square root branch; we show the contradiction holds for both cases.

   Note that $|\varphi| = |e^{-2 i \Omega \Delta}|=e^{2 B \Delta}$, so $B>0$ implies $|\varphi|>1$ and $B<0$ implies $|\varphi|<1. $ 

    \medskip 
    \noindent\textbf{Case} $a>0$. From $\Omega^2=\omega^2-V_0$ with $V_0$ real: 

    \begin{equation}
\operatorname{Im}\left(\Omega^2\right)=2 A B=\operatorname{Im}\left(\omega^2\right)=2 a b \implies AB = ab.
    \end{equation}
    Then we have $\operatorname{Im}\left(\omega^2-V_0\right)=2 a b>0$. Both branches give $AB > 0$, hence $A$ and $B$ have the same sign. A short computation gives $\operatorname{Re}(x)=\frac{b\left(a^2+B^2\right)}{B|\Omega|^2}$.
    \begin{itemize}
        \item If $B>0$, then $|\varphi|>1$. However, $\mathrm{Re}(x)>0$, so $|g(x)^2|<1$. Contradiction.
        \item If $B<0$, then $|\varphi|<1$. However, $\mathrm{Re}(x)<0$, so $|g(x)^2|>1$. Contradiction. 
    \end{itemize}
    
\noindent \textbf{Case} $a=0$. Now we have $\omega = i b$ and $\omega^2 - V_0<0$. Write $\Omega= \pm i \kappa$ with $\kappa=\sqrt{b^2+V_0}>0$. Since $\varphi \equiv  e^{-2 i \Omega \Delta}=e^{\pm 2 \kappa \Delta}$,  it follows that $\varphi>1$ when $\Omega = +i \kappa$ and $\varphi <1$ when $\Omega = - i \kappa$. Note $x=\frac{\omega}{\Omega}=\frac{i b}{ \pm i \kappa}= \pm \frac{b}{\kappa}\in \mathbb{R}$;  $|x| = \frac{b}{\kappa}<1$, with $g(x) = \frac{1-x}{1+x}>0$. Since $g(0)=1$ and $g'(x)<0$, $\Omega = i \kappa$ implies $x>0$ and $g(x)^2<1$; $\Omega = - i \kappa$ implies $x<0$ and $g(x)^2>1$. Then $\varphi$ and $g(x)^2$ are on opposite sides of $1$, which contradicts $\varphi = g(x)^2$. 

\medskip

In each case the QNM condition $|\varphi|=|g(x)^2|$ cannot be satisfied for $\mathrm{Im}(\omega)>0$. $\Box$ 
\end{proposition}

\begin{proposition}
    The only geometry where Assumptions (1), (2) and (3) apply is the HBH. 
\end{proposition}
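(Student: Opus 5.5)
I would translate Assumptions (2) and (3) into ODEs for the metric functions $f,j$ and the mass function $\widehat m$, using the general machinery of Sec.~\ref{sec:anisotropictov} with $\Lambda=0$, and show that together with Assumption (1) and the matching to Schwarzschild at both edges they force $\widehat m = r/3$ in the ocean. Assumption (2) says $dr_*/dr = 1/\sqrt{fj} = \alpha/r$, i.e.
\begin{equation}
f(r)\,j(r) = \frac{r^2}{\alpha^2}.
\end{equation}
Assumption (3) says
\begin{equation}
V_{RW} = f\left[\ell(\ell+1)/r^2 - 6\widehat m/r^3 + 4\pi(\rho-P_r)\right] = V_0
\end{equation}
is constant. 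I would require this for every $\ell$, as is natural for an ocean region independent of the perturbation. Then the coefficient of $\ell(\ell+1)$, namely $f/r^2$, must itself be constant. This immediately gives $f\propto r^2$, so $rf'/2f=1$ throughout: an extended photon sphere. Combining with the first display gives $j = $ const $\equiv j_0$.

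Next, since $j = 1-2\widehat m/r$, the constancy of $j$ gives $\widehat m = \nu r$ with $\nu=(1-j_0)/2$, so the matter is self-similar. The extended photon sphere condition Eq.~\eqref{eq:extendedphotonspherecondition} then gives $\rho = \nu/4\pi r^2$ and $P_r = (1-3\nu)/4\pi r^2$, which is exactly the self-similar family Eq.~\eqref{eq:selfsimilargas}. The remaining $\ell$-independent part of $V_{RW}$ is automatically constant, since $f/r^2$ is constant and $-6\widehat m/r^3+4\pi(\rho-P_r)\propto r^{-2}$. Non-negativity of $V_0$ for all $\ell\ge1$ gives only a mild inequality on $\nu$. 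So Assumptions (2) and (3) alone select the $\delta=2$ family, not yet the HBH.

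The selection step, which I expect to be the main obstacle, uses Assumption (1): there are no Israel layers at $r_{\rm in}$ and $r_{\rm out}$, where the ocean joins Schwarzschild regions of masses $m$ and $M$. With no layer, $\widehat m$ is continuous, and the jump condition Eq.~\eqref{eq:Sshell} together with $\sigma\propto[\sqrt j]$ requires $j$ and $rf'/2f$ to be continuous across each edge. Inside the ocean $rf'/2f=1$. On the Schwarzschild side, $rf'/2f = \widehat m/(r-2\widehat m)$, so continuity forces $r_{\rm in}=3m$ and $r_{\rm out}=3M$. Continuity of $\widehat m$ then gives $\nu r_{\rm in}=m$, hence $\nu=1/3$. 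Equivalently, the massless-wall term in Eq.~\eqref{eq:entropylayers}, or Eq.~\eqref{eq:mimicbreakdown}, vanishes only when $R=3\widehat m$.

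The delicate point is justifying that ``no Israel layer'' means vanishing surface pressure as well as vanishing $\sigma$. I would argue this directly from $[K_{ab}]=0$, which is the meaning of (1). Then $\nu=1/3$, $P_r=0$ and $\rho=2P_\perp=1/12\pi r^2$, and fixing the constant in $f$ by matching at $3M$ as in Eq.~\eqref{eq:matchingexteriorsolution} reproduces exactly the HBH metric. As a consistency check, I would confirm that $\alpha=9M$ and $V_0=(3\ell(\ell+1)-5)/81M^2$, matching the values computed above.
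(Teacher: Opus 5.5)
Your proposal is correct and follows essentially the same route as the paper. The $\ell(\ell+1)$ coefficient forces $f\propto r^2$. The absence of Israel layers, meaning $[K^t{}_t]=[K^\theta{}_\theta]=0$ against Schwarzschild by Birkhoff, places the edges at $3m$ and $3M$, whence $\widehat m=r/3$ and $P_r=0$. The one difference is how constancy of $j$ is obtained: you use Assumption (2) directly, whereas the paper integrates the $\ell$-independent part of $V_{RW}$ to $\widehat m=Cr^3+ar$ and then fixes the constants by continuity of $\widehat m$ at the edges.
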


To show this result we need only Einstein's equations and the IJCs. The Regge-Wheeler potential Eq.~\eqref{eq:RW} in terms of $\widehat{m}'(r)=4 \pi r^2 \rho(r)$ reads: 
\begin{equation}
    V_{R W}=f(r)\left[\frac{\ell(\ell+1)}{r^2}-\frac{6 \widehat{m}(r)}{r^3}+\frac{\widehat{m}^{\prime}(r)}{r^2}-4 \pi P_r(r)\right] .
\end{equation}
We require that $V_{RW}$ is constant in the ocean region; since the $\ell(\ell+1)$ term is the only one carrying $\ell$ dependence, its coefficient must be constant: 
\begin{equation}
    \frac{f(r)}{r^2}=A, \quad \Rightarrow \quad f(r)=A r^2.
\end{equation}
Equivalently, for the RW to be constant within this region, the region must be an extended photon sphere. Since $f(r) \propto r^2$, Einstein's equations Eq.~\eqref{eq:EErrwithj(r)} with $\Lambda=0$ require: 
\begin{equation}
    1 = \frac{r f^{\prime}}{2 f}=\frac{\widehat{m}(r)+4 \pi r^3 P_r(r)}{r-2 \widehat{m}(r)} \implies  4 \pi r^2 P_r=1-\frac{3 \widehat{m}}{r} .
\end{equation}
Inserted into the Regge-Wheeler equation, this gives: 
\begin{equation}
    V_{R W}=A\left[\ell(\ell+1)-1+\widehat{m}^{\prime}(r)-\frac{3 \widehat{m}(r)}{r}\right] .
\end{equation}
Then for $V_{RW}$ to be constant, $\widehat{m}(r)$ takes the form: 
\begin{equation}
    \widehat{m}^{\prime}(r)-\frac{3 \widehat{m}(r)}{r}= \text{const}. \implies \widehat{m}(r)=C r^3+a r .
\end{equation}
Now we enforce the requirement that there are no Israel layers. At a static interface $r=R$, the IJCs in Eq.~\eqref{eq:Ktautau} and Eq.~\eqref{eq:Kthetatheta} require:\footnote{We note in passing that the useful relations, equivalent to Eq.~\eqref{eq:p_over_sigma}, follow from Eq.~\eqref{eq:extrinsiccomponents}:
\begin{equation}\label{eq:anotherIJCcon}
    8 \pi \sigma=-\frac{2}{R}\left[\sqrt{j}\right]_{R}, \quad \quad 8 \pi(\sigma+p)=\frac{1}{R}\left[\sqrt{j}\left(\frac{r f^{\prime}}{2 f}-1\right)\right]_R.
\end{equation}
The notion of ``alignment" in the main text can be understood as following from the thought experiment in Fig.~\ref{fig:thoughtexperiment} ($\sigma = np$ for $n=2$), or, equivalently, from the fact that $rf'/2f=1$ at the interface for the HBH.}
\begin{equation}\label{eq:extrinsiccomponents}
    K^\theta{ }_\theta=\frac{\sqrt{j}}{R}, \quad K^t{ }_t=\frac{\sqrt{j}}{2} \frac{f^{\prime}}{f} .
\end{equation}
Here we let $\epsilon=1$ without loss of generality. But since $f(r)=A r^2$, the $K^t{}_t$ and $K^\theta {}_\theta$ components of the extrinsic curvature are equal throughout the ocean: 
\begin{equation}
    K^t{ }_t=\frac{\sqrt{j}}{R}=K^\theta{ }_\theta .
\end{equation}
Then we require continuity of $j(r)$, equivalently continuity of $m(r)$, at the two interfaces: $ \widehat{m}(r_{\mathrm{in}})=m$, and $\widehat{m}(r_{\mathrm{out}})=M$. The exterior of the ocean region must be Schwarzschild by Birkhoff's theorem; the extrinsic curvature at the exterior edge reads: 
\begin{equation}
    K^\theta{ }_\theta=\frac{\sqrt{1-2M/R}}{R}, \quad K^t {}_t=\frac{M}{R^2 \sqrt{1-2 M / R}}.
\end{equation}
Within the ocean the extrinsic curvatures are equal; since there is no Israel layer, we must have $K^t{}_t = K^\theta {}_\theta$ on the Schwarzschild side as well, which gives: 
\begin{equation}
    R = 3M. 
\end{equation}
Then we have: 
\begin{equation}
    j(r)=1-\frac{2 \widehat{m}(r)}{r} \implies \widehat{m}(r) = \frac{r}{3}.
\end{equation}
Then: 
\begin{equation}
    4 \pi r^2 P_r=1-\frac{3 \widehat{m}}{r} \implies P_r = 0.
\end{equation}
This completes the argument, since we have already seen that $P_r=0$ with $r f'/2f=1$ uniquely determines the HBH.

\section{The Euclidean Action and Thermodynamics}
\subsection*{Euclidean Action from Shells}
Here we perform a consistency check. Consider the Euclidean action:
\begin{equation}
    I_E=-\frac{1}{16 \pi G} \int_{\mathcal{M}} \sqrt{g} R-\frac{1}{8 \pi G} \int_{\partial \mathcal{M}} \sqrt{\gamma} K.
\end{equation}
Decompose the Euclidean geometry into $N$ vacuum annuli separated by Israel shells at radii $R_1 < \dots <R_N$, with Schwarzschild masses $m_0 < m_1 < \dots < m_N = M$. Between shells the bulk is vacuum, so $R=0$ and the Einstein-Hilbert term vanishes. The full action reduces to a sum of GHY boundary terms. 

Using $K$, see Eq.~\eqref{eq:K_static}, the integrated GHY contribution on a constant $r$ surface is: 
\begin{equation}
    -\frac{1}{8 \pi G} \int \sqrt{\gamma} K d^3 x=-\frac{\beta r f}{2}\left(\frac{r f^{\prime}}{2 f}+2\right).
\end{equation}
At each interface $R_k$, the inner annulus (mass $m_{k-1}$, period $\beta_{k-1})$ contributes an outward facing GHY term. The outer annulus (mass $m_k$, period $\beta_k$) contributes an inward-facing GHY term. Their mismatch is: 
\begin{equation}
    \Delta_k=-\frac{R_k}{2G}\left[\beta_{k-1} f_{-}\left(\frac{R_k f_{-}^{\prime}}{2 f_{-}}+2\right)-\beta_k f_{+}\left(\frac{R_k f_{+}^{\prime}}{2 f_{+}}+2\right)\right].
\end{equation}

The induced metric on the shell must be continuous, so the proper circumference of the Euclidean time-circle matches across the interface:\footnote{For a static shell, with $\dot{t} = dt/d\tau$:
\begin{equation}\label{eq:inducedmetricmatching}
\dot{t}^2=\frac{j(R)+\dot{R}^2}{f(R) j(R)} =\frac{1}{f(R)} \quad \quad d\tau = dt_{-}\sqrt{f_{-}(R)} = d t_{+}\sqrt{f_{+}(R)} .
\end{equation}
The induced metric on the shell must match on both sides of the shell, so the proper circumference of the time-circle must agree. This recovers: 
\begin{equation}
    \beta_{-} \sqrt{f_{-}(R)}=\beta_{+} \sqrt{f_{+}(R)} \implies \frac{T_{+}}{T_{-}}=\sqrt{\frac{f_{+}(R)}{f_{-}(R)}}.
\end{equation}} 
\begin{equation}
    \beta_{k-1} \sqrt{f_{-}}=\beta_k \sqrt{f_{+}} \equiv \beta_*.
\end{equation}
Then:
\begin{equation}
    \Delta_k=-\frac{R_k \beta_*}{2G}\left[\sqrt{f_{-}}\left(\frac{R_k f_{-}^{\prime}}{2 f_{-}}+2\right)-\sqrt{f_{+}}\left(\frac{R_k f_{+}^{\prime}}{2 f_{+}}+2\right)\right]
\end{equation}
The static extrinsic curvature components are $K_\theta^\theta=\sqrt{f} / R$ and $K_\tau^\tau=f^{\prime} /(2 \sqrt{f})$, so:
\begin{equation}
    K=K_\tau^\tau+2 K_\theta^\theta=\frac{\sqrt{f}}{R}\left(\frac{R f^{\prime}}{2 f}+2\right).
\end{equation}
The GHY mismatch at the interface is: 
\begin{equation}
    \Delta_k=\frac{\beta_* R_k^2}{2G}[K].
\end{equation}
But by the Israel junction condition: 
\begin{equation}
    [K]=\frac{8 \pi G}{n} T_a^a,
\end{equation}
where $T^a {}_a=-\sigma + n p$ is the trace of the shell stress-energy. This gives:
\begin{equation}
    \Delta_k=\frac{4 \pi \beta_* R_k^2}{n} T_a^a=\frac{4 \pi \beta_* R_k^2}{n}\left(-\sigma_k+n p_k\right).
\end{equation}

The shell surface densities become the bulk densities in the continuum limit; that is, $\sigma_k \equiv \rho \delta R_k/\sqrt{j}$, $p_k \equiv P_\perp \delta R_k/\sqrt{j}$. Note also that $\beta_* \equiv \beta(R_k) \sqrt{f(R_k)}$. 

The volume element takes the form:
\begin{equation}
    \sqrt{g} d^{n+2} x=\frac{\sqrt{f}}{\sqrt{j}} r^n d \tau d r d \Omega_n,
\end{equation}
so the shell stress-tensor's trace is: 
\begin{equation}
    \frac{1}{2} \int T^\mu{ }_\mu \sqrt{g} d^{n+2} x=\frac{\beta \Omega_n}{2} \int \frac{\sqrt{f}}{\sqrt{j}} r^n\left(-\rho+n P_{\perp}\right) d r.
\end{equation}
Then, to sum up, we have:
\begin{equation}
    \lim _{k \rightarrow \infty} \sum_k \Delta_k=\frac{4 \pi}{n} \int \beta \sqrt{f} \frac{r^2}{\sqrt{j}}\left(-\rho+n P_{\perp}\right) d r=\frac{1}{2} \int T^\mu{ }_\mu \sqrt{g} d^4 x=-\frac{1}{16 \pi G} \int R \sqrt{g} d^4 x,
\end{equation}
which is the Einstein-Hilbert action for $n=2$.

\subsection*{Energy Fluctuations Near the Spinodal Point, and Non-Logarithmic Correction}
\label{sec:fluctuations}
For a black hole in thermal equilibrium with a cavity wall, the root-mean-square energy fluctuations $\Delta E$ appear to diverge when the cavity wall coincides with the photon sphere \cite{York:1986it}. The following calculation suggests these fluctuations are in fact not infinite, because the saddle-point approximation breaks down near the spinodal temperature. 

The divergence can be regulated because the saddles merge at third order in $F(E)$. Expanding to \textit{second} order in $E$ gives:
\begin{equation}
    F(E) = F(E_0) + F'(E_0)(E-E_0) + \frac{1}{2} F''(E_0) (E-E_0)^2 + \dots
\end{equation}
Near the saddle $F'(E) = 1 - T S'(E)=0$, which gives $S'(E)=1/T$. Then the quadratic term in the expansion satisfies $F''=-T S''=T^{-1} (dT/dE)=(T C_R)^{-1}$, where $C_R\equiv dE/dT$ is the heat capacity. In asymptotically flat spacetime, $C_R^{-1}$ vanishes when the cavity wall coincides with the photon sphere of the black hole. As reviewed in the introduction, see Fig.~\ref{fig:swallowtailYork}, this occurs when the branches merge at the minimum temperature $T_c$:
\begin{equation}
    M_c = \frac{r}{3}, \quad T_c = \frac{3 \sqrt{3}}{8 \pi r}, \quad E_c = r\left(1-\frac{1}{\sqrt{3}}\right), \quad F_c = r\left(1-\frac{\sqrt{3}}{2}\right).
\end{equation}
Define $\delta E \equiv E-E_c$ and $\epsilon \equiv \left(T-T_c\right)/T_c$. Expanding to \textit{third} order in $\delta E$ near $T_c$:
\begin{equation}
    F(E) = F(E_0) +\frac{1}{6} F'''(E_0) \ \delta E^3, \quad M^2 = \frac{r^2}{9} + \frac{2r}{3 \sqrt{3} }\delta E - \frac{1}{\sqrt{3} r} \delta E^3 .
\end{equation}
Since $M(E) = E-\frac{E^2}{2r}$, the free energy $F(E)$ near the minimum temperature is given by:
\begin{equation}
    F = F_c - \frac{\sqrt{3} r}{6}\epsilon - \epsilon \ \delta E + \frac{3}{2 r^2} \ \delta E^3.
\end{equation}
The partition function $Z$ can then be written as:
\begin{equation}
    Z \equiv  \int dE \ e^{- \beta F(E,T)} \simeq \int d (\delta E) e^{- \beta \left( F_c - \frac{\sqrt{3}r}{6}\epsilon \right)}e^{- \beta \left( -\epsilon \ \delta E + \frac{3}{2 r^2} \ \delta E^3\right)}.
\end{equation}
It is convenient to define an integration coordinate $x$ and auxiliary variable $\xi$, where:
\begin{equation}\label{eq:deltaEandx}
    \delta E \equiv \left( \frac{2 r^2}{3 \beta}\right)^{1/3} x, \quad \quad \xi \equiv \beta \epsilon \left( \frac{2 r^2}{3 \beta}\right)^{1/3}.
\end{equation}
Near the minimum temperature we have $0<\xi \ll 1$. Then:
\begin{equation}
    Z \simeq e^{- \beta \left( F_c - \frac{\sqrt{3} r}{6} \epsilon\right)}\left( \frac{2 r^2}{3 \beta}\right)^{\frac{1}{3}} \int dx \ e^{\xi x - x^3}.
\end{equation}
Note that $\beta \epsilon=\beta_c-\beta$. Then applying $S=\left(1-\beta \partial_\beta\right) \log Z$ gives:
\begin{equation}
    \left(1-\beta \partial_\beta\right)\left[\frac{\sqrt{3} r}{6} \beta_c-\beta\left(F_c+\frac{\sqrt{3} r}{6}\right)\right]=\frac{\sqrt{3} r}{6} \beta_c =4 \pi M_c^2 .
\end{equation}
\begin{equation}
    \left(1-\beta \partial_\beta\right) \frac{1}{3} \log \left(\frac{2 r^2}{3 \beta}\right)=\frac{1}{3} \log \left(\frac{2 r^2}{3 \beta}\right)+\frac{1}{3} .
\end{equation}
\begin{equation}
    \left.\left(1-\beta \partial_\beta\right) \log \left(\int_0^{\infty} d x e^{\xi x-x^3}\right)\right|_{\beta_c}=\log \left(\frac{1}{3} \Gamma\left(\frac{1}{3}\right)\right)+\beta_c\langle\delta E\rangle,
\end{equation}
where we used Eq.~\eqref{eq:deltaEandx} in the last line. This is a non-logarithmic correction:
\begin{equation}\label{eq:bigentropycorrection}
    S(E_c)=4 \pi M_c^2 + \left(\frac{128 \pi^2}{81}\right)^{1 / 3} \frac{\Gamma(2 / 3)}{\Gamma(1 / 3)} r^{4 / 3}+\frac{1}{6} \log (A_{PS}) +\mathcal{O}(1),
\end{equation}
at third order in $\delta E$. The energy fluctuation is computed below: 
\begin{equation}
    \left\langle(\Delta E)^2\right\rangle=\left\langle(\delta E)^2\right\rangle-\langle(\delta E)\rangle^2, \quad \quad \frac{d^n Z}{d \xi^n}\propto \int d x x^n e^{\xi x-x^3}
\end{equation}
\begin{equation}\label{eq:deltaE}
    \left\langle\left(\delta E^n\right)\right\rangle=\left(\tfrac{\int d x \exp \left(\xi x-x^3\right) x^n}{\int d x \exp \left(\xi x-x^3\right)}\right)\left(\frac{2r^2}{3\beta}\right)^{n / 3}, \quad \beta_c\langle\delta E\rangle=\left(\tfrac{128 \pi^2}{81}\right)^{1 / 3} \tfrac{\Gamma(2 / 3)}{\Gamma(1 / 3)} r^{4 / 3}.
\end{equation}
The stable black hole branch lies on the $E > E_c$ side of the spinodal transition, so we restrict to $\delta E \ge 0$, hence $x \ge 0$. These integrals can be expanded as $e^{\xi x}=\sum_{n=0}^{\infty} \frac{(\xi x)^n}{n!}$:
\begin{equation}
   \int_0^{\infty} d x x^m e^{\xi x-x^3}=\sum_{n=0}^{\infty} \frac{\xi^n}{n!} \int_0^{\infty} d x x^{m+n} e^{-x^3}=\frac{1}{3} \sum_{n=0}^{\infty} \frac{\xi^n}{n!} \Gamma\left(\frac{m+n+1}{3}\right).
\end{equation}
Only the scaling is important, because the integral is simply an $\mathcal{O}(1)$ number. Since $\beta\sim r$ and $E_c \sim r$, the energy fluctuations are suppressed as the system becomes large:
\begin{equation}
    \Delta E \equiv \sqrt{(\Delta E)^2} \sim r^{1 / 3} \implies \frac{\Delta E}{E_c }\sim A^{-1/3}.
\end{equation}
The key point here is that $\Delta E/E_c$ is suppressed at large $A$. In any case, it does not diverge. The ratio of the leading and subleading terms in Eq.~\eqref{eq:bigentropycorrection} scales as $A^{-1/3}$, which is consistent with energy fluctuations driving a large fluctuation in the horizon area.

For a shell at $r_{ps}$ we have $\xi=0$, and $\int_0^{\infty} d x x^m e^{-x^3}=\frac{1}{3} \Gamma\left(\frac{m+1}{3}\right)$. This gives:
\begin{equation}
    \langle\delta E\rangle=\left(\frac{2 r^2}{3 \beta_c}\right)^{1 / 3} \frac{\Gamma(2 / 3)}{\Gamma(1 / 3)}, \quad\left\langle(\Delta E)^2\right\rangle=\left(\frac{2 r^2}{3 \beta_c}\right)^{2 / 3} \frac{\Gamma(1 / 3)-\Gamma(2 / 3)^2}{\Gamma(1 / 3)^2}.
\end{equation}
Restoring units, these fluctuations are large:
\begin{equation}
    \left\langle(\Delta E)^2\right\rangle=m_P^2 c^4\left(\frac{\sqrt{3} r}{4 \pi \ell_P}\right)^{2 / 3} \frac{\Gamma(1 / 3)-\Gamma(2 / 3)^2}{\Gamma(1 / 3)^2} .
\end{equation}
Note that these fluctuations were computed on the \textit{stable} side of the black hole branch, where the thermal bath is within the photon sphere of the black hole. This gives energy fluctuations on the order of \textit{tons} for a solar mass black hole; we have not checked it independently of the Euclidean path integral.

\subsection*{Bogoliubov Coefficients in 1+1 Dimensions for the HBH}
\label{sec:bogoliubov}

In the usual 1+1 dimensional $s$-wave approximation, the HBH has the same Hawking spectrum as a Schwarzschild black hole of mass $M$; this holds for any interior mass $m$. This review has been included to set the normalization for the temperature of the HBH.

As is well-known, the Hawking temperature is set by the surface gravity $\kappa$ at the horizon. This can be understood in the standard way as follows. Starting with the general metric $ d s^2=-f(r) d t^2+j(r)^{-1} d r^2$, the surface gravity at the horizon takes the form: 
 \begin{equation}
     \kappa=\left.\frac{1}{2} \sqrt{f^{\prime}\left(r_{+}\right) j^{\prime}\left(r_{+}\right)}\right|_{r=2 m}.
 \end{equation}
 To determine the modes we expand in plane waves $e^{- i \omega t + i k r_*}$, where $\partial_t$ is the future-directed timelike Killing vector that defines positive frequency. The tortoise coordinate $r_*$ tracking the phase of this wave satisfies $dr_*/dr=1/\sqrt{f(r) j(r)}$ ; its light-cone coordinates are $u = t -r_*$ and $v = t+r_*$. The $(u,v)$ coordinates that describe these waves for asymptotic observers at infinity break down at the horizon; however, the horizon is regular in the local coordinate system of an infalling observer that we now consider.

 Expanding near that horizon, the local  metric takes the form:
 \begin{equation}
    d s^2 \approx-f^{\prime}\left(r_{+}\right)\left(r-r_{+}\right) d t^2+\frac{d r^2}{j^{\prime}\left(r_{+}\right)\left(r-r_{+}\right)}.
 \end{equation}
 The finite proper distance from the horizon is: 
 \begin{equation}
     \rho=\int_{r_{+}}^r \frac{d r^{\prime}}{\sqrt{j^{\prime}\left(r_{+}\right)\left(r^{\prime}-r_{+}\right)}}=\frac{2 \sqrt{r-r_{+}}}{\sqrt{j^{\prime}\left(r_{+}\right)}}.
 \end{equation}
 That gives Rindler coordinates:
 \begin{equation}
    d s^2=   \frac{-f^{\prime}\left(r_{+}\right)j^{\prime}\left(r_{+}\right) \rho^2}{4} d t^2+d \rho^2=-\kappa^2 \rho^2 d t^2+d \rho^2.
 \end{equation}
 It is well-known that Minkowski coordinates are exponentially related to Rindler: 
 \begin{equation}
     T=\rho \sinh (\kappa t), \quad X=\rho \cosh (\kappa t).
 \end{equation}
 The local Minkowski light-cone coordinates for an infalling Rindler observer are essentially the usual Kruskal coordinates:
 \begin{equation}
    V\equiv\frac{\sqrt{j^{\prime}\left(r_{+}\right)}}{2}(T+X)=e^{+\kappa v}, \quad U\equiv\frac{\sqrt{j^{\prime}\left(r_{+}\right)}}{2}(T-X)=-e^{-\kappa u}.
 \end{equation}
The reason for the Hawking effect is that Minkowski time is exponentially related to asymptotic time. This is closely related to the fact that the horizon is regular in the $(U,V)$ coordinates of an infalling observer, but not in the asymptotic coordinates:
 \begin{equation}
    \left.d s^2\right|_{r=r_+}= -d T^2+d X^2=-\frac{2}{\kappa} \sqrt{\frac{f^{\prime}\left(r_+\right)}{j^{\prime}\left(r_+\right)}} d U d V.
 \end{equation}

To find the modes we quantize, e.g., the Klein-Gordon equation for a scalar field $\psi$; in 1+1 this gives $\partial_u \partial_v \psi =\partial_U\partial_V \psi= 0$. Only the right-moving (outgoing) modes at the horizon head to asymptotic infinity. Infalling observers expand in $a$-modes with positive frequency $\nu$ in $U$; asymptotic observers use $b$-modes with positive frequency $\omega$ in $u$: 
 \begin{equation}
     \psi_R=\int_0^{\infty} \frac{d \nu}{2 \pi(2 \nu)^{1 / 2}}\left(a_\nu e^{-i \nu U}+a_\nu^{\dagger} e^{+i \nu U}\right)=\int_0^{\infty} \frac{d \omega}{2 \pi(2 \omega)^{1 / 2}}\left(b_\omega e^{-i \omega u}+b_\omega^{\dagger} e^{+i \omega u}\right).
 \end{equation}

The goal is to compute the expectation value of the number operator $b^\dagger_\omega b_{\omega'}$ in the $a$-vacuum $|\psi\rangle$ where the horizon is regular, so we find the annihilation operator $b_\omega$ for the asymptotic observer using textbook Fourier analysis: 
 \begin{equation}
    b_\omega=\sqrt{2 \omega} \int_{-\infty}^{\infty} \frac{d u}{2 \pi} e^{i \omega u} \psi_R =\int_0^{\infty} \frac{d \nu}{2 \pi}\left[\sqrt{\frac{\omega}{\nu}} \int_{-\infty}^{\infty} \frac{d u}{2 \pi} e^{i \omega u} e^{-i \nu U} a_\nu+\sqrt{\frac{\omega}{\nu}} \int_{-\infty}^{\infty} \frac{d u}{2 \pi} e^{i \omega u} e^{+i \nu U} a_\nu^{\dagger}\right]
 \end{equation}
 The prefactors for $a_\nu$ and $a_\nu^\dagger$ are the well-known Bogoliubov coefficients. The exponential map $U(u)$ converts plane waves into power laws, which do not have definite frequencies; the Mellin transform then leads to the phase difference responsible for the Hawking spectrum: 
 \begin{equation}
     \alpha_{\omega \nu}\equiv\sqrt{\frac{\omega}{\nu}} \int_{-\infty}^{\infty} \frac{d u}{2 \pi} e^{i \omega u} e^{-i \nu U(u)} = \frac{1}{2 \pi \kappa} \sqrt{\frac{\omega}{\nu}} \nu^{i \omega / \kappa} e^{\pi \omega / 2 \kappa} \Gamma\left(\frac{-i \omega}{\kappa}\right),
 \end{equation}
 \begin{equation}
     \beta_{\omega \nu}\equiv\sqrt{\frac{\omega}{\nu}} \int_{-\infty}^{\infty} \frac{d u}{2 \pi} e^{i \omega u} e^{+i \nu U(u)}=\frac{1}{2 \pi \kappa} \sqrt{\frac{\omega}{\nu}} \nu^{i \omega / \kappa} e^{-\pi \omega / 2 \kappa} \Gamma\left(\frac{-i \omega}{\kappa}\right).
 \end{equation}
The number operator for the asymptotic observer is: 
 \begin{equation}
     b_\omega^{\dagger} b_{\omega^{\prime}}=\int_0^{\infty} \frac{d \nu}{2 \pi} \frac{d \nu^{\prime}}{2 \pi}\left(\alpha_{\omega \nu}^* a_\nu^{\dagger}+\beta_{\omega \nu}^* a_\nu\right)\left(\alpha_{\omega^{\prime} \nu^{\prime}} a_{\nu^{\prime}}+\beta_{\omega^{\prime} \nu^{\prime}} a_{\nu^{\prime}}^{\dagger}\right).
 \end{equation}
 This integral sums over $\nu$-frequencies of outgoing $a$-modes, in the local Minkowski frame of an infalling observer. This notion of time is exponentially related to asymptotic time. In any case, these modes contribute to a single outgoing mode of frequency $\omega$. 
 
 By the adiabatic principle \cite{Polchinski:2016hrw} these infalling modes annihilate the $a$-vacuum: $a_\nu|\psi\rangle=0$. Then since $\langle\psi| a_{\nu^{}} a_{\nu^{\prime}}^{\dagger}|\psi\rangle=2 \pi \delta\left(\nu-\nu^{\prime}\right)$, the expectation value of the number operator is: 
 \begin{equation}
     \langle\psi| b_\omega^{\dagger} b_{\omega^{\prime}}|\psi\rangle=\int_0^{\infty} \frac{d \nu}{2 \pi} \frac{d \nu^{\prime}}{2 \pi} \beta_{\omega \nu}^* \beta_{\omega^{\prime} \nu^{\prime}}\langle\psi| a_\nu a_{\nu^{\prime}}^{\dagger}|\psi\rangle =\int_0^{\infty} \frac{d \nu}{2 \pi} \beta_{\omega \nu}^* \beta_{\omega^{\prime} \nu}=\frac{2 \pi \delta\left(\omega-\omega^{\prime}\right)}{e^{2 \pi \omega / \kappa}-1}.
 \end{equation}
The Hawking spectrum is determined from the surface gravity $\kappa$, which sets the scale for the exponential map. For the HBH we have $\kappa=1 /(4M)$, the same as for a Schwarzschild black hole of mass $M$: a non-trivial result. 

This is due to the redshift of the near-horizon geometry of an ordinary Schwarzschild black hole of mass $m$: 
 \begin{equation}
     d s^2= -\frac{2}{\kappa} \sqrt{\frac{f^{\prime}(r_+)}{j^{\prime}(r_+)}} d U d V = - 8 m \ dU dV.
 \end{equation}
We emphasize that this reproduces the asymptotic temperature $T_\infty = 1/8 \pi M$ obtained from shells in Eq.~\eqref{eq:asymptempshell}; since both share the redshift factor $f(r)$, the local temperatures $T(r) = T_\infty/\sqrt{f(r)}$ match, both within the ocean and at asymptotic infinity. 

This calculation, as usual, does not reveal where Hawking radiation originates. If a significant fraction is emitted at or near the photon sphere, it would clarify the physical meaning of our results. This is plausible in the eikonal limit, where modes near the critical impact parameter in the spherical harmonics will be amplified. 

\section{Braneworld Cosmology and Photon Spheres}
\label{sec:RSbranes}
It is interesting to consider the influence of the photon sphere on braneworld cosmology. Branes move toward and away from photon spheres in analogy to ordinary particles.\footnote{From a braneworld cosmology perspective, an expanding universe is similar to a particle escaping from the photon sphere of a black hole. Similar comments apply in other cases.}

This methodology follows Per Kraus \cite{Kraus:1999it}, but similar methods have also appeared elsewhere in the literature. We begin with the usual Randall-Sundrum ansatz:
\begin{equation}
    d s^2=d y^2+e^{-2 A(y)} \bar{\gamma}_{a b}(x) d x^a d x^b.
\end{equation}
It was shown in Sec.~\ref{sec:IJC} that only pure tension branes are static. The induced metric is a warped product: $\gamma_{a b}(y, x)=e^{-2 A(y)} \bar{\gamma}_{a b}(x)$; then: 
\begin{equation}
    K_{a b}=\frac{1}{2} \partial_y \gamma_{a b}=-A^{\prime}(y) \gamma_{a b} \implies \left[K^\tau{ }_\tau\right]=\left[K^\theta{ }_\theta\right].
\end{equation}
Then for a static brane, it follows from Eq.~\eqref{eq:p_over_sigma} that: 
\begin{equation}
    \frac{p}{\sigma}=-\frac{n-1}{n}-\frac{1}{n} \frac{\left[K^\tau {}_\tau\right]}{\left[K^\theta{ }_\theta\right]} =-1 \implies p = - \sigma.
\end{equation}
The stress-energy tensor for moving branes is not pure tension. 

For the pure tension case we can straightforwardly recover the RS solution. Note that:
\begin{equation}
    K^a{ }_b=-A'(y) \delta^a{ }_b \Rightarrow K^\tau{ }_\tau=K^\theta{ }_\theta=-A'(y) .
\end{equation}
Then from Eq.~\eqref{eq:jump_Ktautau} and $\left[K^\theta{ }_\theta\right]=-[A']$:
\begin{equation}
    [A']=\left(8 \pi G_D\right) \frac{\sigma}{n} .
\end{equation}
The relevant bulk Einstein equations are:
\begin{equation}
    G_{y y}=6\left(A^{\prime}(y)\right)^2, \quad \quad 
G_{r r}=3 e^{-2 A(y)}\left(2\left(A^{\prime}(y)\right)^2-A^{\prime \prime}(y)\right).
\end{equation}
Then assuming the bulk is AdS$_5$, we have $G_{M N}=-\Lambda_5 g_{M N}$, and the $yy$ equation gives:
\begin{equation}
    6\left(A^{\prime}\right)^2=-\Lambda_5 \quad \Longrightarrow \quad\left(A^{\prime}(y)\right)^2= k^2 .
\end{equation}
The $rr$ equation can be written as $6\left(A^{\prime}\right)^2-3 A^{\prime \prime}=-\Lambda_5$. 

It follows immediately that $A^{\prime \prime}(y)=0$. Then in each bulk region:
\begin{equation}
    A^{\prime}(y)= \pm k \quad \Longrightarrow \quad A(y)= \pm k y+A_0 .
\end{equation}
Imposing $Z_2$ symmetry around a brane at $y=0$ gives the standard RS solution. The critical energy density for Randall-Sundrum follows from the $Z_2$ symmetry. 
\begin{equation}
    A(y)=k|y|+A_0, \quad \quad  2 k=\frac{8 \pi G_5}{3} \sigma \Rightarrow \sigma=\frac{3 k}{4 \pi G_5} .
    \label{eq:criticaltension}
\end{equation}
For $Z_2$ symmetric thin shells, the angular junction condition can be written as: 
\begin{equation}
    \dot{R}^2=\left(\frac{4 \pi G_5}{3} \sigma R\right)^2-j(R) .
    \label{eq:angularjunction}
\end{equation}
Defining $H \equiv \dot{R}/R$ gives the Friedmann equation on the brane: 
\begin{equation}
    H^2=\left(\frac{4 \pi G_5}{3} \sigma\right)^2-\frac{j(R)}{R^2}.
    \label{eq:Friedmangeneral}
\end{equation}
The time-dependence of this brane is given by:
\begin{equation}
\ddot{R}+\frac{f^{\prime}(R)}{2 f(R)}\left(j(R)+\dot{R}^2\right)-\frac{j^{\prime}(R)}{2 j(R)} \dot{R}^2=-\frac{4 \pi G_5}{3} \sigma R\left[4 \pi G_5\left(\frac{2}{3} \sigma+p\right)\right].
\end{equation}
These expressions depend on the metric functions $j(r)$ and $f(r)$. If we assume empty AdS$_5$, as in the RS case, then we have $f(r)=j(r)=k_s+\frac{r^2}{\ell^2}$, which gives:
\begin{equation}
    H^2=\left(\frac{4 \pi G_5}{3} \sigma\right)^2-\frac{1}{\ell^2}-\frac{k_s}{R^2}.
    \label{eq:FRW}
\end{equation}
Here we have $k_s=(-1, 0, 1)$, with $k_s=0$ for planar slicing. This is static for $k_s=0$ when $ \sigma=\sigma_0 \equiv \frac{3k}{4 \pi G_5}$. As usual $k=1/\ell$; this gives the same critical RS tension that was determined earlier in Eq.~\eqref{eq:criticaltension}. 

In fact, consistent with previous sections, the static brane for $k=1$ sits at the photon sphere at the boundary of AdS$_5$. For this case, Eq.~\eqref{eq:angularjunction} gives: 
\begin{equation}
    1+\frac{R^2}{\ell^2}=\left(\frac{4 \pi G_5}{3} \sigma R\right)^2 \Rightarrow 1=R^2\left[\left(\frac{4 \pi G_5}{3} \sigma\right)^2-\frac{1}{\ell^2}\right] .
\end{equation}
For the RS critical tension, the bracket vanishes and the only way to satisfy the equation is $R \rightarrow \infty$. There is a ``photon sphere" at this position, since: 
\begin{equation}
    \frac{R f^{\prime}(R)}{2 f(R)}=\frac{R\left(2 R / \ell^2\right)}{2\left(1+R^2 / \ell^2\right)}=\frac{R^2 / \ell^2}{1+R^2 / \ell^2} \xrightarrow{R \rightarrow \infty} 1.
\end{equation}
This brane sits at the AdS boundary. The $k=0$ and $k=1$ cases are different foliations; the $k=0$ coordinates cover the Poincaré patch. 

The critical brane tension $\sigma_0$ makes the brane static. For $k=0$ the same condition applies at all $R$; for $k \ne 0$ the pure tension brane can only be static at the photon sphere, where $r f'(r)/2f(r)=1$. 

It is useful to rewrite these equations in terms of $\sigma_0$. Let $\rho(\tau)$ be the energy density of matter on the brane, and let $p_m(\tau)$ be the pressure of matter on the brane. Then $\sigma_{\mathrm{tot}}=\sigma_0+\rho$ and $ p_{\mathrm{tot}}=-\sigma_0+p_m$. The stress-energy tensor is:
\begin{equation}
    T_{a b}=-\sigma_0 h_{a b}+T_{a b}^{(m)}, \quad \quad T_{a b}^{(m)}=(\rho+p_m) u_a u_b+p_m h_{a b}
\end{equation}
Here $T^{(m)}_{ab}$ is the stress tensor for the matter on the brane, which is conserved by the Bianchi identities. The conservation law $ \dot{\rho}+3 H(\rho+p_m)=0 $ gives the familiar expression: 
\begin{equation}
    \frac{d \ln \rho}{d \ln R}=-3\left(1+\frac{p_m}{\rho}\right).
\end{equation}
If we assume constant coefficients $p_m=w \rho$, brane energy conservation gives: 
\begin{equation}
    \frac{d \ln \rho}{d \ln R}=-3(1+w) \implies \rho(R)=\rho_0 R^{-3(1+w)}.
\end{equation}
For $1+w < 0$, the energy density on the brane grows as the universe expands. If $\rho=0$, then $\sigma_{tot}=\sigma_0$ and $p_{tot} = -\sigma_0$, which is a critical tension brane. It is static only at the photon sphere; for $k=0$ it is static for any $R$.

In this notation, the Friedmann equation becomes:
\begin{equation}
    H^2=\left(\frac{4 \pi G_5}{3}\right)^2\left(2 \sigma_0 \rho+\rho^2\right)-\frac{k_s}{R^2} .
\end{equation}
The usual Friedmann equation is recovered when $\rho \ll \sigma_0$. This defines $G_4$ as: 
\begin{equation}
    \frac{8 \pi G_4}{3} \rho \equiv\left(\frac{4 \pi G_5}{3}\right)^2\left(2 \sigma_0 \rho\right), \quad \quad H^2=\frac{8 \pi G_4}{3} \rho\left(1+\frac{\rho}{2 \sigma_0}\right)-\frac{k_s}{R^2} .
\end{equation}
Although $\sigma_0=\frac{3}{4 \pi G_5 \ell}$ in RS, it has a slightly different value for other metrics:
\begin{equation}
    \sigma_0 = \frac{3}{4 \pi G_5} \frac{\sqrt{j\left(R\right)}}{R}, \quad \quad G_4=\frac{G_5}{\ell}=G_5 k .
\end{equation}
This is fully consistent with the well-known expression \cite{Randall:1999ee,Randall:1999vf} in the limit $r_c \rightarrow \infty$:
\begin{equation}
    M_{P l}^2=M^3 r_c \int_{-\pi}^\pi d \phi e^{-2 k r_c|\phi|}=\frac{M^3}{k}\left[1-e^{-2 k r_c \pi}\right].
\end{equation}

If we include a black hole in the bulk AdS$_5$, the argument goes through mostly unchanged. The Friedmann equation Eq.~\eqref{eq:Friedmangeneral} for $f_{ \pm}(r)=j_{\pm}(r) = k_s+\frac{r^2}{\ell^2}-\frac{\mu_{ \pm}}{r^2}$ becomes: 
\begin{equation}
    H^2=\frac{8 \pi G_4}{3} \rho\left(1+\frac{\rho}{2 \sigma_0}\right)+\frac{\mu}{R^4}-\frac{k_s}{R^2}.
\end{equation}
The only difference is that a radiation term $\mu/R^4$ has been added to the equation. 

It is convenient to write:
\begin{equation}
    H^2=\left(\frac{4 \pi G_5}{3}\right)^2\left(2 \sigma_0 \rho+\rho^2\right)+\frac{\mu}{R^4}-\frac{k_s}{R^2} .
\end{equation}
Taking the derivative with respect to $\tau$, using $\dot{\rho}+3 H\left(\rho+p_m\right)=0 $ and $\dot{R} = H R$ gives:
\begin{equation}
    \dot{H}=-3 \left(\frac{4 \pi G_5}{3}\right)^2\left(\sigma_0+\rho\right)\left(\rho+p_m\right)-\frac{2 \mu}{R^4}+\frac{k_s}{R^2}
\end{equation}
The curvature term $k_s/R^2$ can give $\dot{H}>0$ without violating NEC. 

More generally this can be written as a radius dependent curvature term: 
\begin{equation}
    \dot{H}=-3\left(\frac{4 \pi G_5}{3}\right)^2\left(\sigma_0+\rho\right)\left(\rho+p_m\right)+\frac{j(R)}{R^2}\left(1-\frac{R j^{\prime}(R)}{2 j(R)}\right) .
\end{equation}
The curvature changes at the photon sphere: negative inside, positive outside. The curvature vanishes for $k=0$ slicing, causing $R$ dependence to drop out. Since $p_m = \rho w$:
\begin{equation}
    \dot{H}\left(R_{\mathrm{ps}}\right)=-3 \left(\frac{4 \pi G_5}{3}\right)^2\left(\sigma_0+\rho\right) \rho(1+w) .
\end{equation}
The usual FRW equation holds at low energy $\left(\sigma_0+\rho\right) \approx \sigma_0$, since $\frac{8 \pi G_4}{3} \rho \equiv\left(\frac{4 \pi G_5}{3}\right)^2\left(2 \sigma_0 \rho\right)$. 

Matter with $w<-1$ and $\rho>0$ gives $\dot{H}>0$, but only by violating NEC. For braneworlds, the geometric term in $\dot{H}$ vanishes at the higher-dimensional photon sphere. This is only a different interpretation of the usual circumstance. In any case, this geometric contribution can drive $\dot{H}>0$ even for brane matter satisfying $w \ge -1$.

\bibliographystyle{JHEP}
\bibliography{biblio}

@article{Andreasson:2007ck,
    author = "Andreasson, Hakan",
    title = "{Sharp bounds on 2m/r of general spherically symmetric static objects}",
    eprint = "gr-qc/0702137",
    archivePrefix = "arXiv",
    doi = "10.1016/j.jde.2008.05.010",
    journal = "J. Diff. Eq.",
    volume = "245",
    pages = "2243--2266",
    year = "2008"
}

@article{Witten:2024upt,
    author = "Witten, Edward",
    title = "{Introduction to black hole thermodynamics}",
    eprint = "2412.16795",
    archivePrefix = "arXiv",
    primaryClass = "hep-th",
    doi = "10.1140/epjp/s13360-025-06288-y",
    journal = "Eur. Phys. J. Plus",
    volume = "140",
    number = "5",
    pages = "430",
    year = "2025"
}

@article{York:1986it,
    author = "York, Jr., James W.",
    title = "{Black hole thermodynamics and the Euclidean Einstein action}",
    doi = "10.1103/PhysRevD.33.2092",
    journal = "Phys. Rev. D",
    volume = "33",
    pages = "2092--2099",
    year = "1986"
}

@article{Gibbons:1976ue,
    author = "Gibbons, G. W. and Hawking, S. W.",
    title = "{Action Integrals and Partition Functions in Quantum Gravity}",
    reportNumber = "PRINT-76-0995 (CAMBRIDGE)",
    doi = "10.1103/PhysRevD.15.2752",
    journal = "Phys. Rev. D",
    volume = "15",
    pages = "2752--2756",
    year = "1977"
}

@article{Hawking:1982dh,
    author = "Hawking, S. W. and Page, Don N.",
    title = "{Thermodynamics of Black Holes in anti-De Sitter Space}",
    reportNumber = "PRINT-83-0019 (CAMBRIDGE)",
    doi = "10.1007/BF01208266",
    journal = "Commun. Math. Phys.",
    volume = "87",
    pages = "577",
    year = "1983"
}

@article{Witten:1998zw,
    author = "Witten, Edward",
    editor = "Bergstrom, L. and Lindstrom, U.",
    title = "{Anti-de Sitter space, thermal phase transition, and confinement in gauge theories}",
    eprint = "hep-th/9803131",
    archivePrefix = "arXiv",
    reportNumber = "IASSNS-HEP-98-21",
    doi = "10.4310/ATMP.1998.v2.n3.a3",
    journal = "Adv. Theor. Math. Phys.",
    volume = "2",
    pages = "505--532",
    year = "1998"
}

@article{Brady:1991np,
    author = "Brady, P. R. and Louko, J. and Poisson, Eric",
    title = "{Stability of a shell around a black hole}",
    doi = "10.1103/PhysRevD.44.1891",
    journal = "Phys. Rev. D",
    volume = "44",
    pages = "1891--1894",
    year = "1991"
}

@article{Sorkin:1981wd,
    author = "Sorkin, Rafael D. and Wald, Robert M. and Zhang, Zhen Jiu",
    title = "{Entropy of selfgravitating radiation}",
    doi = "10.1007/BF00759862",
    journal = "Gen. Rel. Grav.",
    volume = "13",
    pages = "1127--1146",
    year = "1981"
}

@article{Sorkin:1981jc,
    author = "Sorkin, Rafael",
    title = "{A Criterion for the onset of instability at a turning point}",
    doi = "10.1086/159282",
    journal = "Astrophys. J.",
    volume = "249",
    pages = "254--257",
    year = "1981"
}

@article{Green:2013ica,
    author = "Green, Stephen R. and Schiffrin, Joshua S. and Wald, Robert M.",
    title = "{Dynamic and Thermodynamic Stability of Relativistic, Perfect Fluid Stars}",
    eprint = "1309.0177",
    archivePrefix = "arXiv",
    primaryClass = "gr-qc",
    doi = "10.1088/0264-9381/31/3/035023",
    journal = "Class. Quant. Grav.",
    volume = "31",
    pages = "035023",
    year = "2014"
}

@article{Brustein:2023hic,
    author = "Brustein, Ram and Medved, A. J. M. and Simhon, Tamar",
    title = "{Thermodynamics of frozen stars}",
    eprint = "2310.11572",
    archivePrefix = "arXiv",
    primaryClass = "gr-qc",
    doi = "10.1103/PhysRevD.110.024066",
    journal = "Phys. Rev. D",
    volume = "110",
    number = "2",
    pages = "024066",
    year = "2024"
}

@article{Kokkotas:1999bd,
    author = "Kokkotas, Kostas D. and Schmidt, Bernd G.",
    title = "{Quasinormal modes of stars and black holes}",
    eprint = "gr-qc/9909058",
    archivePrefix = "arXiv",
    doi = "10.12942/lrr-1999-2",
    journal = "Living Rev. Rel.",
    volume = "2",
    pages = "2",
    year = "1999"
}

@article{Ferrari:2011rb,
    author = "Ferrari, Valeria",
    title = "{Gravitational waves from perturbed stars}",
    eprint = "1105.1678",
    archivePrefix = "arXiv",
    primaryClass = "gr-qc",
    journal = "Bull. Astron. Soc. India",
    volume = "39",
    pages = "203",
    year = "2011"
}

@article{Boonserm:2013dua,
    author = "Boonserm, Petarpa and Ngampitipan, Tritos and Visser, Matt",
    title = "{Regge-Wheeler equation, linear stability, and greybody factors for dirty black holes}",
    eprint = "1305.1416",
    archivePrefix = "arXiv",
    primaryClass = "gr-qc",
    doi = "10.1103/PhysRevD.88.041502",
    journal = "Phys. Rev. D",
    volume = "88",
    pages = "041502",
    year = "2013"
}

@article{Kojima:1991np,
    author = "Kojima, Y. and Yoshida, S. and Futamase, T.",
    title = "{Nonradial pulsation of a boson star. 1: Formulation}",
    doi = "10.1143/PTP.86.401",
    journal = "Prog. Theor. Phys.",
    volume = "86",
    pages = "401--410",
    year = "1991"
}

@article{Cunha:2017qtt,
    author = "Cunha, Pedro V. P. and Berti, Emanuele and Herdeiro, Carlos A. R.",
    title = "{Light-Ring Stability for Ultracompact Objects}",
    eprint = "1708.04211",
    archivePrefix = "arXiv",
    primaryClass = "gr-qc",
    doi = "10.1103/PhysRevLett.119.251102",
    journal = "Phys. Rev. Lett.",
    volume = "119",
    number = "25",
    pages = "251102",
    year = "2017"
}

@article{Keir:2014oka,
    author = "Keir, Joe",
    title = "{Slowly decaying waves on spherically symmetric spacetimes and ultracompact neutron stars}",
    eprint = "1404.7036",
    archivePrefix = "arXiv",
    primaryClass = "gr-qc",
    doi = "10.1088/0264-9381/33/13/135009",
    journal = "Class. Quant. Grav.",
    volume = "33",
    number = "13",
    pages = "135009",
    year = "2016"
}

@article{Cardoso:2014sna,
    author = "Cardoso, Vitor and Crispino, Lu{\'\i}s C. B. and Macedo, Caio F. B. and Okawa, Hirotada and Pani, Paolo",
    title = "{Light rings as observational evidence for event horizons: long-lived modes, ergoregions and nonlinear instabilities of ultracompact objects}",
    eprint = "1406.5510",
    archivePrefix = "arXiv",
    primaryClass = "gr-qc",
    doi = "10.1103/PhysRevD.90.044069",
    journal = "Phys. Rev. D",
    volume = "90",
    number = "4",
    pages = "044069",
    year = "2014"
}

@article{Mark:2017dnq,
    author = "Mark, Zachary and Zimmerman, Aaron and Du, Song Ming and Chen, Yanbei",
    title = "{A recipe for echoes from exotic compact objects}",
    eprint = "1706.06155",
    archivePrefix = "arXiv",
    primaryClass = "gr-qc",
    reportNumber = "LIGO-P1700145",
    doi = "10.1103/PhysRevD.96.084002",
    journal = "Phys. Rev. D",
    volume = "96",
    number = "8",
    pages = "084002",
    year = "2017"
}

@article{Hawking:1975vcx,
    author = "Hawking, S. W.",
    editor = "Gibbons, G. W. and Hawking, S. W.",
    title = "{Particle Creation by Black Holes}",
    doi = "10.1007/BF02345020",
    journal = "Commun. Math. Phys.",
    volume = "43",
    pages = "199--220",
    year = "1975",
    note = "[Erratum: Commun.Math.Phys. 46, 206 (1976)]"
}

@article{Bekenstein:1972tm,
    author = "Bekenstein, J. D.",
    title = "{Black holes and the second law}",
    doi = "10.1007/BF02757029",
    journal = "Lett. Nuovo Cim.",
    volume = "4",
    pages = "737--740",
    year = "1972"
}

@article{Bekenstein:1974ax,
    author = "Bekenstein, Jacob D.",
    title = "{Generalized second law of thermodynamics in black hole physics}",
    doi = "10.1103/PhysRevD.9.3292",
    journal = "Phys. Rev. D",
    volume = "9",
    pages = "3292--3300",
    year = "1974"
}

@article{Israel:1966rt,
    author = "Israel, W.",
    title = "{Singular hypersurfaces and thin shells in general relativity}",
    doi = "10.1007/BF02710419",
    journal = "Nuovo Cim. B",
    volume = "44S10",
    pages = "1",
    year = "1966",
    note = "[Erratum: Nuovo Cim.B 48, 463 (1967)]"
}

@article{Oppenheimer:1939ne,
    author = "Oppenheimer, J. R. and Volkoff, G. M.",
    title = "{On massive neutron cores}",
    doi = "10.1103/PhysRev.55.374",
    journal = "Phys. Rev.",
    volume = "55",
    pages = "374--381",
    year = "1939"
}

@article{Tolman:1939jz,
    author = "Tolman, Richard C.",
    title = "{Static solutions of Einstein's field equations for spheres of fluid}",
    doi = "10.1103/PhysRev.55.364",
    journal = "Phys. Rev.",
    volume = "55",
    pages = "364--373",
    year = "1939"
}

@article{Tolman:1930zza,
    author = "Tolman, Richard C.",
    title = "{On the Weight of Heat and Thermal Equilibrium in General Relativity}",
    doi = "10.1103/PhysRev.35.904",
    journal = "Phys. Rev.",
    volume = "35",
    pages = "904--924",
    year = "1930"
}

@article{Buchdahl:1959zz,
    author = "Buchdahl, Hans A.",
    title = "{General Relativistic Fluid Spheres}",
    doi = "10.1103/PhysRev.116.1027",
    journal = "Phys. Rev.",
    volume = "116",
    pages = "1027",
    year = "1959"
}

@article{Bowers:1974tgi,
    author = "Bowers, Richard L. and Liang, E. P. T.",
    title = "{Anisotropic Spheres in General Relativity}",
    doi = "10.1086/152760",
    journal = "Astrophys. J.",
    volume = "188",
    pages = "657--665",
    year = "1974"
}

@article{Herrera:1997plx,
    author = "Herrera, L. and Santos, N. O.",
    title = "{Local anisotropy in self-gravitating systems}",
    doi = "10.1016/S0370-1573(96)00042-7",
    journal = "Phys. Rept.",
    volume = "286",
    pages = "53--130",
    year = "1997"
}

@article{Cardoso:2021wlq,
    author = "Cardoso, Vitor and Destounis, Kyriakos and Duque, Francisco and Macedo, Rodrigo Panosso and Maselli, Andrea",
    title = "{Black holes in galaxies: Environmental impact on gravitational-wave generation and propagation}",
    eprint = "2109.00005",
    archivePrefix = "arXiv",
    primaryClass = "gr-qc",
    doi = "10.1103/PhysRevD.105.L061501",
    journal = "Phys. Rev. D",
    volume = "105",
    number = "6",
    pages = "L061501",
    year = "2022"
}

@article{Brown:1992br,
    author = "Brown, J. David and York, Jr., James W.",
    title = "{Quasilocal energy and conserved charges derived from the gravitational action}",
    eprint = "gr-qc/9209012",
    archivePrefix = "arXiv",
    reportNumber = "IFP-423-UNC, TAR-009-UNC",
    doi = "10.1103/PhysRevD.47.1407",
    journal = "Phys. Rev. D",
    volume = "47",
    pages = "1407--1419",
    year = "1993"
}

@article{Maldacena:1997re,
    author = "Maldacena, Juan Martin",
    title = "{The Large $N$ limit of superconformal field theories and supergravity}",
    eprint = "hep-th/9711200",
    archivePrefix = "arXiv",
    reportNumber = "HUTP-97-A097, HUTP-98-A097",
    doi = "10.4310/ATMP.1998.v2.n2.a1",
    journal = "Adv. Theor. Math. Phys.",
    volume = "2",
    pages = "231--252",
    year = "1998"
}

@article{Randall:1999ee,
    author = "Randall, Lisa and Sundrum, Raman",
    title = "{A Large mass hierarchy from a small extra dimension}",
    eprint = "hep-ph/9905221",
    archivePrefix = "arXiv",
    reportNumber = "MIT-CTP-2860, PUPT-1860, BUHEP-99-9",
    doi = "10.1103/PhysRevLett.83.3370",
    journal = "Phys. Rev. Lett.",
    volume = "83",
    pages = "3370--3373",
    year = "1999"
}

@article{Randall:1999vf,
    author = "Randall, Lisa and Sundrum, Raman",
    title = "{An Alternative to compactification}",
    eprint = "hep-th/9906064",
    archivePrefix = "arXiv",
    reportNumber = "MIT-CTP-2874, PUPT-1867, BUHEP-99-13",
    doi = "10.1103/PhysRevLett.83.4690",
    journal = "Phys. Rev. Lett.",
    volume = "83",
    pages = "4690--4693",
    year = "1999"
}

@article{Bardeen:1972fi,
    author = "Bardeen, James M. and Press, William H. and Teukolsky, Saul A",
    title = "{Rotating black holes: Locally nonrotating frames, energy extraction, and scalar synchrotron radiation}",
    reportNumber = "OAP-288",
    doi = "10.1086/151796",
    journal = "Astrophys. J.",
    volume = "178",
    pages = "347",
    year = "1972"
}

@article{Andreasson:2021lsh,
    author = "Andr{\'e}asson, H{\r{a}}kan",
    title = "{Existence of Steady States of the Massless Einstein{\textendash}Vlasov System Surrounding a Schwarzschild Black Hole}",
    eprint = "2102.08170",
    archivePrefix = "arXiv",
    primaryClass = "gr-qc",
    doi = "10.1007/s00023-021-01104-6",
    journal = "Annales Henri Poincare",
    volume = "22",
    number = "12",
    pages = "4271--4297",
    year = "2021"
}

@article{Frauendiener_1990,
doi = {10.1088/0264-9381/7/4/011},
url = {https://doi.org/10.1088/0264-9381/7/4/011},
year = {1990},
month = {apr},
publisher = {},
volume = {7},
number = {4},
pages = {585},
author = {J Frauendiener and C Hoenselaers and W Konrad},
title = {A shell around a black hole},
journal = {Classical and Quantum Gravity}
}

@article{Hawking:1976de,
    author = "Hawking, S. W.",
    title = "{Black Holes and Thermodynamics}",
    doi = "10.1103/PhysRevD.13.191",
    journal = "Phys. Rev. D",
    volume = "13",
    pages = "191--197",
    year = "1976"
}

@article{Bekenstein:1973ur,
    author = "Bekenstein, Jacob D.",
    title = "{Black holes and entropy}",
    doi = "10.1103/PhysRevD.7.2333",
    journal = "Phys. Rev. D",
    volume = "7",
    pages = "2333--2346",
    year = "1973"
}

@article{Einstein:1939ms,
    author = "Einstein, Albert",
    title = "{On a stationary system with spherical symmetry consisting of many gravitating masses}",
    doi = "10.2307/1968902",
    journal = "Annals Math.",
    volume = "40",
    pages = "922--936",
    year = "1939"
}

@ARTICLE{1974RSPSA.337..529F,
       author = {{Florides}, P.~S.},
        title = "{A New Interior Schwarzschild Solution}",
      journal = {Proceedings of the Royal Society of London Series A},
         year = 1974,
        month = apr,
       volume = {337},
       number = {1611},
        pages = {529-535},
          doi = {10.1098/rspa.1974.0065},
       adsurl = {https://ui.adsabs.harvard.edu/abs/1974RSPSA.337..529F}
}

@article{Kraus:1999it,
    author = "Kraus, Per",
    title = "{Dynamics of anti-de Sitter domain walls}",
    eprint = "hep-th/9910149",
    archivePrefix = "arXiv",
    reportNumber = "EFI-99-43",
    doi = "10.1088/1126-6708/1999/12/011",
    journal = "JHEP",
    volume = "12",
    pages = "011",
    year = "1999"
}

@article{Karch:2000ct,
    author = "Karch, Andreas and Randall, Lisa",
    editor = "Duff, Michael J. and Liu, J. T. and Lu, J.",
    title = "{Locally localized gravity}",
    eprint = "hep-th/0011156",
    archivePrefix = "arXiv",
    reportNumber = "MIT-CTP-3099",
    doi = "10.1088/1126-6708/2001/05/008",
    journal = "JHEP",
    volume = "05",
    pages = "008",
    year = "2001"
}

@article{PhysRevD.25.330,
  title = {Instability of flat space at finite temperature},
  author = {Gross, David J. and Perry, Malcolm J. and Yaffe, Laurence G.},
  journal = {Phys. Rev. D},
  volume = {25},
  issue = {2},
  pages = {330--355},
  numpages = {0},
  year = {1982},
  month = {Jan},
  publisher = {American Physical Society},
  doi = {10.1103/PhysRevD.25.330},
  url = {https://link.aps.org/doi/10.1103/PhysRevD.25.330}
}

@article{Banks:2002fj,
    author = "Banks, T. and Fischler, W. and Kashani-Poor, A. and McNees, R. and Paban, S.",
    title = "{Entropy of the stiffest stars}",
    eprint = "hep-th/0206096",
    archivePrefix = "arXiv",
    reportNumber = "RUNHETC-2002-20, UTTG-04-02",
    doi = "10.1088/0264-9381/19/18/307",
    journal = "Class. Quant. Grav.",
    volume = "19",
    pages = "4717--4728",
    year = "2002"
}

@article{Andre:2021ctu,
    author = "Andr{\'e}, Rui and Lemos, Jos{\'e} P. S.",
    title = "{Thermodynamics of $d$-dimensional Schwarzschild black holes in the canonical ensemble}",
    eprint = "2101.11010",
    archivePrefix = "arXiv",
    primaryClass = "hep-th",
    doi = "10.1103/PhysRevD.103.064069",
    journal = "Phys. Rev. D",
    volume = "103",
    number = "6",
    pages = "064069",
    year = "2021"
}

@article{Hod:2017zpi,
    author = "Hod, Shahar",
    title = "{On the number of light rings in curved spacetimes of ultra-compact objects}",
    eprint = "1710.00836",
    archivePrefix = "arXiv",
    primaryClass = "gr-qc",
    doi = "10.1016/j.physletb.2017.11.021",
    journal = "Phys. Lett. B",
    volume = "776",
    pages = "1--4",
    year = "2018"
}

@article{Kim:2019ygw,
    author = "Kim, Hyeong-Chan and Lee, Youngone",
    title = "{Entropy of self-gravitating anisotropic matter}",
    eprint = "1901.03148",
    archivePrefix = "arXiv",
    primaryClass = "hep-th",
    doi = "10.1140/epjc/s10052-019-7189-2",
    journal = "Eur. Phys. J. C",
    volume = "79",
    number = "8",
    pages = "679",
    year = "2019",
    note = "[Erratum: Eur.Phys.J.C 79, 977 (2019)]"
}

@article{Cardoso:2016oxy,
    author = "Cardoso, Vitor and Hopper, Seth and Macedo, Caio F. B. and Palenzuela, Carlos and Pani, Paolo",
    title = "{Gravitational-wave signatures of exotic compact objects and of quantum corrections at the horizon scale}",
    eprint = "1608.08637",
    archivePrefix = "arXiv",
    primaryClass = "gr-qc",
    doi = "10.1103/PhysRevD.94.084031",
    journal = "Phys. Rev. D",
    volume = "94",
    number = "8",
    pages = "084031",
    year = "2016"
}

@article{Poincare:1886xr,
    author = "Poincare, H.",
    title = "{On the irregular integrals of linear equations}",
    doi = "10.1007/BF02417092",
    journal = "Acta Math.",
    volume = "8",
    pages = "295--344",
    year = "1886"
}

@ARTICLE{1972NPhS..240...77G,
       author = {{Gibbons}, G.~W.},
        title = "{On Lowering a Rope into a Black Hole}",
      journal = {Nature Physical Science},
         year = 1972,
        month = nov,
       volume = {240},
       number = {100},
        pages = {77},
          doi = {10.1038/physci240077a0},
       adsurl = {https://ui.adsabs.harvard.edu/abs/1972NPhS..240...77G}
}

@article{Brown:2012un,
    author = "Brown, Adam R.",
    title = "{Tensile Strength and the Mining of Black Holes}",
    eprint = "1207.3342",
    archivePrefix = "arXiv",
    primaryClass = "gr-qc",
    doi = "10.1103/PhysRevLett.111.211301",
    journal = "Phys. Rev. Lett.",
    volume = "111",
    number = "21",
    pages = "211301",
    year = "2013"
}

@article{Almheiri:2012rt,
    author = "Almheiri, Ahmed and Marolf, Donald and Polchinski, Joseph and Sully, James",
    title = "{Black Holes: Complementarity or Firewalls?}",
    eprint = "1207.3123",
    archivePrefix = "arXiv",
    primaryClass = "hep-th",
    doi = "10.1007/JHEP02(2013)062",
    journal = "JHEP",
    volume = "02",
    pages = "062",
    year = "2013"
}

@inproceedings{Polchinski:2016hrw,
    author = "Polchinski, Joseph",
    title = "{The black hole information problem.}",
    booktitle = "{Theoretical Advanced Study Institute in Elementary Particle Physics}: {New Frontiers in Fields and Strings}",
    eprint = "1609.04036",
    archivePrefix = "arXiv",
    primaryClass = "hep-th",
    doi = "10.1142/9789813149441_0006",
    pages = "353--397",
    year = "2017"
}

@article{Hawking:1974rv,
    author = "Hawking, S. W.",
    title = "{Black hole explosions}",
    doi = "10.1038/248030a0",
    journal = "Nature",
    volume = "248",
    pages = "30--31",
    year = "1974"
}

@article{Cardoso:2019rvt,
    author = "Cardoso, Vitor and Pani, Paolo",
    title = "{Testing the nature of dark compact objects: a status report}",
    eprint = "1904.05363",
    archivePrefix = "arXiv",
    primaryClass = "gr-qc",
    doi = "10.1007/s41114-019-0020-4",
    journal = "Living Rev. Rel.",
    volume = "22",
    number = "1",
    pages = "4",
    year = "2019"
}

@article{Cardoso:2008bp,
    author = "Cardoso, Vitor and Miranda, Alex S. and Berti, Emanuele and Witek, Helvi and Zanchin, Vilson T.",
    title = "{Geodesic stability, Lyapunov exponents and quasinormal modes}",
    eprint = "0812.1806",
    archivePrefix = "arXiv",
    primaryClass = "hep-th",
    doi = "10.1103/PhysRevD.79.064016",
    journal = "Phys. Rev. D",
    volume = "79",
    number = "6",
    pages = "064016",
    year = "2009"
}

@article{Brustein:2018web,
    author = "Brustein, Ram and Medved, A. J. M.",
    title = "{Resisting collapse: How matter inside a black hole can withstand gravity}",
    eprint = "1805.11667",
    archivePrefix = "arXiv",
    primaryClass = "hep-th",
    doi = "10.1103/PhysRevD.99.064019",
    journal = "Phys. Rev. D",
    volume = "99",
    number = "6",
    pages = "064019",
    year = "2019"
}

@article{Martinez:1996ni,
    author = "Martinez, Erik A.",
    title = "{Fundamental thermodynamical equation of a selfgravitating system}",
    eprint = "gr-qc/9601037",
    archivePrefix = "arXiv",
    reportNumber = "CGPG-96-1-6",
    doi = "10.1103/PhysRevD.53.7062",
    journal = "Phys. Rev. D",
    volume = "53",
    pages = "7062--7072",
    year = "1996"
}

@article{Andre:2019zzo,
    author = "Andr{\'e}, Rui and Lemos, Jos{\'e} P. S. and Quinta, Gon{\c{c}}alo M.",
    title = "{Thermodynamics and entropy of self-gravitating matter shells and black holes in $d$ dimensions}",
    eprint = "1905.05239",
    archivePrefix = "arXiv",
    primaryClass = "hep-th",
    doi = "10.1103/PhysRevD.99.125013",
    journal = "Phys. Rev. D",
    volume = "99",
    number = "12",
    pages = "125013",
    year = "2019"
}

@article{Lemos:2023yiz,
    author = "Lemos, Jos{\'e} P. S. and Zaslavskii, Oleg B.",
    title = "{Black holes and hot shells in the Euclidean path integral approach to quantum gravity}",
    eprint = "2304.06740",
    archivePrefix = "arXiv",
    primaryClass = "hep-th",
    doi = "10.1088/1361-6382/ad0515",
    journal = "Class. Quant. Grav.",
    volume = "40",
    number = "23",
    pages = "235012",
    year = "2023"
}

@article{Boehmer:2007az,
    author = "Boehmer, C. G. and Harko, T.",
    title = "{On Einstein clusters as galactic dark matter halos}",
    eprint = "0705.1756",
    archivePrefix = "arXiv",
    primaryClass = "gr-qc",
    doi = "10.1111/j.1365-2966.2007.11977.x",
    journal = "Mon. Not. Roy. Astron. Soc.",
    volume = "379",
    pages = "393--398",
    year = "2007"
}

@article{Brustein:2021lnr,
    author = "Brustein, Ram and Medved, A. J. M. and Simhon, Tamar",
    title = "{Black holes as frozen stars}",
    eprint = "2109.10017",
    archivePrefix = "arXiv",
    primaryClass = "gr-qc",
    doi = "10.1103/PhysRevD.105.024019",
    journal = "Phys. Rev. D",
    volume = "105",
    number = "2",
    pages = "024019",
    year = "2022"
}

@inproceedings{Bambi:2025wjx,
    author = "Bambi, Cosimo and others",
    title = "{Black hole mimickers: from theory to observation}",
    eprint = "2505.09014",
    archivePrefix = "arXiv",
    primaryClass = "gr-qc",
    month = "5",
    year = "2025"
}

@article{Cunha:2022gde,
    author = "Cunha, Pedro V. P. and Herdeiro, Carlos and Radu, Eugen and Sanchis-Gual, Nicolas",
    title = "{Exotic Compact Objects and the Fate of the Light-Ring Instability}",
    eprint = "2207.13713",
    archivePrefix = "arXiv",
    primaryClass = "gr-qc",
    doi = "10.1103/PhysRevLett.130.061401",
    journal = "Phys. Rev. Lett.",
    volume = "130",
    number = "6",
    pages = "061401",
    year = "2023"
}

@article{Shoshany:2021iuc,
    author = "Shoshany, Barak",
    title = "{OGRe: An Object-Oriented General Relativity Package for Mathematica}",
    eprint = "2109.04193",
    archivePrefix = "arXiv",
    primaryClass = "cs.MS",
    doi = "10.21105/joss.03416",
    journal = "J. Open Source Softw.",
    volume = "6",
    pages = "3416",
    year = "2021"
}

@misc{Headrick:diffgeo,
    author       = "Headrick, Matthew",
    title        = "{diffgeo.m: A package for doing GR-type tensor algebra and calculus}",
    howpublished = "\url{https://sites.google.com/view/matthew-headrick/mathematica}",
    note         = "Unpublished Mathematica package",
}

@article{Riojas:2026mfu,
    author = "Riojas, Marcos and Strassler, Matthew J.",
    title = "{On Black Holes Surrounded by Radiation II: Thermodynamics}",
    eprint = "2606.30797",
    archivePrefix = "arXiv",
    primaryClass = "hep-th",
    month = "6",
    year = "2026"
}

@article{Riojas:2026ytt,
    author = "Riojas, Marcos and Strassler, Matthew J.",
    title = "{On Black Holes Surrounded by Radiation: I. Classical Considerations}",
    eprint = "2606.30794",
    archivePrefix = "arXiv",
    primaryClass = "hep-th",
    month = "6",
    year = "2026"
}

@unpublished{RiojasStrasslerAdS,
  author = {Marcos Riojas and Matthew J. Strassler},
  title = {In Preparation},
  note = {To appear},
  year = {2026}
}

@article{Maeda:2024tsg,
    author = "Maeda, Kei-ichi and Cardoso, Vitor and Wang, Anzhong",
    title = "{Einstein cluster as central spiky distribution of galactic dark matter}",
    eprint = "2410.04175",
    archivePrefix = "arXiv",
    primaryClass = "gr-qc",
    doi = "10.1103/PhysRevD.111.044060",
    journal = "Phys. Rev. D",
    volume = "111",
    number = "4",
    pages = "044060",
    year = "2025"
}

@article{Jusufi:2022jxu,
    author = "Jusufi, Kimet",
    title = "{Black holes surrounded by Einstein clusters as models of dark matter fluid}",
    eprint = "2202.00010",
    archivePrefix = "arXiv",
    primaryClass = "gr-qc",
    doi = "10.1140/epjc/s10052-023-11264-w",
    journal = "Eur. Phys. J. C",
    volume = "83",
    number = "2",
    pages = "103",
    year = "2023"
}

@ARTICLE{1970GReGr...1...19K,
       author = {{Kumar Datta}, Bidyut},
        title = "{Non-static spherically symmetric clusters of particles in general relativity: I}",
      journal = {General Relativity and Gravitation},
         year = 1970,
        month = mar,
       volume = {1},
       number = {1},
        pages = {19-25},
          doi = {10.1007/BF00759199},
       adsurl = {https://ui.adsabs.harvard.edu/abs/1970GReGr...1...19K}
}

@ARTICLE{1971GReGr...2..321B,
       author = {{Bondi}, Hermann},
        title = "{On Datta's spherically symmetric systems in general relativity}",
      journal = {General Relativity and Gravitation},
         year = 1971,
        month = dec,
       volume = {2},
       number = {4},
        pages = {321-329},
          doi = {10.1007/BF00758151},
       adsurl = {https://ui.adsabs.harvard.edu/abs/1971GReGr...2..321B}
}

@article{Joshi:2011zm,
    author = "Joshi, Pankaj S. and Malafarina, Daniele and Narayan, Ramesh",
    title = "{Equilibrium configurations from gravitational collapse}",
    eprint = "1106.5438",
    archivePrefix = "arXiv",
    primaryClass = "gr-qc",
    doi = "10.1088/0264-9381/28/23/235018",
    journal = "Class. Quant. Grav.",
    volume = "28",
    pages = "235018",
    year = "2011"
}

@article{DiFilippo:2024poc,
    author = "Di Filippo, Francesco and Rezzolla, Luciano",
    title = "{Can light-rings self-gravitate?}",
    eprint = "2407.13832",
    archivePrefix = "arXiv",
    primaryClass = "gr-qc",
    doi = "10.1103/PhysRevD.111.L021504",
    journal = "Phys. Rev. D",
    volume = "111",
    number = "2",
    pages = "L021504",
    year = "2025"
}

@article{Hod:2025bsf,
    author = "Hod, Shahar",
    title = "{Spherically symmetric curved spacetimes with a continuum of light rings}",
    eprint = "2504.00087",
    archivePrefix = "arXiv",
    primaryClass = "gr-qc",
    doi = "10.1088/1475-7516/2025/06/006",
    journal = "JCAP",
    volume = "06",
    pages = "006",
    year = "2025"
}

@article{Li:2025wmd,
    author = "Li, Long-Yue and Liu, Xia-Yuan and Cai, Rong-Gen and Gong, Yungui and Zhou, Wenting",
    title = "{Images and photon regions of continuous photon sphere spacetime}",
    eprint = "2508.12002",
    archivePrefix = "arXiv",
    primaryClass = "gr-qc",
    doi = "10.1016/j.dark.2025.102209",
    journal = "Phys. Dark Univ.",
    volume = "51",
    pages = "102209",
    year = "2026"
}

@article{Cunha:2025oeu,
    author = "Cunha, Pedro V. P.",
    title = "{Backreaction of perturbations around a stable light ring}",
    eprint = "2503.00117",
    archivePrefix = "arXiv",
    primaryClass = "gr-qc",
    doi = "10.1088/1475-7516/2025/05/083",
    journal = "JCAP",
    volume = "05",
    pages = "083",
    year = "2025"
}

@article{Zeldovich:1961sbr,
    author = "Zel'dovich, Ya. B.",
    title = "{The Equation of State at Ultrahigh Densities and Its Relativistic Limitations}",
    journal = "Zh. Eksp. Teor. Fiz.",
    volume = "41",
    pages = "1609--1615",
    year = "1961"
}

@article{Virbhadra:2002ju,
    author = "Virbhadra, K. S. and Ellis, G. F. R.",
    title = "{Gravitational lensing by naked singularities}",
    doi = "10.1103/PhysRevD.65.103004",
    journal = "Phys. Rev. D",
    volume = "65",
    pages = "103004",
    year = "2002"
}

@article{Weller:2026jqu,
    author = "Weller, Colin and Li, Dongjun and Wagle, Pratik and Laeuger, Andrew and Chen, Yanbei and Yunes, Nicol{\'a}s",
    title = "{Chiral symmetry and black hole isospectrality}",
    eprint = "2608.21532",
    archivePrefix = "arXiv",
    primaryClass = "gr-qc",
    month = "8",
    year = "2026"
}

@article{Claudel:2000yi,
    author = "Claudel, Clarissa-Marie and Virbhadra, K. S. and Ellis, G. F. R.",
    title = "{The Geometry of photon surfaces}",
    eprint = "gr-qc/0005050",
    archivePrefix = "arXiv",
    doi = "10.1063/1.1308507",
    journal = "J. Math. Phys.",
    volume = "42",
    pages = "818--838",
    year = "2001"
}

@article{Gondolo:1999ef,
    author = "Gondolo, Paolo and Silk, Joseph",
    title = "{Dark matter annihilation at the galactic center}",
    eprint = "astro-ph/9906391",
    archivePrefix = "arXiv",
    reportNumber = "MPI-PHT-99-10, OUAST-99-9",
    doi = "10.1103/PhysRevLett.83.1719",
    journal = "Phys. Rev. Lett.",
    volume = "83",
    pages = "1719--1722",
    year = "1999"
}

@article{Ferrer:2017xwm,
    author = "Ferrer, Francesc and da Rosa, Augusto Medeiros and Will, Clifford M.",
    title = "{Dark matter spikes in the vicinity of Kerr black holes}",
    eprint = "1707.06302",
    archivePrefix = "arXiv",
    primaryClass = "astro-ph.CO",
    doi = "10.1103/PhysRevD.96.083014",
    journal = "Phys. Rev. D",
    volume = "96",
    number = "8",
    pages = "083014",
    year = "2017"
}

@book{Wald:1984rg,
    author = "Wald, Robert M.",
    title = "{General Relativity}",
    doi = "10.7208/chicago/9780226870373.001.0001",
    publisher = "Chicago Univ. Pr.",
    address = "Chicago, USA",
    year = "1984"
}
\end{document}